\documentclass[11pt]{article}
\usepackage{amssymb}

\usepackage{booktabs} 
\usepackage{multirow}
\usepackage[ruled]{algorithm2e} 

\SetAlFnt{\small}
\SetAlCapFnt{\small}
\SetAlCapNameFnt{\small}
\SetAlCapHSkip{0pt}
\IncMargin{-\parindent}

\usepackage[table]{xcolor}

\usepackage{bm}

\newcommand{\defeq}{\coloneqq}

\usepackage{complexity}
\newcommand{\CLS}{\mathsf{CLS}}
\newcommand{\NPP}{\mathsf{NP}}
\let\R\relax

\newcommand{\hvx}{\hat{\vx}}
\newcommand{\hvy}{\hat{\vy}}
\newcommand{\hvz}{\hat{\vz}}

\newcommand{\lineA}{\overline{\mat{A}}}

\usepackage[hidelinks, colorlinks = true, linkcolor=blue, citecolor=green]{hyperref}

\newcommand{\declarecolor}[2]{\definecolor{#1}{RGB}{#2}\expandafter\newcommand\csname #1\endcsname[1]{\textcolor{#1}{##1}}}
\declarecolor{White}{255, 255, 255}
\declarecolor{Black}{0, 0, 0}
\declarecolor{Maroon}{128, 0, 0}
\declarecolor{Coral}{255, 127, 80}
\declarecolor{Red}{182, 21, 21}
\declarecolor{LimeGreen}{50, 205, 50}
\declarecolor{DarkGreen}{0, 80, 0}
\declarecolor{Purple}{146, 42, 158}
\declarecolor{Navy}{0, 0, 128}
\declarecolor{LightBlue}{84, 101, 202}
\definecolor{mydarkblue}{rgb}{0,0.08,0.45}
\hypersetup{ %
    pdftitle={},
    pdfkeywords={},
    pdfborder=0 0 0,
    pdfpagemode=UseNone,
    colorlinks=true,
    linkcolor=Navy,
    citecolor=DarkGreen,
    filecolor=Purple,
    urlcolor=Black,
}

\usepackage{fullpage}
\usepackage{authblk}

\usepackage{MnSymbol}


\usepackage{multirow,array}
\usepackage{diagbox}


\newcommand{\matA}{\mat{A}}
\newcommand{\vecb}{\vec{b}}









\def\ve{{\bm{e}}}

\def\vw{{\bm{w}}}
\def\vx{{\bm{x}}}
\def\vy{{\bm{y}}}
\def\vz{{\bm{z}}}



\DeclareMathAlphabet{\mathsfit}{\encodingdefault}{\sfdefault}{m}{sl}
\SetMathAlphabet{\mathsfit}{bold}{\encodingdefault}{\sfdefault}{bx}{n}






\newcommand{\calA}{\ensuremath{\mathcal{A}}}

\newcommand{\calC}{\ensuremath{\mathcal{C}}}

\newcommand{\calX}{\ensuremath{\mathcal{X}}}
\newcommand{\calY}{\ensuremath{\mathcal{Y}}}

\newcommand{\vxstar}{\vx^*}
\newcommand{\vystar}{\vy^*}

\newcommand{\Amax}{\mat{A}_{\max}}
\newcommand{\Amin}{\mat{A}_{\min}}

\newcommand{\R}{\mathbb{R}}












\newcommand{\norm}[1]{\left\| #1 \right\|}


\DeclareMathOperator*{\argmin}{argmin}

\usepackage{amsthm}
\usepackage{mathtools}
\usepackage{amsmath}
\usepackage{bbm}
\usepackage{amsfonts}





\usepackage{nicefrac}

\usepackage{tikz}
\usetikzlibrary{positioning}
\usetikzlibrary{shapes,arrows}

\usepackage{subfigure}

\usepackage[%
linewidth=2pt,
linecolor=gray,
middlelinecolor= black,
middlelinewidth=0.4pt,
roundcorner=1pt,
topline = false,
rightline = false,
bottomline = false,
rightmargin=0pt,
skipabove=0pt,
skipbelow=0pt,
leftmargin=0pt,
innerleftmargin=4pt,
innerrightmargin=0pt,
innertopmargin=0pt,
innerbottommargin=0pt,
]{mdframed}

\usepackage[capitalize,noabbrev,nameinlink]{cleveref}

\usepackage{enumitem}

\usepackage[natbib=true,maxcitenames=3,maxnames=3,style=alphabetic]{biblatex}
\usepackage{biblatex2bibitem}

\renewcommand{\vec}[1]{\bm{#1}}
\newcommand{\mat}[1]{\mathbf{#1}}

\newenvironment{nproblem}[1][\unskip]{%
  \medskip
  \begin{mdframed}
  \noindent
  \textbf{\underline{$#1$ Problem.}} \\
  \noindent
}{%
  \end{mdframed}
  \medskip 
}

\newcommand{\gdaFixed}{\textsc{GDAFixedPoint}}
\newcommand{\symgdaFixed}{\textsc{SymGDAFixedPoint}}
\newcommand{\nsymgdaFixed}{\textsc{NonSymGDAFixedPoint}}

\usepackage{thm-restate}

\theoremstyle{plain}
\newtheorem{theorem}{Theorem}[section]
\newtheorem{lemma}[theorem]{Lemma}
\newtheorem{corollary}[theorem]{Corollary}
\newtheorem{proposition}[theorem]{Proposition}

\newtheorem{assumption}[theorem]{Assumption}

\theoremstyle{definition}
\newtheorem{definition}[theorem]{Definition}

\theoremstyle{remark}
\newtheorem{remark}[theorem]{Remark}

\title{The Complexity of Symmetric Equilibria in Min-Max Optimization and Team Zero-Sum Games\thanks{The authors are ordered alphabetically.}}

\author[1]{Ioannis Anagnostides}
\author[2]{Ioannis Panageas}
\author[3]{Tuomas Sandholm}
\author[4]{Jingming Yan}

\affil[1,3]{Carnegie Mellon University}
\affil[2,4]{University of California, Irvine}
\affil[3]{Strategy Robot, Inc.}
\affil[3]{Strategic Machine, Inc.}
\affil[3]{Optimized Markets, Inc.}
\affil[ ]{\texttt{\{ianagnos,sandholm\}}\texttt{@cs.cmu.edu}, \texttt{\{ipanagea,jingmy1\}}\texttt{@uci.edu}}

\bibliography{main}

\begin{document}

\maketitle

\begin{abstract}
    We consider the problem of computing stationary points in min-max optimization, with a particular focus on the special case of computing Nash equilibria in (two-)team zero-sum games.

    We first show that computing $\epsilon$-Nash equilibria in $3$-player \emph{adversarial} team games---wherein a team of $2$ players competes against a \emph{single} adversary---is \textsf{CLS}-complete, resolving the complexity of Nash equilibria in such settings. Our proof proceeds by reducing from \emph{symmetric} $\epsilon$-Nash equilibria in \emph{symmetric}, identical-payoff, two-player games, by suitably leveraging the adversarial player so as to enforce symmetry---without disturbing the structure of the game. In particular, the class of instances we construct comprises solely polymatrix games, thereby also settling a question left open by Hollender, Maystre, and Nagarajan (2024). We also provide some further results concerning equilibrium computation in adversarial team games.

    Moreover, we establish that computing \emph{symmetric} (first-order) equilibria in \emph{symmetric} min-max optimization is \textsf{PPAD}-complete, even for quadratic functions. Building on this reduction, we further show that computing symmetric $\epsilon$-Nash equilibria in symmetric, $6$-player ($3$ vs. $3$) team zero-sum games is also \textsf{PPAD}-complete, even for $\epsilon = \text{poly}(1/n)$. As an immediate corollary, this precludes the existence of symmetric dynamics---which includes many of the algorithms considered in the literature---converging to stationary points. Finally, we prove that computing a \emph{non-symmetric} $\text{poly}(1/n)$-equilibrium in symmetric min-max optimization is \FNP-hard.
\end{abstract}

\pagenumbering{gobble}

\clearpage
\tableofcontents

\clearpage

\pagenumbering{arabic}

\section{Introduction}

We consider the fundamental task of computing local equilibria in constrained min-max optimization problems of the form
\begin{equation}
    \label{eq:minimax}
    \min_{\vx \in \calX} \max_{\vy \in \calY} f(\vx, \vy),
\end{equation}
where $\calX \subseteq \R^{d_x}$ and $\calY \subseteq \R^{d_y}$ are convex and compact constraint sets, and $f : \calX \times \calY \to \R$ is a smooth objective function. Tracing all the way back to Von Neumann's celebrated minimax theorem~\cite{vonNeumann28:Zur} and the inception of game theory, such problems are attracting renewed interest in recent years propelled by a variety of modern machine learning applications, such as generative modeling~\citep{goodfellow2014generative}, reinforcement learning~\citep{daskalakis2020independent,Bai20:Provable,Wei21:Last}, and adversarial robustness~\citep{Madry17:Towards,cohen19:Certified,Bai21:Recent,Carlini19:Evaluating}. Another prominent class of problems encompassed by~\eqref{eq:minimax} concerns computing Nash equilibria in \emph{(two-)team zero-sum games}~\citep{Zhang23:Team,Zhang21:Computing,Basilico17:Team,Stengel97:Team,Carminati23:Hidden,Orzech23:Correlated,Farina18:Ex,Zhang20:Converging,Celli18:Computational,schulman2017duality}, which is a primary focus of this paper.

Perhaps the most natural solution concept---guaranteed to always exist---pertaining to~\eqref{eq:minimax}, when $f$ is nonconvex-nonconcave, is a pair of strategies $(\vx^*, \vy^*)$ such that both players (approximately) satisfy the associated \emph{first-order} optimality conditions~\citep{Tsaknakis21:Finding,Jordan23:First,Ostrovskii21:Efficient,Nouiehed19:Solving}, as formalized in the definition below.
\begin{definition}
    \label{def:FONE}
    A point $(\vxstar, \vystar) \in \calX \times \calY$ is an \emph{$\epsilon$-first-order Nash equilibrium} of~\eqref{eq:minimax} if
    \begin{equation*}
        \langle \vx - \vxstar, \nabla_{\vx} f(\vxstar, \vystar) \rangle \geq -\epsilon \quad \text{and} \quad \langle \vy - \vystar, \nabla_{\vy} f(\vxstar, \vystar) \rangle \leq  \epsilon \quad \forall (\vx, \vy) \in \calX \times \calY.
    \end{equation*}
\end{definition}
\Cref{def:FONE} can be equivalently recast as a variational inequality (VI) problem: if $\vz \defeq (\vx, \vy)$ and $F : \vz \mapsto F(\vz) \defeq (\nabla_{\vx} f(\vx, \vy), -\nabla_{\vy} f(\vx, \vy))$, we are searching for a point $\vz^* \in \mathcal{Z} \defeq \calX \times \calY$ such that $\langle \vz - \vz^*, F(\vz^*) \rangle \geq - 2 \epsilon$ for all $\vz \in \mathcal{Z}$. Yet another equivalent definition is instead based on approximate \emph{fixed points} of gradient descent/ascent (GDA); namely, \Cref{def:FONE} amounts to bounding the gradient mappings
\begin{equation}
\label{eq:fpgda}
\tag{Fixed points of GDA} 
\| \vx^* - \Pi_{\calX} ( \vx^* - \nabla_{\vx} f(\vx^*, \vy^*) ) \|\leq \epsilon', \| \vy^* - \Pi_{\calY} ( \vy^* + \nabla_{\vy} f(\vx^*, \vy^*) ) \|  \leq \epsilon'
\end{equation} 
for some approximation parameter $\epsilon' > 0$ that is (polynomially) dependent on $\epsilon > 0$, where $\|\cdot\|$ is the (Euclidean) $\ell_2$ norm and $\Pi(\cdot)$ is the projection operator. Other definitions that differentiate between the order of play between players---based on the notion of a Stackelberg equilibrium---have also been considered in the literature~\citep{Jin20:What}.

The complexity of min-max optimization is well-understood in certain special cases, such as when $f$ is convex-concave (\emph{e.g.},~\citet{Korpelevich76:Extragradient,MertikopoulosLZ19,Cai22:Finite,Choudhury23:Single,Gorbunov22:Last}, and references therein), or more broadly, nonconvex-concave~\citep{Lin20:Gradient,Xu23:Unified,Stochastic20:Luo}. However, the complexity of general min-max optimization problems, when the objective function $f$ is nonconvex-nonconcave, has remained wide open despite intense efforts in recent years. \citet{DSZ21} made progress by establishing certain hardness results targeting the more challenging setting in which there is a \emph{joint} (that is, coupled) set of constraints. In fact, it turns out that their lower bounds apply even for linear-nonconcave objective functions (\emph{cf.}~\citet{Bernasconi24:Role}), showing that their hardness result is driven by the presence of joint constraints---indeed, under uncoupled constraints, many efficient algorithms attaining~\Cref{def:FONE} (for linear-nonconcave problems) have been documented in the literature. In the context of min-max optimization, the most well-studied setting posits that players have independent constraints; this is the primary focus of our paper.

\subsection{Our results}

We establish new complexity lower bounds in min-max optimization for computing equilibria in the sense of~\Cref{def:FONE}; our main results are gathered in~\Cref{tab:results}.

\begin{table}[h!]
\centering
\caption{The main results of this paper. NE stands for Nash equilibrium and FONE for first-order Nash equilibrium (\Cref{def:FONE}). We also abbreviate symmetric to ``sym.'' (second column).}
\footnotesize 
\renewcommand{\arraystretch}{1.2} 
\begin{tabular}{lccc}
\toprule
\textbf{Class of problems}  & \textbf{Eq. concept} & \textbf{Complexity} & \textbf{Even for} \\
\midrule
\rowcolor{gray!20} Adversarial team games & $\epsilon$-NE & $\CLS$-complete (\Cref{theorem:actual}) & $3$-player ($2$ vs. $1$), polymatrix \\ \multirow{2}{*}{Symmetric min-max} & sym. $\epsilon$-FONE & \PPAD-complete (\Cref{theorem:team-hard}) & polymatrix, team $0$-sum, $\epsilon = \nicefrac{1}{n^c}$ \\
& non-sym. $\epsilon$-FONE & \NP-hard (\Cref{theorem:non-symmetric}) & quadratics, $\epsilon = \nicefrac{1}{n^c}$ \\
\bottomrule
\end{tabular}
\label{tab:results}
\end{table}

\paragraph{Adversarial team games.} We first examine an important special case of~\eqref{eq:minimax}: \emph{adversarial team games}~\citep{Stengel97:Team}. Here, a team of $n$ players with identical interests is competing against a single adversarial player. (In such settings, \Cref{def:FONE} captures precisely the Nash equilibria of the game.) The computational complexity of this problem was placed by~\citet{Anagnostides23:Algorithms} in the complexity class~$\CLS$---which stands for continuous local search~\citep{Daskalakis11:Continuous}. Further, by virtue of a result of~\citet{Babichenko21:Settling}, computing Nash equilibria in adversarial team games when $n \gg 1$ is~$\CLS$-complete. In the context of this prior work, an important question left open by~\citet{Anagnostides23:Algorithms} concerns the case where $n$ is a small constant, a regime not captured by the hardness result of~\citet{Babichenko21:Settling} pertaining to identical-interest games---for, in such games, one can simply identify the strategy leading to the highest payoff, which is tractable when $n$ is small.

We show that even when $n = 2$, computing an $\epsilon$-Nash equilibrium in adversarial team games is $\CLS$-complete. (Of course, the case where $n=1$ amounts to two-player zero-sum games, well known to be in $\P$.)

\begin{restatable}{theorem}{maincls}
    \label{theorem:cls-completeness}
    Computing an $\epsilon$-Nash equilibrium in $3$-player (that is, $2$ vs. $1$) adversarial team games is $\CLS$-complete.
\end{restatable}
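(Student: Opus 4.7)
Membership in $\CLS$ is inherited from the general $\CLS$ upper bound for $n$-player adversarial team games established by Anagnostides et al.\ (2023). It thus suffices to establish $\CLS$-hardness in the $3$-player case.

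The plan is to reduce from the problem of computing an approximate \emph{symmetric} Nash equilibrium in a \emph{symmetric}, identical-payoff, two-player game, i.e., with a symmetric payoff matrix $\mat{A}=\mat{A}^{\top}$. This problem is $\CLS$-hard, being equivalent to finding an approximate KKT point of the quadratic $\vx^{\top} \mat{A}\, \vx$ over the probability simplex. Given such an instance $\mat{A}\in\R^{n\times n}$, we build a $2$-vs-$1$ polymatrix adversarial team game as follows: the two team players both have action set $[n]$, and interact via pairwise payoffs derived from (a suitably scaled) $\mat{A}$ in such a way that whenever they coincide on a strategy $\vx$ the team's collective payoff reproduces $\vx^{\top} \mat{A}\, \vx$. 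The adversary is equipped with a dedicated action set together with pairwise payoff gadgets against each team player whose sole purpose is \emph{to exploit any asymmetry} between the two team players: concretely, the adversary's best-response value against a team profile $(\vx_1,\vx_2)$ is designed to be proportional (after scaling) to a norm of $\vx_1-\vx_2$, and to vanish precisely when $\vx_1=\vx_2$. A natural implementation uses $\pm$-labeled copies of $[n]$ on the adversary's side so that matching team player $1$ on a ``$+$''-label and mismatching team player $2$ there (and vice versa for ``$-$'') yields adversary value $\max_j \lvert x_{1,j}-x_{2,j}\rvert$.

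The correctness argument then has two steps. First, since the adversary's gains come at the team's expense, at any $\epsilon$-Nash equilibrium we must have $\|\vx_1-\vx_2\|$ small (polynomial in $\epsilon$ and the chosen scaling constants), for otherwise the team could profitably deviate to reduce the adversary's extracted value. Second, once $\vx_1\approx \vx_2 =: \vx$, the individual best-response conditions on each team player reduce---modulo controlled error terms---to the statement that $\vx$ is an approximate maximizer of $\vy\mapsto \vy^{\top} \mat{A}\, \vx$ over the simplex, which is exactly the approximate symmetric Nash condition on $\mat{A}$. The chief difficulty is to calibrate the scaling so that the adversary's symmetry-enforcing payoffs are strong enough to force $\|\vx_1-\vx_2\|$ below the desired precision, and yet the adversary's $\epsilon$-slack in their own best response does not perturb the team players' incentives enough to wash out the structure of $\mat{A}$ (e.g., by placing residual mass on asymmetry actions when the team is already near-symmetric). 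Tracking how the approximation errors propagate through both steps, and explicitly verifying that the pairwise decomposition respects the polymatrix format, will complete the reduction and simultaneously settle the open question of Hollender, Maystre, and Nagarajan (2024).
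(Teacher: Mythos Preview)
Your proposal is correct and follows essentially the same approach as the paper: reduce from symmetric Nash equilibria in symmetric identical-payoff two-player games (the $\CLS$-hard problem of Ghosh et al.), equip the adversary with $\pm$-labeled coordinate actions whose payoff scales with $\frac{1}{\epsilon}(x_{1,i}-x_{2,i})$ so that any $\epsilon^2$-NE forces $\|\vx_1-\vx_2\|_\infty=O(\epsilon)$, and then read off the approximate symmetric-NE condition on $\mat{A}$ from either team player's best-response inequality. The calibration issue you correctly flag---that residual adversary mass on the $\pm$ actions could swamp the $\mat{A}$-structure---is handled in the paper by giving the adversary one additional ``neutral'' action $a_{2n+1}$ with fixed payoff $|\Amin|$, which strictly dominates every $\pm$ action once the team is near-symmetric and hence absorbs all but $O(n\epsilon)$ of the adversary's mass.
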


Coupled with earlier results, \Cref{theorem:cls-completeness} completely characterizes the complexity landscape for computing Nash equilibria in adversarial team games.

Our proof is based on a recent hardness result of~\citet{ghosh2024complexitysymmetricbimatrixgames} (\emph{cf.}~\citet{Tewolde25:Computing}), who proved that computing a \emph{symmetric} $\epsilon$-Nash equilibrium in a symmetric two-player game with identical payoffs is~$\CLS$-complete. The key idea in our reduction is that one can leverage the adversarial player so as to enforce symmetry between the team players, without affecting the equilibria of the original game; the basic gadget underpinning this reduction is explained and analyzed in~\Cref{sec:cls}.

Incidentally, our~$\CLS$-hardness reduction hinges on a \emph{polymatrix} adversarial team game, thereby addressing another open question left recently by~\citet{Hollender24:Complexity}.

\begin{theorem}
    \label{theorem:actual}
    \Cref{theorem:cls-completeness} holds even when one restricts to polymatrix, $3$-player adversarial team games.
\end{theorem}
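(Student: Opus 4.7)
The plan is to establish \Cref{theorem:actual} by inspecting the reduction that proves \Cref{theorem:cls-completeness} and verifying that the game it produces is already polymatrix, with no modification required. Recall that the reduction starts from a symmetric, identical-payoff two-player game with matrix $\matA = \matA^\top \in \R^{n \times n}$, for which computing a symmetric $\epsilon$-Nash equilibrium is $\CLS$-hard by~\citet{ghosh2024complexitysymmetricbimatrixgames}, and builds a $3$-player adversarial team game with team strategies $\vx_1, \vx_2 \in \Delta^n$ and an adversary strategy $\vy$.

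I would first isolate the two ingredients of the team's payoff: a \emph{game term} $\vx_1^\top \matA \vx_2$ that preserves the structure of the original identical-interest game between the two team members, and a \emph{symmetry-enforcing term} that the adversary uses to penalize any deviation of $\vx_1$ from $\vx_2$. Both must be pairwise bilinear in order to fit the polymatrix template. The game term is literally a single bimatrix between the two team members, so the remaining task is to write the symmetry gadget as a sum of bimatrix interactions, one between the adversary and each team player.

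Next I would exhibit the gadget explicitly. Let the adversary play $\vy \in \Delta^n$ (or $\vy = (\vy^+, \vy^-) \in \Delta^n \times \Delta^n$ for the symmetrized variant) and contribute to its own payoff the bilinear form $\lambda \langle \vy, \vx_1 - \vx_2 \rangle = \lambda \langle \vy, \vx_1\rangle - \lambda \langle \vy, \vx_2\rangle$, with the negative being added to the team's payoff. The decisive observation is that this expression splits cleanly into two independent bilinear forms: one in $(\vy, \vx_1)$ and one in $(\vy, \vx_2)$. Thus it is realized by two bimatrices between the adversary and each team player, respectively. Combined with the bimatrix $\matA$ between the two team players, every player's utility becomes a sum of pairwise contributions indexed by the other players, which is exactly the polymatrix condition.

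The main subtle point---and the only step that is not purely syntactic---is to confirm that this pairwise decomposition still causes the adversary's best response to enforce $\vx_1 \approx \vx_2$ at an $\epsilon$-Nash equilibrium, with a coefficient $\lambda$ large enough to make the symmetry constraint tight but not so large as to drown out the $\vx_1^\top \matA \vx_2$ term and destroy the correspondence with the source instance. That calibration is precisely what is already carried out inside the proof of \Cref{theorem:cls-completeness}; the work required here is only to verify that the objects it manipulates (the adversary's bilinear penalty and the team players' payoffs) have the above pairwise form, which they do by construction. Hence the reduction outputs polymatrix instances, yielding~\Cref{theorem:actual}.
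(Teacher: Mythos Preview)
Your proposal is correct and takes essentially the same approach as the paper: there is no separate argument for \Cref{theorem:actual}, only the observation that the utility~\eqref{eq:util-atg} already decomposes as $\langle \vx, \mat{A} \vy \rangle$ plus terms of the form $\langle \vz, \cdot\,\vx\rangle$ and $\langle \vz, \cdot\,\vy\rangle$ (together with a linear term in $\vz$), so the constructed game is polymatrix by inspection. The only cosmetic discrepancy is that the paper's adversary lives in $\Delta^{2n+1}$ (with a dummy action $a_{2n+1}$) rather than $\Delta^n$ or $\Delta^n\times\Delta^n$, but this does not affect the polymatrix structure you identify.
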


We complement the above hardness result by showing that, even when $n = 2$, there is an adversarial team game with a unique Nash equilibrium supported on \emph{irrational numbers} (\Cref{theorem:irrational})---unlike~\Cref{theorem:cls-completeness}, \Cref{theorem:irrational} is based on a non-polymatrix game, for otherwise rational Nash equilibria are guaranteed to exist. This strengthens an observation of~\citet{Stengel97:Team}, who noted that a \emph{team-maxmin equilibrium}---a stronger notion than Nash equilibria---can be supported on irrational numbers. \Cref{theorem:irrational} brings to the fore the complexity of deciding whether an adversarial team game admits a unique Nash equilibrium; \Cref{theorem:uniqueATG} addresses that question for the version of the problem accounting even for $\poly(1/n)$-Nash equilibria.

\paragraph{Symmetric min-max optimization} As we have seen, symmetry plays a key role in the proof of~\Cref{theorem:cls-completeness}, but that result places no restrictions on whether the equilibrium is symmetric or not---this is indeed the crux of the argument. The next problem we consider concerns computing \emph{symmetric} equilibria in \emph{symmetric} min-max optimization problems, in the following natural sense.

\begin{definition}[Symmetric min-max optimization]
    \label{def:symmetric}
  A function $f:\calX \times \calY \to \R$ is called \emph{antisymmetric} if $\calX = \calY$ and 
  \begin{equation*}
      f(\vx,\vy) = -f(\vy,\vx) \quad \forall (\vx, \vy) \in \calX \times \calY.
  \end{equation*}
  Furthermore, a point $(\vx, \vy) \in \calX \times \calY$ is called \emph{symmetric} if $\vx = \vy$. The associated min-max optimization problem is called symmetric if the underlying function $f$ is \emph{antisymmetric}.\footnote{The nomenclature of this definition is consistent with the usual terminology in the context of (two-player) zero-sum games: a symmetric zero-sum game is one in which the the underlying game matrix $\mat{A}$ is antisymmetric (that is, skew-symmetric), so that $\langle \vx, \mat{A} \vy \rangle = - \langle \vy, \mat{A} \vx \rangle$ for all $(\vx, \vy)$. Symmetric zero-sum games are ubiquitous in the literature and in practical scenarios alike.}
\end{definition}
The study of symmetric equilibria has a long history in the development of game theory, propelled by Nash's pathbreaking PhD thesis~\citep{Nash50:Non} (\emph{cf.}~\citet{Gale51:Symmetric}), and has remained a popular research topic ever since~\citep{Tewolde25:Computing,Emmons22:For,Garg18:Existential,ghosh2024complexitysymmetricbimatrixgames,Mehta14:Constant}. It is not hard to see that symmetric min-max optimization problems, in the sense of~\Cref{def:symmetric}, always admit \emph{symmetric} first-order Nash equilibria (\Cref{lem:exists}). What is more, we show that computing such a symmetric equilibrium is in the complexity class \PPAD~\citep{Papadimitriou94:Complexity}; this is based on an argument of~\citet{Etessami10:On}, and complements~\citet{DSZ21}, who proved that the problem of computing approximate fixed points of gradient descent/ascent---which they refer to as \textsc{GDAFixedPoint}---lies in \PPAD. In a celebrated series of work, it was shown that $\PPAD$ captures the complexity of computing Nash equilibria in finite games~\citep{Daskalakis09:The,Chen09:Settling}. In this context, we establish that $\PPAD$ also characterizes the complexity of computing symmetric first-order Nash equilibria in symmetric min-max optimization problems:

\begin{theorem}
    \label{theorem:main}
    Computing a symmetric $\nicefrac{1}{n^c}$-approximate first-order Nash equilibrium in symmetric $n$-dimensional min-max optimization is \PPAD-complete for any constant $c > 0$.
\end{theorem}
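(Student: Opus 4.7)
My plan is to split the argument into a \PPAD{} membership part and a \PPAD-hardness reduction via a symmetrization gadget.

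For membership, the key observation is that antisymmetry of $f$ forces $\nabla_{\vy} f(\vx,\vx) = -\nabla_{\vx} f(\vx,\vx)$ at every diagonal point: differentiate the identity $f(\vx,\vy)=-f(\vy,\vx)$ in $\vy$ and substitute $\vy=\vx$. Consequently, at a symmetric candidate the two inequalities of~\Cref{def:FONE} collapse into the single variational inequality $\langle \vx-\vx^\star,\nabla_{\vx} f(\vx^\star,\vx^\star)\rangle \geq -\epsilon$ for every $\vx\in\calX$, which is exactly the statement that $\vx^\star$ is an approximate fixed point of the continuous self-map $T(\vx) \defeq \Pi_{\calX}\bigl(\vx - \nabla_{\vx} f(\vx,\vx)\bigr)$. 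Since $f$ is given with polynomially bounded smoothness, $T$ is polynomial-time computable and polynomially Lipschitz; approximately solving $T(\vx^\star)=\vx^\star$ is then a \textsc{Brouwer} instance and so lies in \PPAD. Existence of a symmetric FONE is a byproduct via Brouwer's theorem.

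For hardness, I reduce from the problem of computing a $1/n^c$-first-order Nash equilibrium in an arbitrary (non-symmetric) min-max instance $\min_{\vx\in\calX}\max_{\vy\in\calY} g(\vx,\vy)$, which is \PPAD-hard at inverse-polynomial accuracy for natural structured families of $g$ (e.g., quadratic or polymatrix objectives whose associated variational inequality encodes approximate Nash in multi-player games). The reduction rests on a single symmetrization gadget: on $\tilde\calX \defeq \calX\times\calY$, define
\[
    \tilde f\bigl((\vx_1,\vy_1),(\vx_2,\vy_2)\bigr) \defeq g(\vx_1,\vy_2) - g(\vx_2,\vy_1),
\]
which is antisymmetric by inspection, so the induced min-max problem on $\tilde\calX\times\tilde\calX$ falls under~\Cref{def:symmetric}. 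At a diagonal point $\tilde\vx^\star=(\vx^\star,\vy^\star)$ one computes $\nabla_{\tilde\vx}\tilde f(\tilde\vx^\star,\tilde\vx^\star) = (\nabla_{\vx} g(\vx^\star,\vy^\star),-\nabla_{\vy} g(\vx^\star,\vy^\star))$; plugging this into the merged variational inequality characterizing a symmetric $\epsilon$-FONE of $\tilde f$, and exploiting that $\tilde\calX$ is a Cartesian product so that the test point varies independently over $\calX$ and over $\calY$, recovers exactly the two conditions of~\Cref{def:FONE} applied to $g$ at $(\vx^\star,\vy^\star)$. Hence every symmetric $\epsilon$-FONE of $\tilde f$ yields an $\epsilon$-FONE of $g$; since the dimension merely doubles and the smoothness constants of $\tilde f$ are controlled by those of $g$, the $1/n^c$ hardness transfers for every constant $c>0$.

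The main technical obstacle I foresee is pinning down a concrete source class for $g$ whose FONE is already known to be \PPAD-hard at scale $1/n^c$ for \emph{every} constant $c>0$, and whose structural assumptions (polynomial gradient Lipschitzness, bounded range) survive the symmetrization. Because $\tilde f$ is a mere difference of two evaluations of $g$ on rearranged arguments, no structural blow-up occurs and the reduction is efficient, so the real work lies in establishing (or invoking) the inverse-polynomial \PPAD-hardness of general non-symmetric min-max FONE with uncoupled constraints via a generalized-circuits style reduction.
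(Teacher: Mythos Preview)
Your membership argument is correct and essentially identical to the paper's: antisymmetry collapses the two variational inequalities at the diagonal into a single one, and the resulting fixed-point problem for $T(\vx)=\Pi_{\calX}(\vx-\nabla_{\vx}f(\vx,\vx))$ lies in \PPAD{} by the standard Brouwer-in-\PPAD{} argument.

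The hardness argument, however, has a genuine gap. You reduce from the problem of computing an $\epsilon$-FONE of a \emph{general} (non-symmetric) min-max instance $g$ with uncoupled constraints, and you acknowledge that ``the real work lies in establishing \ldots\ the inverse-polynomial \PPAD-hardness of general non-symmetric min-max FONE with uncoupled constraints.'' That source problem is precisely the central open question the paper is working around---it is explicitly flagged as open in both the introduction and the conclusion. No family $g$ with uncoupled product constraints is currently known for which ordinary (non-symmetric) FONE is \PPAD-hard at $1/n^c$ accuracy; the known hardness of~\citet{DSZ21} requires \emph{coupled} constraints, and the examples you gesture at (bilinear, polymatrix, team objectives) either reduce to LP or are themselves the open case. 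Your symmetrization gadget $\tilde f((\vx_1,\vy_1),(\vx_2,\vy_2))=g(\vx_1,\vy_2)-g(\vx_2,\vy_1)$ is correct as a reduction from general FONE to symmetric FONE, but it is vacuous here because the premise is unavailable.

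The paper avoids this trap entirely. Instead of passing through general min-max FONE, it reduces directly from computing a \emph{symmetric} $1/n^c$-Nash equilibrium in a \emph{symmetric two-player} game $(\mat{R},\mat{R}^\top)$---known to be \PPAD-hard by~\Cref{theorem:PPAD_for _symmetric}. Decomposing $\mat{R}=\mat{A}+\mat{C}$ into its symmetric and skew-symmetric parts, the paper writes down the explicit quadratic antisymmetric function
\[
f(\vx,\vy)=\tfrac{1}{2}\langle\vy,\mat{A}\vy\rangle-\tfrac{1}{2}\langle\vx,\mat{A}\vx\rangle+\langle\vy,\mat{C}\vx\rangle
\]
on $\Delta^n\times\Delta^n$, and checks that a symmetric $\epsilon$-FONE $(\vx^*,\vx^*)$ of $f$ satisfies $\langle\vx-\vx^*,(\mat{A}+\mat{C})\vx^*\rangle\leq O(\sqrt{n}\,\epsilon)$, which is exactly the symmetric-NE condition for $(\mat{R},\mat{R}^\top)$. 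The point is that the source problem here (symmetric bimatrix NE) has an established \PPAD-hardness proof, whereas yours does not.
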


Barring major complexity breakthroughs, \Cref{theorem:main} precludes the existence of algorithms with complexity polynomial in the dimension and $1/\epsilon$, where $\epsilon > 0$ measures the precision (per\\~\Cref{def:FONE}), under the symmetry constraint of~\Cref{def:symmetric}. This stands in contrast to (nonconvex) minimization problems, wherein gradient descent converges to stationary points at a rate of $\poly(1/\epsilon)$; even in the regime where $\epsilon = 1/\exp(n)$, computing a stationary point of a smooth function is in~$\CLS$~\citep{Daskalakis11:Continuous}, which is a subclass of \PPAD~\citep{Fearnley23:Complexity}. In fact, our reduction also rules out the existence of polynomial-time algorithms even when $\epsilon = \Theta(1)$ under some well-believed complexity assumptions (\Cref{cor:constant}).

The proof of~\Cref{theorem:main} is elementary, and is based on the \PPAD-hardness of computing \emph{symmetric} Nash equilibria in \emph{symmetric} two-player games. Importantly, our reduction gives an immediate, and significantly simpler, proof (\Cref{thm:simple}) of the \PPAD-hardness result of~\citet{DSZ21}, while being applicable even with respect to quadratic and anti-symmetric functions defined on a product of simplexes. (Independently, \citet{Bernasconi24:Role} also considerably simplified the proof of~\citet{DSZ21}, although our approach here is to a large extent different.) The basic idea of our proof is that one can enforce the symmetry constraint of~\Cref{theorem:main} via a coupled constraint set.

Furthermore, as a byproduct of~\Cref{theorem:main}, it follows that any \emph{symmetric} dynamics---whereby both players follow the same online algorithm, as formalized in~\Cref{def:sym-dynamics}---cannot converge to a first-order Nash equilibrium in polynomial time, subject to $\PPAD \neq \P$ (\Cref{theorem:sym-dyn}). This already captures many natural dynamics for which prior papers in the literature (\emph{e.g.},~\citet{kalogiannis2021teamwork}) have painstakingly shown lack of convergence; \Cref{theorem:sym-dyn} provides a complexity-theoretic justification for such prior results, while precluding at the same time a much broader family of algorithms.

\paragraph{The complexity of non-symmetric equilibria} Remaining on symmetric min-max optimization, one natural question arising from~\Cref{theorem:main} concerns the complexity of \emph{non-symmetric} equilibria---defined as having distance at least $\delta > 0$. Unlike their symmetric counterparts, non-symmetric first-order Nash equilibria are not guaranteed to exist. In fact, we establish the following result.

\begin{restatable}{theorem}{nonsymmetric}
    \label{theorem:non-symmetric}
    For a symmetric min-max optimization problem, constants $c_1, c_2 > 0$, and $\epsilon = n^{-c_1}$, it is \NP-hard to distinguish between the following two cases:
    \begin{itemize}[noitemsep,topsep=0pt]
        \item any $\epsilon$-first-order Nash equilibrium $(\vx^*, \vy^*)$ satisfies $ \|\vx^* - \vy^* \| \leq n^{-c_2}$, and
        \item there is an $\epsilon$-first-order Nash equilibrium $(\vx^*, \vy^*)$ such that $ \|\vx^* - \vy^*\| \geq \Omega(1)$.
    \end{itemize}
\end{restatable}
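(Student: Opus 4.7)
The plan is to exploit a convenient structural decomposition of the objective: set $\calY = \calX$ and consider
\[
f(\vx, \vy) \defeq g(\vx) - g(\vy)
\]
for a smooth potential $g : \calX \to \R$ on a polytope $\calX \subset \R^n$ to be chosen by the reduction. Since $f(\vy, \vx) = -f(\vx, \vy)$, the resulting min-max problem is symmetric in the sense of \Cref{def:symmetric}. Furthermore, $\nabla_{\vx} f(\vx, \vy) = \nabla g(\vx)$ and $\nabla_{\vy} f(\vx, \vy) = -\nabla g(\vy)$, so the two conditions in \Cref{def:FONE} decouple completely: the pair $(\vx^*, \vy^*)$ is an $\epsilon$-first-order Nash equilibrium of $f$ over $\calX \times \calX$ if and only if both $\vx^*$ and $\vy^*$ are $\epsilon$-first-order stationary points of $\min_{\vx \in \calX} g(\vx)$. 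Consequently, asking whether a non-symmetric $\epsilon$-equilibrium exists becomes equivalent to asking whether the $\epsilon$-stationary set of $g$ has diameter $\Omega(1)$.

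The next step is to prove \NP-hardness of the following gap problem via a Karp reduction from \textsc{3-SAT}: distinguish between (i) every $\epsilon$-stationary point of $g$ on $\calX$ lies in a ball of radius $n^{-c_2}$ around a distinguished point $\vx_0$, versus (ii) there exist two $\epsilon$-stationary points at distance $\Omega(1)$. Given a formula $\phi$ with $n$ variables and $m$ clauses, the reduction produces a polytope $\calX_\phi \subseteq \R^{\text{poly}(n,m)}$ containing a canonical point $\vx_0$ that is always an $\epsilon$-stationary point, together with candidate ``assignment points'' $\{\vx_a\}_{a \in \{0,1\}^n}$ at $\Omega(1)$ distance from $\vx_0$. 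The low-degree polynomial potential $g_\phi$ is calibrated so that $\vx_a$ is an $\epsilon$-stationary point if and only if $a$ satisfies $\phi$: each unsatisfied clause at $\vx_a$ contributes a tangential gradient component of magnitude $\omega(\epsilon)$.

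The hard part will be the quantitative gradient accounting needed to rule out \emph{spurious} $\epsilon$-stationary points in the unsatisfiable case, since $g_\phi$ has no reason to be well-behaved outside the pre-identified candidate locations. The plan is to handle this through two devices: first, a product-of-simplices structure for $\calX_\phi$---one component per variable and per clause, each of very small diameter---so that every point of $\calX_\phi$ is combinatorially close to some candidate $\vx_a$; second, a strictly convex quadratic attractor added to $g_\phi$, with curvature calibrated so that its gradient dominates the clause-satisfaction penalty unless the penalty itself vanishes, thereby forcing every stationary point not corresponding to a satisfying assignment to lie within an $n^{-c_2}$-ball of $\vx_0$. Combined, these guarantees yield the dichotomy---in the satisfiable case an assignment point $\vx_a$ provides a second $\Omega(1)$-far equilibrium, while in the unsatisfiable case the only $\epsilon$-stationary region is the $n^{-c_2}$-ball around $\vx_0$---completing the reduction.
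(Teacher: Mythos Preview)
Your separable ansatz $f(\vx,\vy)=g(\vx)-g(\vy)$ is exactly what the paper uses: it takes $g(\vx)=-\vx^\top\mat{B}\vx$ for a symmetric matrix $\mat{B}$, so that $\epsilon$-first-order equilibria of $f$ correspond to pairs of symmetric $\epsilon$-Nash equilibria of the identical-interest game $(\mat{B},\mat{B})$. Where you diverge is in the hardness source for the resulting ``does $g$ admit two $\Omega(1)$-far $\epsilon$-stationary points?'' problem. The paper does not build a bespoke \textsc{SAT} gadget; it reduces from \textsc{Clique} via a quantitative refinement of the McLennan--Tourky construction (\Cref{theorem:symmetric-new} and \Cref{lemma:well_supported_nash_value}), showing that for a clique-derived $\mat{B}$ every symmetric $\epsilon$-well-supported Nash equilibrium with $\epsilon=\poly(1/n)$ is $\poly(1/n)$-close to one of finitely many explicit exact equilibria, and multiplicity occurs iff the graph has a $k$-clique. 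The game-theoretic route buys a ready-made structural characterization of the stationary set and sidesteps having to engineer a potential from scratch.

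Your direct \textsc{SAT} plan is reasonable in spirit, but the ``strictly convex quadratic attractor'' device, as you describe it, is internally inconsistent and this is where the gap lies. If the attractor is strictly convex and centered at $\vx_0$, its gradient is nonzero at every $\vx_a\neq\vx_0$; hence when $a$ is satisfying and the clause penalty vanishes, the attractor gradient alone prevents $\vx_a$ from being $\epsilon$-stationary, breaking completeness. Conversely, if you weaken the attractor so that it is negligible at the $\vx_a$'s (which sit $\Omega(1)$ away from $\vx_0$), it is far too weak to ``dominate the clause-satisfaction penalty'' in the intermediate region, and you lose the soundness control over spurious stationary points. This is precisely the quantitative difficulty the paper singles out---with $\epsilon=\poly(1/n)$ one cannot simply appeal to proximity to exact equilibria---and it is the reason the paper invests in \Cref{lemma:well_supported_nash_value}. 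For your route to go through you would need to replace the single convex attractor by something with multiple basins (e.g., a per-variable double-well term so that each coordinate has two local minima at $\Omega(1)$ separation, with clause couplings that perturb only tangentially at integral points), and then carry out the spurious-stationary-point analysis at the $\poly(1/n)$ scale; none of that is yet in the proposal.
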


\noindent The main technical piece, which is also the basis for~\Cref{theorem:uniqueATG} discussed earlier, is~\Cref{theorem:symmetric-new}, which concerns symmetric, identical-interest, two-player games. It significantly refines the hardness result of~\citet{MCLENNAN2010683} by accounting even for $\poly(1/n)$-Nash equilibria.

\paragraph{Team zero-sum games} Finally, building on the reduction of~\Cref{theorem:main} coupled with the gadget behind~\Cref{theorem:cls-completeness}, we establish similar complexity results for team zero-sum games, which generalize adversarial team games by allowing the presence of multiple adversaries. 
In particular, a \emph{symmetric} two-team zero-sum game and a \emph{symmetric} equilibrium thereof are in accordance with~\Cref{def:symmetric}---no symmetry constraints are imposed within the same team, but only across teams. We obtain a result significantly refining~\Cref{theorem:main}.

\begin{restatable}{theorem}{teamhard}
    \label{theorem:team-hard}
    Computing a symmetric $\nicefrac{1}{n^c}$-Nash equilibrium in symmetric, $6$-player ($3$ vs. $3$) team zero-sum polymatrix games is \PPAD-complete for some constant $c>0$.
\end{restatable}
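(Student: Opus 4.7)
The plan is to combine the reduction behind \Cref{theorem:main} with the symmetry-enforcing gadget underlying \Cref{theorem:cls-completeness}. Starting from a symmetric min-max instance with quadratic antisymmetric objective $f(\vx,\vy) = \langle \vx, \mat{A}\vy\rangle$ (with $\mat{A}^\top = -\mat{A}$) on a product of simplexes $\Delta \times \Delta$---for which computing a symmetric $1/n^c$-FONE is \PPAD-hard by \Cref{theorem:main}---I would build a $6$-player polymatrix game with team~$A = \{A_1,A_2,A_3\}$ and team~$B = \{B_1,B_2,B_3\}$, each player picking a mixed strategy in the same simplex $\Delta$. The polymatrix edges come in two flavors: \emph{game edges} that place the skew-symmetric bimatrix $\mat{A}$ on cross-team pairs such as $(A_1,B_1)$ and $(A_2,B_2)$, so that the corresponding contribution to the inter-team payoff reproduces the bilinear form $\langle \vx, \mat{A}\vy\rangle$; and \emph{gadget edges} implementing the polymatrix variant of the \Cref{theorem:cls-completeness} construction, where $B_3$ is wired so that, at any approximate equilibrium, it would strictly prefer to exploit any disagreement between $A_1$ and $A_2$, forcing $A_1 = A_2$ up to $\mathrm{poly}(\epsilon)$ error; symmetrically, $A_3$ forces $B_1 = B_2$. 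I would calibrate the edge weights so that (i) cross-team payoffs sum to zero, yielding the team zero-sum property, and (ii) swapping the $A$-profile with the $B$-profile negates every polymatrix payoff, so the overall game is symmetric in the sense of \Cref{def:symmetric}.

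Given this construction, at a symmetric $\epsilon$-NE one has $(A_1,A_2,A_3) = (B_1,B_2,B_3)$, and the gadget forces $A_1 = A_2$ up to $\mathrm{poly}(\epsilon,n)$ error, so a common strategy $\vx^* := A_1 = A_2 = B_1 = B_2$ is pinned down. The approximate best-response condition for $A_1$ facing $B_1 = \vx^*$ on the game edge then recovers the first-order inequality $\langle \vx - \vx^*, \mat{A}\vx^*\rangle \geq -\epsilon'$ for every $\vx \in \Delta$, i.e., a symmetric $\epsilon'$-FONE of the original instance with $\epsilon' = \mathrm{poly}(\epsilon,n)$. The converse direction---lifting any symmetric FONE $\vx^*$ back to a symmetric $\epsilon$-NE by setting $A_i = B_i = \vx^*$ and letting the adversary players best-respond---is a routine verification. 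Containment in \PPAD{} follows from the containment argument used for \Cref{theorem:main}, since team zero-sum games fit within the min-max framework.

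The principal obstacle will be the simultaneous maintenance by a single polymatrix topology of three delicate properties: global team zero-sumness, global $A \leftrightarrow B$ antisymmetry (\Cref{def:symmetric}), and only a polynomial approximation loss through the gadget. In particular, the gadget in \Cref{theorem:cls-completeness} was analyzed for general adversarial team games, so porting it here requires the polymatrix-friendly variant that settles the open question of~\citet{Hollender24:Complexity}, together with a careful re-analysis of its approximation guarantees in the symmetric team zero-sum setting; this is what will determine the final constant $c>0$ and hence the precision to which the reduction preserves hardness.
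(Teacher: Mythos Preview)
Your proposal has a genuine gap at its very starting point. You take as the hard symmetric min-max instance the \emph{bilinear} antisymmetric objective $f(\vx,\vy)=\langle \vx,\mat{A}\vy\rangle$ with $\mat{A}^\top=-\mat{A}$; but this is nothing more than a symmetric two-player zero-sum game, and a symmetric (first-order) Nash equilibrium of it is computable by linear programming. It is certainly not \PPAD-hard, so nothing downstream in your reduction can be either. The hard instance from \Cref{theorem:main} is the genuinely quadratic function
\[
f(\vx,\vy)=\tfrac12\langle \vy,\mat{A}\vy\rangle-\tfrac12\langle \vx,\mat{A}\vx\rangle+\langle \vy,\mat{C}\vx\rangle,
\]
with $\mat{A}$ \emph{symmetric} and $\mat{C}$ skew-symmetric; the quadratic terms $\langle \vx,\mat{A}\vx\rangle$ are where the hardness lives, and they are precisely what cannot appear directly in a normal-form (multilinear) game.

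This misreading propagates to the role you assign the gadget. You use it to enforce $A_1\approx A_2$ and then place the ``game edge'' $\mat{A}$ on the \emph{cross-team} pair $(A_1,B_1)$; at a symmetric equilibrium with $A_1=B_1=\vx^*$ that edge contributes $\langle \vx^*,\mat{A}\vx^*\rangle=0$ (skew-symmetry), so no useful best-response inequality survives. The paper's construction does something structurally different: the gadget forces $\vx\approx\vy$ \emph{within} a team so that the \emph{within-team} bilinear edge $\langle \vx,\mat{A}\vy\rangle$ (with $\mat{A}$ symmetric) simulates the quadratic term $\langle \vx,\mat{A}\vx\rangle$; the cross-team edge carries only the skew-symmetric bilinear piece $\langle \vx,\mat{C}\hat{\vx}\rangle$; and the enforcer $\hat{\vz}$ sits on the \emph{opposing} team, policing the agreement of $\vx$ and $\vy$. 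Concretely, the utility is
\[
u=\langle \vx,\mat{A}\vy\rangle-\langle \hat{\vx},\mat{A}\hat{\vy}\rangle+\langle \vx,\mat{C}\hat{\vx}\rangle+\delta(\vx,\vy,\hat{\vz})-\delta(\hat{\vx},\hat{\vy},\vz),
\]
and the analysis (\Cref{lemma:close-teams,lemma:smallzteam}) then combines the $\vx$- and $\vy$-deviation inequalities, uses $\vx^*\approx\vy^*$ to collapse $\langle \vx^*,\mat{A}\vy^*\rangle$ to $\langle \vx^*,\mat{A}\vx^*\rangle$, and finally invokes the symmetry $\vx^*=\hat{\vx}^*$ to land on the variational inequality $\langle \vx',(\mat{A}+\tfrac12\mat{C})\vx^*\rangle\ge\langle \vx^*,\mat{A}\vx^*\rangle-O(n|\Amin|\epsilon)$, i.e., an approximate symmetric NE of the symmetric bimatrix game $(\mat{R},\mat{R}^\top)$ with $\mat{A}=-\tfrac12(\mat{R}+\mat{R}^\top)$, $\mat{C}=\mat{R}^\top-\mat{R}$. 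Without the quadratic starting point and the within-team simulation of it, your argument does not get off the ground.
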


Unlike our reduction in~\Cref{theorem:main} that comprises quadratic terms, the crux in team zero-sum games is that one needs to employ solely multilinear terms. The basic idea is to again use the gadget underpinning~\Cref{theorem:cls-completeness}, which enforces symmetry without affecting the equilibria of the game, thereby (approximately) reproducing the objective function that establishes~\Cref{theorem:main}.

It is interesting to note that the class of polymatrix games we construct to prove~\Cref{theorem:team-hard} belongs to a certain family introduced by~\citet{Cai11:Minmax}: one can partition the players into $2$ groups, so that any pairwise interaction between players of the same group is a coordination game, whereas any pairwise interaction across groups is a zero-sum game. \citet{Cai11:Minmax} showed that computing a Nash equilibrium is \PPAD-hard in the more general case where there are $3$ groups of players. While the complexity of that problem under $2$ groups remains wide open, \Cref{theorem:team-hard} shows \PPAD-hardness for computing \emph{symmetric} Nash equilibria in such games.

Taken together, our results bring us closer to characterizing the complexity of computing equilibria in min-max optimization.

\subsection{Further related work}

Adversarial team games have been the subject of much research tracing back to the influential work of~\citet{Stengel97:Team}, who introduced the concept of a \emph{team maxmin equilibrium (TME)}; a TME can be viewed as the best Nash equilibrium for the team. Notwithstanding its intrinsic appeal, it turns out that computing a TME is \FNP-hard~\citep{Borgs10:The}. Indeed, unlike two-player zero-sum games, team zero-sum games generally exhibit a \emph{duality gap}---characterized in the work of~\citet{schulman2017duality}.

This realization has shifted the focus of contemporary research to exploring more permissive solution concepts. One popular such relaxation is \emph{TMECor}, which enables team players to \emph{ex ante} correlate their strategies~\citep{Zhang23:Team,Zhang21:Computing,Basilico17:Team,Carminati22:A,Farina18:Ex,Zhang20:Converging,Celli18:Computational}. Yet, in the context of extensive-form games, computing a TMECor remains intractable; \citet{Zhang23:Team} provided an exact characterization of its complexity. Team zero-sum games can be thought of as two-player zero-sum games but with \emph{imperfect recall}, and many natural problems immediately become hard without perfect recall (\emph{e.g.}, \citet{Tewolde23:Computational}). Parameterized algorithms have been developed for computing a TMECor based on some natural measure of shared information~\citep{Zhang23:Team,Carminati22:A}. Beyond adversarial team games, \citet{Carminati23:Hidden} recently explored \emph{hidden-role} games, wherein there is uncertainty regarding which players belong in the same team, a feature that often manifests itself in popular recreational games---and used certain cryptographic primitives to solve them.

In contrast, this paper focuses on the usual Nash equilibrium concept, being thereby orthogonal to the above line of work. One drawback of Nash equilibria in adversarial team games is that the (worst-case) value of the team can be significantly smaller compared to TME~\citep{Basilico17:Team}. On the other hand, \citet{Anagnostides23:Algorithms} showed that $\epsilon$-Nash equilibria in adversarial team games admit an $\FPTAS$, which stands in stark contrast to TME, and indeed, Nash equilibria in general games~\citep{Daskalakis09:The,Chen09:Settling}. This was further strengthened by~\citet{kalogiannis2022efficiently,Kalogiannis24:Learning} for computing $\epsilon$-Nash equilibria in adversarial team \emph{Markov} games---the natural generalization to Markov (aka. stochastic) games.

Related to~\Cref{def:FONE} is the natural notion of a \emph{local} min-max equilibrium~\citep{DP18, DSZ21}. It is easy to see that any local min-max equilibrium---with respect to a sufficiently large neighborhood of $(\vx^*, \vy^*)$---must satisfy~\Cref{def:FONE}~\citep{DSZ21}. Unlike first-order Nash equilibria, local min-max equilibria are not guaranteed to exist.

Finally, another related work we should mention is \cite{mehta2015settling}, where it was shown that in two-player symmetric games, deciding whether a non-symmetric Nash equilibrium exists or not is \NP-hard. 

\section{Preliminaries}

\paragraph{Notation} We use boldface lowercase letters, such as $\vx, \vy, \vz$, to represent vectors, and boldface capital letters, such as $\mat{A}, \mat{C}$, for matrices. We denote by $x_i$ the $i$th coordinate of a vector $\vx \in \R^n$. We use the shorthand notation $[n] \defeq \{1, 2, \dots, n\}$. $\Delta^n \defeq \{ \vx \in \R^n_{\geq 0} : \sum_{i=1}^n x_i = 1 \}$ is the probability simplex on $\R^n$. For $i \in [n]$, $\vec{e}_i \in \Delta^n$ is the $i$th unit vector. $\langle \cdot, \cdot \rangle$ denotes the inner product. For a vector $\vx \in \R^n$, $\|\vx\|_2 = \sqrt{\langle \vx, \vx \rangle}$ is its Euclidean norm. For $m \leq n$, $\vx_{[1 \cdots m]} \in \R^m$ is the vector containing the first $m$ coordinates of $\vx$. We sometimes use the $O(\cdot), \Theta(\cdot), \Omega(\cdot)$ notation to suppress absolute constants. We say that a continuously differentiable function $f$ is \emph{$L$-smooth} if its gradient is $L$-Lipschitz continuous with respect to $\|\cdot\|_2$; that is, $\| \nabla f(\vx) - \nabla f(\vx')  \|_2 \leq L \| \vx - \vx' \|_2$ for all $\vx, \vx'$ in the domain.

\paragraph{Two-player games} In a two-player game, represented in normal-form game, each player has a finite set, let $[n]$, of actions. Under a pair of actions $(i, j) \in [n] \times [n]$, the utility of the \emph{row} player is given by $\mat{R}_{i, j} $, where $\mat{R} \in \mathbb{Q}^{n \times n}$ is the payoff matrix of the row player. Further, we let $\mat{C} \in \mathbb{Q}^{n \times n}$ be the payoff matrix of the column player. Players are allowed to randomize by selecting mixed strategies---points in $\Delta^n$. Under a pair of mixed strategies $(\vx, \vy) \in \Delta^n \times \Delta^n$, the expected utility of the players is given by $\langle \vx, \mat{R} \vy \rangle$ and $\langle \vx, \mat{C} \vy \rangle$, respectively. The canonical solution concept in such games is the \emph{Nash equilibrium}~\citep{nash1951non}, which is recalled below.

\begin{definition}
    \label{def:NE}
    A pair of strategies $(\vx^*, \vy^*)$ is an \emph{$\epsilon$-Nash equilibrium} of $(\mat{R}, \mat{C})$ if
    \[
        \langle \vx^*, \mat{R} \vy^* \rangle \geq \langle \vx, \mat{R} \vy^* \rangle - \epsilon \quad \text{and} \quad \langle \vx^*, \mat{C} \vy^* \rangle \geq \langle \vx^*, \mat{C} \vy \rangle - \epsilon \quad \forall (\vx, \vy) \in \Delta^n \times \Delta^n.
    \]
\end{definition}

\paragraph{Symmetric two-player games} One of our reductions is based on \emph{symmetric} two-player games, meaning that $\mat{R} = \mat{C}^\top$. A basic fact, which goes back to~\citet{nash1951non}, is that any symmetric game admits a \emph{symmetric} Nash equilibrium $(\vx^*, \vx^*)$. Further, computing a Nash equilibrium in a general game can be readily reduced to computing a symmetric Nash equilibrium in a symmetric game~\citep[Theorem 2.4]{Nisan07:Algorithmic}. In conjunction with the hardness result of~\citet{Chen09:Settling}, we state the following immediate consequence.

\begin{theorem}[\citealp{Chen09:Settling}] \label{theorem:PPAD_for _symmetric}
    Computing a symmetric $\nicefrac{1}{n^c}$-Nash equilibrium in a symmetric two-player game is \PPAD-hard for any constant $c > 0$.
\end{theorem}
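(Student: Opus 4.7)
} The plan is to combine two off-the-shelf ingredients. First, I would invoke the sharp \PPAD-hardness result of~\citet{Chen09:Settling}, which states that computing a $\nicefrac{1}{n^c}$-Nash equilibrium in a general two-player bimatrix game $(\mat{R},\mat{C})$ is \PPAD-hard for every constant $c>0$. Second, I would apply the classical symmetrization construction referenced in~\citet[Theorem~2.4]{Nisan07:Algorithmic} (going back to Gale--Kuhn--Tucker). That construction turns a bimatrix game $(\mat{R},\mat{C}) \in \mathbb{Q}^{n\times n}\times\mathbb{Q}^{n\times n}$ (which we may normalize so that all entries lie in $[0,1]$) into the symmetric $2n\times 2n$ game
\[
\mat{M} \defeq \begin{pmatrix} \mat{0} & \mat{R} \\ \mat{C}^\top & \mat{0} \end{pmatrix},
\]
and states that the symmetric Nash equilibria of $\mat{M}$ encode, after splitting the strategy into its top-$n$ and bottom-$n$ halves and renormalizing, precisely the Nash equilibria of $(\mat{R},\mat{C})$.

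Concretely, I would write a symmetric strategy of $\mat{M}$ as $\vz = (\tfrac{1}{2}\vx, \tfrac{1}{2}\vy)$ with $\vx,\vy\in\Delta^n$, argue that in any (approximate) symmetric equilibrium the two halves must carry roughly equal mass, and then check that the best-response conditions for the first $n$ coordinates of $\mat{M}$ reduce exactly to the best-response condition for the row player in $(\mat{R},\mat{C})$ against $\vy$, and analogously for the bottom $n$ coordinates against $\vx$. The equal-mass property is what usually needs extra care in the approximate regime: one can either enforce it by adding a small diagonal penalty that pushes mass to be split evenly across the two halves, or by the standard observation that the total payoff against any mass imbalance is bounded by the payoff range, so any $\epsilon$-deviation in the equal-mass property yields only an $O(\epsilon)$ loss after renormalizing to probability vectors in $\Delta^n$.

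The key quantitative point is that the reduction is approximation-preserving up to polynomial factors: a symmetric $\epsilon'$-Nash equilibrium of $\mat{M}$ yields, after the above renormalization, an $O(\epsilon')$-Nash equilibrium of $(\mat{R},\mat{C})$, where the implicit constants depend only on the payoff range (which is $O(1)$ after normalization) and on the mass-splitting correction (which contributes at worst a $\poly(n)$ factor). Hence, choosing $\epsilon' = 1/(2n)^{c'}$ for a sufficiently large constant $c'>c$ guarantees an $\epsilon = 1/n^c$-Nash equilibrium of the original bimatrix game; a polynomial-time algorithm for symmetric $\nicefrac{1}{n^c}$-NE in symmetric games would therefore produce $\nicefrac{1}{n^c}$-NE in arbitrary bimatrix games in polynomial time, contradicting~\citet{Chen09:Settling}.

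The main (and essentially only) obstacle is the bookkeeping of the approximation parameters across the symmetrization: verifying that the equal-mass property of the two halves holds up to $\poly(\epsilon',n)$ at an $\epsilon'$-symmetric equilibrium, and that the per-action deviation inequalities pass through the split cleanly. Since this is a standard calculation and the payoffs can be rescaled to the unit interval without changing the complexity class, no further technical ingredient is required.
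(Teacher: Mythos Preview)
Your proposal is correct and takes essentially the same approach as the paper: the paper simply states this theorem as an ``immediate consequence'' of combining the \PPAD-hardness of \citet{Chen09:Settling} for general bimatrix games with the classical symmetrization reduction of \citet[Theorem~2.4]{Nisan07:Algorithmic}, without writing out any details. Your sketch of the Gale--Kuhn--Tucker block construction and the approximation bookkeeping is exactly the content underlying that citation; the only minor caveat is that the two halves of a symmetric equilibrium need not carry mass exactly $\tfrac12$ each (the split depends on the relative payoff scales), but as you note this only affects the renormalization by a bounded factor once payoffs are shifted to be strictly positive, so the polynomial approximation transfer goes through as you describe.
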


\paragraph{Team zero-sum games} A (two-)team zero-sum game is a multi-player game---represented in normal form for the purposes of this paper---in which the players' utilities have a certain structure; namely, we can partition the players into two (disjoint) subsets, such that each player within the same team shares the same utility, whereas players in different teams have opposite utilities---under any possible combination of actions. An \emph{adversarial} team game is a specific type of team zero-sum game wherein one team consists of a single player. As in~\Cref{def:NE} for two-player games, an \emph{$\epsilon$-Nash equilibrium} is a tuple of strategies such that no unilateral deviation yields more than an $\epsilon$ additive improvement in the utility of the deviator.
\section{Complexity of adversarial team games}

We begin by examining equilibrium computation in adversarial team games.

\subsection{$\CLS$-completeness for 3-player games}
\label{sec:cls}

Computing $\epsilon$-Nash equilibria in adversarial team games was placed in $\CLS$ by~\citet{Anagnostides23:Algorithms}, but whether $\CLS$ tightly characterizes the complexity of that problem remained open---that was only known when the number of players is large, so that the hardness result of~\citet{Babichenko21:Settling} can kick in. Our reduction here answers this question in the affirmative. 

We rely on a recent hardness result of~\citet{ghosh2024complexitysymmetricbimatrixgames} concerning symmetric, two-player games with identical payoffs. We summarize their main result below.

\begin{theorem}[\citealp{ghosh2024complexitysymmetricbimatrixgames}]
    \label{theorem:hard-2cls}
    Computing an $\epsilon$-Nash equilibrium in a symmetric, identical-payoffs, two-player game is~$\CLS$-complete.
\end{theorem}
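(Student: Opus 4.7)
The plan is to establish the theorem in two parts---membership in $\CLS$ and $\CLS$-hardness---by exploiting the well-known equivalence between equilibria of identical-payoffs games and approximate KKT points of a certain quadratic over the simplex. I focus on the symmetric $\epsilon$-NE variant, which, as the preceding paragraph makes clear, is the form of the Ghosh-Maystre-Nagarajan hardness result actually invoked in the subsequent CLS-completeness reduction.

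For membership, observe that an identical-payoffs two-player game is an exact potential game with potential $\Phi(\vx, \vy) = \vx^\top \mat{A} \vy$: a profile $(\vx^*, \vy^*) \in \Delta^n \times \Delta^n$ is an $\epsilon$-Nash equilibrium if and only if it is an $\epsilon$-KKT point of $\max_{(\vx, \vy)} \Phi$ over $\Delta^n \times \Delta^n$. In the symmetric case ($\mat{A} = \mat{A}^\top$), a symmetric $\epsilon$-NE $(\vx^*, \vx^*)$ reduces further to an $\epsilon$-KKT point of the single-variable quadratic $\phi(\vx) = \vx^\top \mat{A} \vx$ over $\Delta^n$ (whose gradient is $2\mat{A}\vx$). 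Since computing approximate KKT points of smooth objectives over polytopes is the canonical problem defining $\CLS$~\citep{Daskalakis11:Continuous,Fearnley23:Complexity}, membership follows immediately.

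For $\CLS$-hardness, the plan is to reduce from the $\CLS$-complete problem of computing an $\epsilon$-KKT point of a nonconvex quadratic $g(\vx) = \vx^\top \mat{Q}\vx + \vec{c}^\top \vx$ on $\Delta^n$. Given such an instance, absorb the linear term into the quadratic via the simplex identity $\vone^\top \vx = 1$: define $\tilde{\mat{A}} \defeq \frac{1}{2}(\mat{Q} + \mat{Q}^\top) + \frac{1}{2}(\vec{c}\vone^\top + \vone\vec{c}^\top)$, which is symmetric and satisfies $\vx^\top \tilde{\mat{A}} \vx = g(\vx)$ for every $\vx \in \Delta^n$. Then a symmetric $\epsilon$-NE of the symmetric identical-payoffs game with payoff matrix $\tilde{\mat{A}}$ is, by the equivalence used for membership, precisely an $\epsilon$-KKT point of $g$ on the simplex, yielding the desired $\CLS$-hardness.

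The main obstacle is twofold. First, one must justify starting from the $\CLS$-hardness of \emph{quadratic} KKT specifically over the simplex (rather than the more general form of approximate KKT of a smooth function on an arbitrary polytope): the natural route is a standard polytope-to-simplex encoding together with the already-established hardness of nonconvex quadratic programming in that class, although care is needed to maintain polynomial bit-complexity and to control the problem dimension. Second, one must preserve the approximation parameter $\epsilon$ through the linear-absorption step; fortunately, the gradients $2\tilde{\mat{A}}\vx$ and $\nabla g(\vx) = 2\mat{Q}\vx + \vec{c}$ differ only by a vector proportional to $\vone$, and since $\langle \vy - \vx^*, \vone\rangle = 0$ for any $\vy, \vx^* \in \Delta^n$, that shift is invisible to the KKT condition on the simplex, making the reduction approximation-preserving up to constants.
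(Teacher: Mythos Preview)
This theorem is not proved in the present paper; it is quoted from \citet{ghosh2024complexitysymmetricbimatrixgames} and used as a black box in the reduction of \Cref{sec:cls}. There is therefore no in-paper proof to compare against, and I evaluate your sketch on its own merits.

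Your membership argument is correct and standard: the equivalence between symmetric $\epsilon$-Nash equilibria of $(\mat{A},\mat{A})$ (with $\mat{A}=\mat{A}^\top$) and approximate KKT points of $\phi(\vx)=\vx^\top\mat{A}\vx$ on $\Delta^n$ holds up to a factor of $2$ in $\epsilon$, and membership of polynomial KKT over a polytope in $\CLS$ is immediate from~\citet{Daskalakis11:Continuous,Fearnley23:Complexity}.

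The hardness direction, however, has a genuine gap. You propose to reduce from ``the $\CLS$-complete problem of computing an $\epsilon$-KKT point of a nonconvex quadratic on $\Delta^n$,'' after which the linear-absorption trick (replacing $\vc^\top\vx$ by $\tfrac{1}{2}(\vc\vone^\top+\vone\vc^\top)$ and noting that $\langle \vy-\vx^*,\vone\rangle=0$ on the simplex) is indeed clean and approximation-preserving. But that starting point is essentially the theorem itself: establishing $\CLS$-hardness of approximate KKT for a \emph{quadratic}---as opposed to a general smooth function given by an arithmetic circuit, which is what \citet{Fearnley23:Complexity} supplies---is precisely the technical contribution of \citet{ghosh2024complexitysymmetricbimatrixgames} and the concurrent work of \citet{Tewolde25:Computing}. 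Your own ``main obstacle'' paragraph correctly flags this as the crux, but the suggested fix---invoke ``already-established hardness of nonconvex quadratic programming in that class'' plus a polytope-to-simplex encoding---is circular unless you can name an independent source for quadratic KKT hardness predating those works. The actual proofs proceed by explicitly embedding a $\CLS$-hard instance (ultimately tracing back to the \citet{Fearnley23:Complexity} or \citet{Babichenko21:Settling} constructions) into a degree-two objective, and that embedding is where the real work resides; your sketch does not supply it.
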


Now, let $\mat{A} \in \mathbb{Q}^{n \times n}$ be the common payoff matrix of a two-player game, which satisfies $\mat{A} = \mat{A}^\top$ so that the game is symmetric. Without loss of generality, we will assume that $\mat{A}_{i, j} \leq -1$ for all $i, j \in [n]$. We denote by $\Amin$ and $\Amax$ the minimum and maximum entry of $\mat{A}$, respectively (which satisfy $\Amax, \Amin \leq -1$). The basic idea of our proof is to suitably use the adversarial player so as to force the other two players to play roughly the same strategy (\Cref{lemma:closeness}), while (approximately) maintaining the structure of the game (\Cref{lemma:small_z}).

\paragraph{Definition of the adversarial team game} Based on $\mat{A}$, we construct a $3$-player adversarial team game as follows. The utility function of the adversary reads
\begin{equation}
    \label{eq:util-atg}
    u(\vx, \vy, \vz) \defeq  \langle \vx, \mat{A} \vy \rangle  + \frac{|\Amin|}{\epsilon} \sum_{i = 1}^n \left(z_i(x_i - y_i) + z_{n + i} (y_i - x_i)\right) + z_{2n+1} |\Amin|.
\end{equation}
The adversary selects a strategy $ \vz \in \Delta^{2n + 1}$, while the team players, who endeavor to minimize~\eqref{eq:util-atg}, select strategies $\vx \in \Delta^n$ and $\vy \in \Delta^n$, respectively. (While the range of the utilities in~\eqref{eq:util-atg} grows with $\nicefrac{1}{\epsilon}$, normalizing to $[-1, 1]$ maintains all of the consequences by suitably adjusting the approximation.)

We begin by stating a simple auxiliary lemma (\Cref{sec:proofs1} contains the proof).

\begin{restatable}{lemma}{wellsuppo}
    \label{lemma:mass_on_small_reward}
    Let $(\vx_i^*, \vx^*_{-i})$ be an $\epsilon^2$-Nash equilibrium of a normal-form game, and $a_j$ any action of player $i$. If $u_i(a_k, \vx^*_{-i}) \leq u_i(a_j, \vx^*_{-i}) - c$ for some $c>0$ and $k \in [n]$, then $x^*_i(a_k) \leq \nicefrac{\epsilon^2}{c}.$
\end{restatable}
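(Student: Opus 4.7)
The plan is a one-line deviation argument: if action $a_k$ is worse than $a_j$ by at least $c$ against $\vx^*_{-i}$, then player $i$ can improve by reassigning the probability mass currently placed on $a_k$ onto $a_j$, and the magnitude of this improvement scales linearly with $x_i^*(a_k)$. Since we are at an $\epsilon^2$-Nash equilibrium, no such improvement can exceed $\epsilon^2$, which directly forces $x_i^*(a_k)$ to be small.

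Concretely, I would define the deviating mixed strategy $\vx_i' \in \Delta^n$ by setting $x_i'(a_j) \defeq x_i^*(a_j) + x_i^*(a_k)$, $x_i'(a_k) \defeq 0$, and $x_i'(a_\ell) \defeq x_i^*(a_\ell)$ for all other actions $a_\ell$; this is clearly a valid point of the simplex. By multilinearity of the expected utility in player $i$'s strategy,
\[
u_i(\vx_i', \vx^*_{-i}) - u_i(\vx_i^*, \vx^*_{-i}) = x_i^*(a_k)\bigl( u_i(a_j, \vx^*_{-i}) - u_i(a_k, \vx^*_{-i}) \bigr) \geq c\, x_i^*(a_k),
\]
where the inequality invokes the hypothesis $u_i(a_k, \vx^*_{-i}) \leq u_i(a_j, \vx^*_{-i}) - c$. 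On the other hand, the $\epsilon^2$-Nash property applied to the deviation $\vx_i'$ yields $u_i(\vx_i', \vx^*_{-i}) - u_i(\vx_i^*, \vx^*_{-i}) \leq \epsilon^2$. Combining the two bounds gives $c\, x_i^*(a_k) \leq \epsilon^2$, from which the claim $x_i^*(a_k) \leq \epsilon^2/c$ follows immediately.

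There is essentially no obstacle here beyond verifying the two bookkeeping points above: that $\vx_i'$ is admissible (which is automatic, since probabilities remain nonnegative and sum to one) and that the utility difference splits cleanly as stated (which is just multilinearity). The lemma is a standard observation about how suboptimal actions must carry small weight at approximate Nash equilibria, and the only slight subtlety is the quadratic approximation parameter $\epsilon^2$, which propagates transparently through the inequality and accounts for the $\nicefrac{\epsilon^2}{c}$ in the conclusion.
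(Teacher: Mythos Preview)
Your proof is correct and essentially identical to the paper's: both reallocate the probability mass on $a_k$ to $a_j$, compute the utility gain as $x_i^*(a_k)\bigl(u_i(a_j,\vx^*_{-i}) - u_i(a_k,\vx^*_{-i})\bigr) \geq c\,x_i^*(a_k)$, and compare against the $\epsilon^2$-Nash bound. The only cosmetic difference is that the paper frames it as a proof by contradiction rather than a direct argument.
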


In a so-called \emph{well-supported} Nash equilibrium, every strategy $k$ that satisfies the precondition of~\Cref{lemma:mass_on_small_reward} would be played with probability $0$, but it will be easy to account for that slack.

The first important lemma establishes that, in equilibrium, $\vx \approx \vy$. The basic argument proceeds as follows. By construction of the adversarial team game~\eqref{eq:util-atg}, the adversary would be able to secure a large payoff whenever there is a coordinate $i \in [n]$ such that $|x_i - y_i| \gg 0$---by virtue of the second term in~\eqref{eq:util-atg}. But that cannot happen in equilibrium, for Player $\vx$ (or symmetrically Player $\vy$) can simply neutralize that term in the adversary's utility by playing $\vx = \vy$. We provide the precise, quantitative argument below.

\begin{lemma}[Equilibrium forces symmetry] \label{lemma:closeness}
    Consider an $\epsilon^2$-Nash equilibrium $(\vx^*, \vy^*, \vz^*)$ of the adversarial team game~\eqref{eq:util-atg} with $\epsilon^2 \leq \nicefrac{1}{2}$. Then, $\norm{\vx^* - \vy^*}_{\infty} \leq 2\epsilon$.
\end{lemma}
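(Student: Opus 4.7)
The plan is to argue by contradiction: suppose $\|\vx^* - \vy^*\|_\infty > 2\epsilon$, so that some coordinate $i_0 \in [n]$ witnesses $|x_{i_0}^* - y_{i_0}^*| > 2\epsilon$. Without loss of generality I take $x_{i_0}^* - y_{i_0}^* > 2\epsilon$; the other sign is handled symmetrically by swapping the roles of the two team players and replacing $\ve_{i_0}$ with $\ve_{n+i_0}$ in the adversary's deviation below. The strategy is to sandwich $u(\vx^*, \vy^*, \vz^*)$ between a lower bound coming from the adversary's $\epsilon^2$-best-response condition and an upper bound coming from one team player's $\epsilon^2$-best-response condition, and then extract a contradiction using that every entry of $\mat{A}$ lies in $(-\infty, -1]$.

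For the lower bound, the adversary is a maximizer and $u(\vx^*, \vy^*, \ve_{i_0}) = \langle \vx^*, \mat{A}\vy^* \rangle + \frac{|\Amin|}{\epsilon}(x_{i_0}^* - y_{i_0}^*)$, so the adversary's $\epsilon^2$-NE condition gives
\[
    u(\vx^*, \vy^*, \vz^*) \;>\; \langle \vx^*, \mat{A}\vy^* \rangle + 2|\Amin| - \epsilon^2.
\]
For the upper bound, the key observation is that if Player $\vx$ deviates to $\vy^*$, all of the $\frac{|\Amin|}{\epsilon}$-weighted cross-terms in~\eqref{eq:util-atg} vanish identically, so $u(\vy^*, \vy^*, \vz^*) = \langle \vy^*, \mat{A}\vy^* \rangle + z_{2n+1}^* |\Amin|$. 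Since Player $\vx$ is a \emph{minimizer}, its $\epsilon^2$-NE condition then yields
\[
    u(\vx^*, \vy^*, \vz^*) \;\leq\; \langle \vy^*, \mat{A}\vy^* \rangle + z_{2n+1}^*|\Amin| + \epsilon^2.
\]
Chaining the two bounds and using $z_{2n+1}^* \leq 1$ produces $|\Amin| - 2\epsilon^2 < \langle \vy^* - \vx^*, \mat{A}\vy^* \rangle$.

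The crux of the argument is upper bounding this right-hand side. A naive worst-case estimate gives only $2|\Amin|$, which is too weak, so I would exploit that $\sum_j (y_j^* - x_j^*) = 0$: shifting $\mat{A}\vy^*$ coordinatewise by any constant leaves the inner product unchanged. Subtracting $\Amin$ turns the shifted vector into one with entries in $[0, \Amax - \Amin] = [0, |\Amin| - |\Amax|]$, and splitting $\vy^* - \vx^*$ into its positive and negative parts (each of mass at most $1$, since $\|\vy^* - \vx^*\|_1 \leq 2$ and the parts have equal total) gives $\langle \vy^* - \vx^*, \mat{A}\vy^* \rangle \leq |\Amin| - |\Amax|$. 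Plugging back yields $|\Amin| - 2\epsilon^2 < |\Amin| - |\Amax|$, i.e., $|\Amax| < 2\epsilon^2 \leq 1$, contradicting $|\Amax| \geq 1$. The main subtlety is precisely this last bound: without exploiting the mean-zero structure of $\vy^* - \vx^*$, the estimate loses a factor of two and the contradiction evaporates.
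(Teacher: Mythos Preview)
Your argument is correct, and it takes a genuinely different route from the paper's. Both proofs start the same way: assume $x_{i_0}^* - y_{i_0}^* > 2\epsilon$, use the adversary's deviation to $\ve_{i_0}$ for a lower bound on $u(\vx^*,\vy^*,\vz^*)$, and use Player~$\vx$'s deviation to $\vy^*$ (which kills the $\tfrac{|\Amin|}{\epsilon}$ terms) for an upper bound. The divergence is in how each proof closes the gap. The paper first invokes the auxiliary \Cref{lemma:mass_on_small_reward} to show $z_{2n+1}^* \leq \epsilon^2$, which sharpens the upper bound to $\langle \vy^*, \mat{A}\vy^*\rangle + \epsilon^2|\Amin| + \epsilon^2$; the crude estimate $\langle \vy^*, \mat{A}\vy^*\rangle \leq \Amax = -|\Amax|$ then suffices. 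You instead keep the trivial bound $z_{2n+1}^* \leq 1$ and compensate by bounding $\langle \vy^*-\vx^*, \mat{A}\vy^*\rangle$ more carefully, exploiting that $\vy^*-\vx^*$ has zero coordinate sum to shift $\mat{A}\vy^*$ into $[0,|\Amin|-|\Amax|]^n$ before applying H\"older.

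Each approach has a mild advantage. Yours is self-contained (it does not rely on the preceding lemma) and the mean-zero shift is a clean trick. The paper's route is slightly more direct once \Cref{lemma:mass_on_small_reward} is in hand, and it has the side benefit of establishing $z_{2n+1}^* \leq \epsilon^2$ explicitly---though this particular bound is not reused elsewhere (\Cref{lemma:small_z} proves the opposite inequality $z_{2n+1}^* \geq 1 - 18n\epsilon$ by a separate argument). One small remark: in the symmetric case $y_{i_0}^* - x_{i_0}^* > 2\epsilon$, you do not actually need to swap the team players; having Player~$\vx$ deviate to $\vy^*$ still works verbatim once the adversary's deviation is taken to be $\ve_{n+i_0}$.
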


\begin{proof}
    For the sake of contradiction, suppose that $x_i^* - y_i^* > 2 \epsilon$ for some $i \in [n]$ (the case where $y_i^* - x_i^* > 2 \epsilon$ is symmetric, and can be treated analogously). Player $\vz$ could then choose action $a_i$ (with probability $1$), which secures a utility of
    \begin{align*}
        u(\vx^*, \vy^*, a_i) 
        & = \langle \vx^*, \mat{A} \vy^* \rangle  + \frac{|\Amin|}{\epsilon} \cdot (x^*_i - y^*_i) > \langle \vx^*, \mat{A} \vy^* \rangle + 2 |\Amin|,
    \end{align*}
    since $x_i^* - y_i^* > 2 \epsilon$. At the same time, Player $\vz$ could choose action $a_{2n+1}$, which secures a utility of $u(\vx^*, \vy^*, a_{2n+1}) = \langle \vx^*, \mat{A} \vy^* \rangle + |\Amin|$. So,
    \begin{equation*}
        u(\vx^*, \vy^*, a_i) - u(\vx^*, \vy^*, a_{2n+1}) \geq |\Amin|.
    \end{equation*}
    Applying~\Cref{lemma:mass_on_small_reward},
    \begin{align}
        \label{ineq:z}
    z^*_{2n+1} \leq \frac{\epsilon^2}{|\Amin|} \leq \epsilon^2.
    \end{align}
    Also, using the fact that $(\vx^*, \vy^*, \vz^*)$ is an $\epsilon^2$-Nash equilibrium,
    \begin{align}
    u(\vx^*,\vy^*,\vz^*) &\geq u(\vx^*,\vy^*,a_i) - \epsilon^2 \notag \\
    &\geq \langle \vx^*, \mat{A} \vy^* \rangle + 2|\Amin| -\epsilon^2 \notag \\
    &\geq \Amin + 2|\Amin|-\epsilon^2
    \notag \\
    &= |\Amin| -\epsilon^2, \label{ineq:utility}
    \end{align}
    since we have assumed that $\Amin < 0$. Now, consider the deviation of Player $\vx$ (from strategy $\vx^*$) to $\vx' \defeq \vy^*$. Then, $u(\vx', \vy^*, \vz^*) = \langle \vy^*, \mat{A} \vy^* \rangle + z^*_{2n+1} \cdot |\Amin|$. Thus, combining with~\eqref{ineq:utility} and~\eqref{ineq:z},
    \begin{align}
        u(\vx', \vy^*, \vz^*) - u(\vx^*, \vy^*, \vz^*) & \leq \langle \vy^*, \mat{A} \vy^* \rangle  + z_{2n+1}^* \cdot |\Amin| - |\Amin| + \epsilon^2 \notag \\
        & \leq \langle \vy^*, \mat{A} \vy^* \rangle + \epsilon^2 \cdot |\Amin| -  |\Amin | + \epsilon^2 \notag \\
        & \leq \Amax - |\Amin| + \epsilon^2(|\Amin|+1) \notag \\&\leq -1 < - \epsilon^2,  \label{eq:contract}
    \end{align}
    where we used that $\Amax, \Amin \leq -1$ and $\epsilon^2 \leq \nicefrac{1}{2}$. But~\eqref{eq:contract} contradicts the fact that $(\vx^*, \vy^*, \vz^*)$ is an $\epsilon^2$-Nash equilibrium since deviating to $\vx'$ yields a utility improvement (equivalently, decrease in cost) strictly larger than $\epsilon^2$. This completes the proof.
\end{proof}

Having established that $\vx \approx \vy$, the next step is to make sure that the adversarial player does not distort the original game by much. In particular, we need to make sure that the effect of the second term in~\eqref{eq:util-atg} is negligible. We do so by showing that $\vz_{2n+1} \approx 1$ (\Cref{lemma:small_z}).

The argument here is more subtle; roughly speaking, it goes as follows. Suppose that $z_{i} \gg 0$ or $z_{n+i} \gg 0$ for some $i \in [n]$. Since Player $\vz$ is approximately best responding, it would then follow that $|y_i^* - x_i^*| \gg 0$---otherwise Player $\vz$ would prefer to switch to action $2n+1$. But, if $|y_i^* - x^*_i| \gg 0$, Player $\vx$ could profitably deviate by reallocating probability mass by either removing or adding to $i$ (depending on whether $y_i^* - x_i^* > 0$), which leads to a contradiction.

\begin{lemma}[Most probability mass in $a_{2n+1}$] \label{lemma:small_z}
    Given any $\epsilon^2$-Nash equilibrium $(\vx^*, \vy^*, \vz^*)$ of the adversarial team game~\eqref{eq:util-atg} with $\epsilon \leq \nicefrac{1}{10}$, $z_j \leq 9 \epsilon$ for all $j \in [2n]$. In particular, $z_{2n+1} \geq 1 - 18 n \epsilon$.
\end{lemma}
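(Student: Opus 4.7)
The plan is to argue by contradiction, assuming that some $z_j^* > 9\epsilon$ with $j \in [2n]$. The utility $u$ is invariant under simultaneously swapping $\vx \leftrightarrow \vy$ and $z_k \leftrightarrow z_{n+k}$ for $k \in [n]$ (both team players share the same objective, and $\mat{A} = \mat{A}^{\top}$), so WLOG I may take $j = i \in [n]$, i.e., $z_i^* > 9\epsilon$; the case $z_{n+i}^* > 9\epsilon$ reduces to this one by relabeling the team players. The argument will invoke \Cref{lemma:mass_on_small_reward} three times---twice against the adversary (to pin down $D_i := x_i^* - y_i^*$ and $z_{n+i}^*$), and once against Player $\vx$ (to extract the contradiction).

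First I would unpack the adversary's approximate best response. Applying \Cref{lemma:mass_on_small_reward} to actions $e_i$ vs.\ $e_{2n+1}$, whose utility gap is $|\Amin|(D_i/\epsilon - 1)$, the hypothesis $z_i^* > 9\epsilon$ rules out $u(\vx^*, \vy^*, e_i) \leq u(\vx^*, \vy^*, e_{2n+1}) - \epsilon/9$; unpacking gives $D_i > \epsilon - \epsilon^2/(9|\Amin|) \geq 0.98\epsilon$ (using $\epsilon \leq 1/10$ and $|\Amin| \geq 1$), so in particular $x_i^* > 0.98\epsilon$. Comparing $e_i$ with $e_{n+i}$, the adversary's utility gap becomes $(2|\Amin|/\epsilon)D_i > 1.96|\Amin|$, and the same lemma yields $z_{n+i}^* \leq \epsilon^2/(1.96|\Amin|) < \epsilon^2$. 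Consequently $\alpha_i^* := z_i^* - z_{n+i}^* > 8.99\epsilon$.

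Next I would locate a ``cheap'' coordinate $k$ toward which Player $\vx$ could profitably reallocate. Since $\sum_k D_k = 0$ and $D_i > 0$, some $k \neq i$ has $D_k < 0$; for such a $k$, \Cref{lemma:mass_on_small_reward} applied to the adversary's $e_k$ vs.\ $e_{2n+1}$ (whose gap is $|\Amin|(D_k/\epsilon - 1) < -|\Amin|$) forces $z_k^* \leq \epsilon^2/|\Amin|$, and thus $\alpha_k^* \leq z_k^* \leq \epsilon^2$. The action-dependent part of Player $\vx$'s cost $u(e_j, \vy^*, \vz^*)$ is $(\mat{A}\vy^*)_j + (|\Amin|/\epsilon)\alpha_j^*$, so using that every entry of $\mat{A}$ lies in $[\Amin, -1]$,
\[
    u(e_i, \vy^*, \vz^*) - u(e_k, \vy^*, \vz^*) \geq (1 - |\Amin|) + \frac{|\Amin|}{\epsilon}(8.99\epsilon - \epsilon^2) = 1 + (7.99 - \epsilon)|\Amin| \geq 7.89|\Amin|,
\]
valid for $\epsilon \leq 1/10$. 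A third application of \Cref{lemma:mass_on_small_reward}, now to the minimizer $\vx$ (so that a suboptimal action for her corresponds to \emph{higher} $u$), forces $x_i^* \leq \epsilon^2/(7.89|\Amin|) < \epsilon^2$, which contradicts $x_i^* > 0.98\epsilon$ whenever $\epsilon \leq 1/10$. The ``in particular'' claim is then immediate from $z_{2n+1}^* = 1 - \sum_{j=1}^{2n} z_j^* \geq 1 - 18n\epsilon$.

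The only delicate design point, and the step I expect to be the main obstacle to get right, is engineering the cost gap between $e_i$ and $e_k$ to scale with $|\Amin|$ rather than with $\epsilon$: this is what makes the $\epsilon^2/c$ upper bound from \Cref{lemma:mass_on_small_reward} strictly smaller than the $0.98\epsilon$ lower bound on $x_i^*$. The $|\Amin|/\epsilon$ amplification built into the $z_j(x_j - y_j)$ and $z_{n+j}(y_j - x_j)$ terms of \eqref{eq:util-atg} is precisely what propagates this separation from the adversary's payoff landscape all the way into Player $\vx$'s best-response constraint.
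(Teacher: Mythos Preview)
Your argument is correct (modulo a harmless arithmetic slip: from $z_{n+i}^* < \epsilon^2$ you get $\alpha_i^* > 9\epsilon - \epsilon^2 \geq 8.9\epsilon$, not $8.99\epsilon$; the downstream constants still close comfortably). It is, however, a genuinely different route from the paper's.

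The paper proceeds \emph{directly} rather than by contradiction. It first invokes \Cref{lemma:closeness} to get $\|\vx^*-\vy^*\|_\infty\le 2\epsilon$, then for each $i$ splits on whether $|x_i^*-y_i^*|\le\epsilon/2$. In the nontrivial case it constructs a \emph{small} mixed deviation for Player~$\vx$, namely $\vx' = \vx^* + (y_i^*-x_i^*)\ve_i + (x_i^*-y_i^*)\ve_j$ with $j$ chosen so that $x_j^*-y_j^*<0$, and bounds $u(\vx',\vy^*,\vz^*)-u(\vx^*,\vy^*,\vz^*)$ term by term; the Nash condition then yields $z_i^*\le 9\epsilon$. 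Your approach instead stays entirely within pure-action comparisons: three applications of \Cref{lemma:mass_on_small_reward}---two to the adversary (pinning down $D_i$ and $z_{n+i}^*$) and one to Player~$\vx$ (contradicting $x_i^*>0.98\epsilon$). A notable byproduct is that you never invoke \Cref{lemma:closeness} at all; the lower bound $D_i>0.98\epsilon$ comes straight from $z_i^*>9\epsilon$. What the paper's approach buys is a more local, perturbative picture of why the gadget works (the deviation is of size $O(\epsilon)$), whereas your approach is more elementary and modular, relying only on the single auxiliary lemma and clean pure-strategy bookkeeping.
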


\begin{proof}
    \Cref{lemma:closeness} shows that $\norm{\vx^* - \vy^*}_{\infty} \leq 2 \epsilon$. Let $i \in [n]$. We assume that $i$ is such that $x_i^* - y_i^* \geq 0$; the contrary case is symmetric. We consider two cases. First, suppose that $| x_i^* - y_i^*| \leq \nicefrac{\epsilon}{2}$. Then, we have
    \begin{align*}
        u(\vx^*, \vy^*, a_{2n+1}) - u(\vx^*, \vy^*, a_i) & \geq |\Amin| - \frac{|\Amin|}{\epsilon} ( x^*_i - y^*_i ) \\
        & \geq \frac{1}{2} |\Amin| \geq \frac{1}{2}.
    \end{align*}
    Thus, by~\cref{lemma:mass_on_small_reward}, we conclude that $z_i \leq 2 \epsilon^2$. Similarly,
    \begin{equation*}
        u(\vx^*, \vy^*, a_{2n+1}) - u(\vx^*, \vy^*, a_{n+i}) \geq |\Amin| - \frac{|\Amin|}{\epsilon} (  y^*_i - x_i^* ) \geq |\Amin| \geq 1,
    \end{equation*}
    since $x_i^* - y_i^* \geq 0$. Again, \Cref{lemma:mass_on_small_reward} implies that $z_{n+i} \leq \epsilon^2$.
    
    It thus suffices to treat the case where $| x_i^* - y_i^* | > \nicefrac{\epsilon}{2}$ (assuming that $x_i^* - y_i^* \geq 0$). It follows that there exists $j \in [n]$ such that $x^*_j - y^*_j < 0$. In addition, we observe that $u(\vx^*, \vy^*, a_{2n+1}) = \langle \vx^*, \mat{A} \vy^* \rangle + |\Amin| \geq \langle \vx^*, \mat{A} \vy^* \rangle + 1$, whereas $u(\vx^*, \vy^*, a_j) < \langle \vx^*, \mat{A} \vy^* \rangle$ and $u(\vx^*, \vy^*, a_{n+i}) < \langle \vx^*, \mat{A} \vy^* \rangle$. As a result, \Cref{lemma:mass_on_small_reward} implies that $z^*_{n+i} \leq \epsilon^2$ and $z^*_{j} \leq \epsilon^2$. 
    
    Now, consider deviation
    \begin{align*}
        \Delta^n \ni \vx' = \vx^* + (y^*_i - x^*_i) \ve_i + (x^*_i - y^*_i) \ve_j;
    \end{align*}
    that is, $\vx'$ is the strategy that results from $\vx$ by reallocating $(x^*_i - y^*_i)$ probability mass from action $a_i$ to action $a_j$. The difference $u(\vx', \vy^*, \vz^*) - u(\vx^*, \vy^*, \vz^*)$ can be expressed as
    \begin{align}
        &\langle \vx' - \vx^*, \mat{A} \vy^* \rangle + \frac{|\Amin|}{\epsilon} \left(z^*_i (x_i' - x_i^*) + z^*_j (x'_j - x^*_j) + z^*_{n + i}(x^*_i - x_i') + z^*_{n + j}(x^*_j - x'_j)\right) \notag \\
        &= \langle \vx' - \vx^*, \mat{A} \vy^* \rangle + \frac{|\Amin|}{\epsilon} \left(z^*_i (y_i^* - x_i^*) + z^*_j (x_i^* - y_i^*) + z^*_{n + i}(x^*_i - y_i^*) + z^*_{n + j}(y_i^* - x_i^* )\right) \notag \\
        &\leq 4 \epsilon |\Amin| + \frac{|\Amin|}{\epsilon} \left(z^*_i \cdot \left( - \frac{\epsilon}{2} \right) + \epsilon^2 \cdot 2 \epsilon + \epsilon^2 \cdot 2 \epsilon\right) \label{eq:cancel1}\\
        &\leq - \left( \frac{1}{2}z^*_i - 4 \epsilon^2-4 \epsilon \right) |\Amin|, \label{eq:cancel2}
    \end{align}
    where~\eqref{eq:cancel1} uses the following:
    \begin{itemize}
        \item $ y_i^* - x_i^* \leq - \nicefrac{\epsilon}{2}$; 
        \item $x_i^* - y_i^* \leq 2 \epsilon$ (\Cref{lemma:closeness});
        \item $z^*_{n + i} \leq \epsilon^2$ and $z_j^* \leq \epsilon^2$; and 
        \item $\langle \vx' - \vx^*, \mat{A} \vy^* \rangle \leq \| \vx' - \vx^* \|_1 \|\mat{A} \vy^* \|_\infty \leq 2 | x_i^* - y_i^* | |\Amin| \leq 4 \epsilon |\Amin| $ (since $\mat{A}$ has negative entries); 
    \end{itemize}
    Moreover, given that $(\vx^*, \vy^*, \vz^*)$ is assumed to be an $\epsilon^2$-Nash equilibrium, we have 
    \begin{align}
        \label{eq:Nashcond}
        - u(\vx', \vy^*, \vz^*) + u(\vx^*, \vy^*, \vz^*) \leq \epsilon^2.
    \end{align}
    (The utility of Player $\vx$ is given by $-u$.) Combining~\eqref{eq:Nashcond} and~\eqref{eq:cancel2},
    \begin{align}
         &\left(\frac{1}{2} z^*_i - 4 \epsilon^2-4 \epsilon\right) |\Amin| \leq  \epsilon^2 \\
        \Rightarrow \quad  z^*_i & \leq 8 \epsilon + 10\epsilon^2 \leq  9 \epsilon \textrm{ for }\epsilon \leq \frac{1}{10}.
    \end{align}
    In summary, when $x_i^* - y_i^* \geq 0$, we have shown that $z_i^* \leq 9 \epsilon$ and $z_{n+i}^* \leq \epsilon^2$. The case where $y_i^* - x_i^* \geq 0$ can be treated similarly.
\end{proof}

We next combine~\Cref{lemma:closeness,lemma:small_z} to complete the reduction from symmetric two-player games with common payoffs to $3$-player adversarial team games.

\begin{restatable}{theorem}{teamadvtosymme}
    \label{thm:teamadvtosymme}
    Given any $\epsilon^2$-Nash equilibrium $(\vx^*, \vy^*, \vz^*)$ in the adversarial team game~\eqref{eq:util-atg}, with $\epsilon \leq \nicefrac{1}{10}$, $(\vy^* , \vy^*)$ is a symmetric $(21 n + 1) |\Amin| \epsilon$-Nash equilibrium of the symmetric, two-player game $(\mat{A}, \mat{A})$ (that is, $\mat{A} = \mat{A}^\top$).
\end{restatable}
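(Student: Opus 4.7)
Fix an arbitrary deviation $\vx' \in \Delta^n$ and let $S \defeq \langle \vy^* - \vx', \mat{A} \vy^* \rangle$. The plan is to upper bound $S$ by $O(n |\Amin| \epsilon)$; by symmetry of $\mat{A}$, the same bound controls the column player's deviation gain at $(\vy^*, \vy^*)$ in the game $(\mat{A}, \mat{A})$, so only the row-player condition needs attention. Define $\vw \in \R^n$ via $w_i \defeq z^*_i - z^*_{n+i}$, and recall that by \Cref{lemma:closeness,lemma:small_z} we have $\|\vx^* - \vy^*\|_1 \leq 2 n \epsilon$ and $\|\vw\|_\infty \leq 9 \epsilon$.

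I decompose $S = \langle \vy^* - \vx^*, \mat{A} \vy^* \rangle + \langle \vx^* - \vx', \mat{A} \vy^* \rangle$. The first summand is at most $\|\vx^* - \vy^*\|_1 \cdot \|\mat{A} \vy^*\|_\infty \leq 2 n |\Amin| \epsilon$. For the second, I invoke Player $\vx$'s $\epsilon^2$-best-response condition in the adversarial team game: since $\vx^*$ approximately minimizes $\langle \vx, \mat{A} \vy^* + (|\Amin| / \epsilon) \vw \rangle$ over $\Delta^n$, one obtains $\langle \vx^* - \vx', \mat{A} \vy^* \rangle \leq \epsilon^2 + (|\Amin|/\epsilon) \langle \vw, \vx' - \vx^* \rangle$. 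Splitting $\langle \vw, \vx' - \vx^* \rangle = \langle \vw, \vx' - \vy^* \rangle + \langle \vw, \vy^* - \vx^* \rangle$, the second piece is bounded in absolute value by $\|\vw\|_\infty \cdot \|\vy^* - \vx^*\|_1 \leq 18 n \epsilon^2$, which contributes at most $18 n |\Amin| \epsilon$ after scaling by $|\Amin|/\epsilon$.

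The main obstacle---where naive bounds on $\vw$ alone yield only an unhelpful $O(|\Amin|)$ term that entirely loses the $\epsilon$ dependence---is controlling $(|\Amin| / \epsilon) \langle \vw, \vx' - \vy^* \rangle$. The key step is to bring in Player $\vy$'s equilibrium condition as well: since $\vy^*$ approximately minimizes $\langle \mat{A} \vx^* - (|\Amin|/\epsilon) \vw, \vy \rangle$ over $\Delta^n$ (using $\mat{A} = \mat{A}^\top$), plugging in the deviation $\vy' = \vx'$ and rearranging yields $(|\Amin|/\epsilon) \langle \vw, \vx' - \vy^* \rangle \leq \epsilon^2 + \langle \vx^*, \mat{A} (\vx' - \vy^*) \rangle$. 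Writing $\langle \vx^*, \mat{A} (\vx' - \vy^*) \rangle = \langle \mat{A} \vy^*, \vx' - \vy^* \rangle + \langle \mat{A} (\vx^* - \vy^*), \vx' - \vy^* \rangle$, the first term equals $-S$ by definition, while the second is bounded by $\|\vx' - \vy^*\|_1 \cdot \|\mat{A}(\vx^* - \vy^*)\|_\infty \leq 4 n |\Amin| \epsilon$ via \Cref{lemma:closeness} and the entrywise bound $|\mat{A}_{ij}| \leq |\Amin|$.

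Assembling all the pieces produces an inequality of the form $S \leq C n |\Amin| \epsilon + C' \epsilon^2 - S$ for explicit absolute constants $C, C'$, which rearranges to $2S \leq C n |\Amin| \epsilon + C' \epsilon^2$ and hence $S \leq O(n |\Amin| \epsilon)$. Tracking the constants carefully through the chain of inequalities---and using $\epsilon \leq 1/10$ together with $|\Amin| \geq 1$ to absorb any residual $\epsilon^2$ terms into $|\Amin| \epsilon$---yields the precise $(21 n + 1) |\Amin| \epsilon$ bound claimed by the theorem. The corresponding deviation bound for the column player in $(\mat{A}, \mat{A})$ is then immediate from $\mat{A} = \mat{A}^\top$.
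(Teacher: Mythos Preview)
Your proposal is correct and follows essentially the same approach as the paper. Both arguments use the $\epsilon^2$-Nash conditions of Players $\vx$ and $\vy$ with the \emph{same} deviation target and combine them so that the dangerous $(|\Amin|/\epsilon)\vw$-terms telescope down to a quantity controlled by $\|\vw\|_\infty\|\vx^*-\vy^*\|_1$, after which \Cref{lemma:closeness,lemma:small_z} finish the job. The only difference is organizational: the paper writes out the two inequalities and adds them directly, whereas you substitute Player $\vy$'s inequality into the residual term left over from Player $\vx$'s, obtaining $S \leq C n|\Amin|\epsilon - S$ and halving. Your bookkeeping is in fact slightly tighter (it yields roughly $(12n+1)|\Amin|\epsilon$ rather than $(21n+1)|\Amin|\epsilon$), so deferring to the theorem's stated constant is harmless.
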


The proof is straightforward and is included in~\Cref{sec:proofs1}. \Cref{theorem:cls-completeness} restates the main complexity implication of~\Cref{theorem:hard-2cls}.

\maincls*

\begin{proof}
    $\CLS$-hardness follows directly from~\Cref{thm:teamadvtosymme} and~\Cref{theorem:hard-2cls} (due to~\citet{ghosh2024complexitysymmetricbimatrixgames}). The inclusion was shown by~\citet{Anagnostides23:Algorithms}.
\end{proof}

\subsection{A 3-player adversarial team game with a unique irrational Nash}

We begin this section by describing an interesting property for adversarial team games. Similarly as general $3$-player games \citep{Nash50:Non}, adversarial team games also admit unique irrational Nash equilibrium.

\begin{proposition}[\cite{irrationalNash}]
    \label{theorem:irrational}
    There exists an $3$-player adversarial team game with a unique Nash equilibrium that is supported on irrationals.
\end{proposition}

While previous work \citep{irrationalNash} gives the correct instance for \cref{theorem:irrational}, no proof is provided. Here we provide a relatively simple and general way to analyze the irrational Nash equilibrium in $3$-player adversarial team games as a complement of their work.

We consider a $3$-player adversarial team game in which the utility function of the adversary $u : \{1, 2\} \times \{1, 2\} \times \{1, 2\} : (\vx, \vy, \vz) \mapsto \R$ reads

\begin{center}
\renewcommand{\arraystretch}{2}
\begin{tabular}{ c|cccc } 
 \diagbox[height = 1.2cm]{$\vz$}{$(\vx, \vy)$}& (1, 1) & (1, 2) & (2, 1) & (2, 2) \\ [0.5ex]
 \toprule
 1 & $1$ & $3$ & $\frac{99}{100}$ & $- \frac{1}{100}$ \\ [0.5ex]
 2 & $\frac{9}{10}$ & $-\frac{1}{10}$ & 1 & $3$ \\ [0.5ex]
 \bottomrule
\end{tabular}
\end{center}

The proof of this result makes use of a characterization of Nash equilibria in $2 \times 2$ two-player zero-sum games, stated below; for the proof, we refer to, for example, \citet[Theorem 1.2]{sun2022propertiesnashequilibrium2}.

\begin{lemma}
    \label{lemma:2x2}
    Let $\mat{A} \in \R^{2 \times 2}$ such that
    \begin{equation*}
        ( \mat{A}_{1, 1} - \mat{A}_{1, 2} )( \mat{A}_{2, 2} - \mat{A}_{2, 1}) > 0 \text{ and } ( \mat{A}_{1, 1} - \mat{A}_{2, 1} )( \mat{A}_{2, 2} - \mat{A}_{1, 2}) > 0.
    \end{equation*}
    Then, the two-player zero-sum game $\min_{\vx \in \Delta^2} \max_{\vz \in \Delta^2} \langle \vx, \mat{A} \vz \rangle$ admits a unique (exact) Nash equilibrium with value
    \begin{equation*}
        v \defeq \frac{ \mat{A}_{1, 1} \mat{A}_{2, 2} - \mat{A}_{1, 2} \mat{A}_{2, 1}}{ \mat{A}_{1, 1} - \mat{A}_{1, 2} - \mat{A}_{2, 1} + \mat{A}_{2, 2}}.
    \end{equation*}
    Furthermore, the unique Nash equilibrium $(\vx^*, \vz^*)$ satisfies
    \begin{align*}
        & \vx^* = \left(\frac{\mat{A}_{2, 2} - \mat{A}_{2, 1}}{\mat{A}_{1, 1} - \mat{A}_{1, 2} -\mat{A}_{2, 1} + \mat{A}_{2, 2}}, \frac{\mat{A}_{1, 1} - \mat{A}_{1, 2}}{\mat{A}_{1, 1} - \mat{A}_{1, 2} -\mat{A}_{2, 1} + \mat{A}_{2, 2}}\right) \\
        & \vz^* = \left(\frac{\mat{A}_{2, 2} - \mat{A}_{1,2}}{\mat{A}_{1, 1} - \mat{A}_{1, 2} -\mat{A}_{2, 1} + \mat{A}_{2, 2}}, \frac{\mat{A}_{1, 1} - \mat{A}_{2, 1}}{\mat{A}_{1, 1} - \mat{A}_{1, 2} -\mat{A}_{2, 1} + \mat{A}_{2, 2}}\right).
    \end{align*}
\end{lemma}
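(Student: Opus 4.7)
The plan is to prove the lemma by (i) showing that under the two sign hypotheses the game admits no pure-strategy Nash equilibrium, so that any equilibrium must be fully mixed; (ii) invoking the indifference principle to pin down the unique candidate $(\vx^*, \vz^*)$; and (iii) verifying that this candidate indeed lies in the interior of $\Delta^2 \times \Delta^2$ and computing its value by direct substitution.

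For step (i), I would recall that a pure profile $(i,j)$ is a Nash equilibrium of the zero-sum game $\min_{\vx}\max_{\vz} \langle \vx, \mat{A} \vz\rangle$ precisely when $\mat{A}_{i,j}$ is a saddle point of $\mat{A}$, i.e.\ the minimum of its column and the maximum of its row. Condition~1, $(\mat{A}_{1,1}-\mat{A}_{1,2})(\mat{A}_{2,2}-\mat{A}_{2,1})>0$, says that the maximizer's strict best response against row~$1$ and against row~$2$ are distinct columns (both picking the diagonal entry if the common sign is positive, and both picking the anti-diagonal entry if negative). This immediately rules out the two off-diagonal (respectively, diagonal) pure profiles from being saddles. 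A short case analysis using condition~2 analogously rules out the remaining two candidates: whatever the sign pattern in condition~1, condition~2 forces the row player's strict best response against each column to disagree with what would be required for a saddle. Hence no pure Nash equilibrium exists.

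For step (ii), since by~(i) any Nash equilibrium must assign positive probability to both actions for each player, the indifference principle applies: both column actions yield the same expected payoff against $\vx^*$, and both row actions yield the same expected payoff against $\vz^*$. Writing out these two linear equations gives
\begin{equation*}
(\mat{A}_{1,1}-\mat{A}_{1,2})\, x_1^* = (\mat{A}_{2,2}-\mat{A}_{2,1})\, x_2^*, \qquad (\mat{A}_{1,1}-\mat{A}_{2,1})\, z_1^* = (\mat{A}_{2,2}-\mat{A}_{1,2})\, z_2^*.
\end{equation*}
Combining with the simplex constraints $x_1^*+x_2^*=1$ and $z_1^*+z_2^*=1$, and letting $D \defeq \mat{A}_{1,1}-\mat{A}_{1,2}-\mat{A}_{2,1}+\mat{A}_{2,2}$, the unique solution is exactly the one stated in the lemma. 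The key observation is that $D = (\mat{A}_{1,1}-\mat{A}_{1,2}) + (\mat{A}_{2,2}-\mat{A}_{2,1}) = (\mat{A}_{1,1}-\mat{A}_{2,1}) + (\mat{A}_{2,2}-\mat{A}_{1,2})$, so each hypothesis expresses $D$ as a sum of two same-sign quantities; hence $D \neq 0$, and furthermore $\vx^*$ and $\vz^*$ have strictly positive coordinates (each numerator has the same sign as $D$).

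For step (iii), uniqueness of the equilibrium is immediate: no pure Nash equilibrium exists by~(i), and any fully-mixed one is forced to coincide with $(\vx^*, \vz^*)$ by the indifference argument. The value formula is then obtained by evaluating $\langle \vx^*, \mat{A}\, \vec{e}_1\rangle = \mat{A}_{1,1} x_1^* + \mat{A}_{2,1} x_2^*$, substituting the explicit expressions for $x_1^*, x_2^*$, and simplifying; the cross terms $\mat{A}_{1,1}\mat{A}_{2,1}$ cancel, leaving $(\mat{A}_{1,1}\mat{A}_{2,2} - \mat{A}_{1,2}\mat{A}_{2,1})/D$. The only mildly delicate point is the case analysis in step~(i); once pure equilibria are eliminated, the rest is mechanical.
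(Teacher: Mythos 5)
Your proof is correct. Note that the paper does not actually prove \Cref{lemma:2x2} at all---it defers to an external reference (Sun et al., Theorem~1.2)---so there is no in-paper argument to compare against; what you give is the standard self-contained derivation, and all the computations (the indifference equations, the positivity of the coordinates via writing $D=\mat{A}_{1,1}-\mat{A}_{1,2}-\mat{A}_{2,1}+\mat{A}_{2,2}$ as a sum of two same-sign terms in two ways, and the cancellation yielding the value $(\mat{A}_{1,1}\mat{A}_{2,2}-\mat{A}_{1,2}\mat{A}_{2,1})/D$) check out. One step deserves an explicit sentence rather than the leap you make at the start of step~(ii): ``no pure Nash equilibrium exists'' does not by itself imply that every equilibrium is fully mixed, since a priori one player could be pure while the other mixes. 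Here this is easily excluded: the hypotheses force each factor $\mat{A}_{1,1}-\mat{A}_{1,2}$, $\mat{A}_{2,2}-\mat{A}_{2,1}$, $\mat{A}_{1,1}-\mat{A}_{2,1}$, $\mat{A}_{2,2}-\mat{A}_{1,2}$ to be nonzero, so every pure strategy of either player has a \emph{unique, pure} best response; hence an equilibrium with one pure component would collapse to a pure--pure profile, which step~(i) rules out. (You should also confirm, as your case analysis implicitly does, that the two sign hypotheses are jointly consistent only when all four factors share a common sign---the mixed-sign combinations are vacuous---which is what makes both decompositions of $D$ assign it the same sign.) With that sentence added, the argument is complete: existence follows from the minimax theorem, the equilibrium must be fully mixed, and indifference pins it down uniquely.
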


\begin{proof}[Proof of~\Cref{theorem:irrational}]
    By construction of the adversarial team game, the mixed extension of the utility can be expressed as
    \begin{equation*}
        x_1 z_1 \left( y_1 + 3 y_2 \right) + x_1 z_2 \left( \frac{9}{10} y_1 - \frac{1}{10} y_2 \right) + x_2 z_1 \left( \frac{99}{100} y_1 - \frac{1}{100} y_2 \right) + x_2 z_2 \left( 3y_2 + y_1 \right).
    \end{equation*}
    Suppose that we fix $\vy \in \Delta^2$. Then, Players $\vx$ and $\vy$ are engaged in a (two-player) zero-sum game with payoff matrix
    \begin{equation}
        \label{eq:game-y}
    \mat{A}(\vy) \defeq 
        \begin{bmatrix}
        1 + 2y_2 & \frac{9}{10} - y_2 \\
        \frac{99}{100} - y_2 & 1 + 2 y_2
    \end{bmatrix}.
    \end{equation}
    We now invoke~\Cref{lemma:2x2}. Indeed, we have
    \begin{equation}
        \label{eq:diag-pos1}
        (\mat{A}(\vy)_{1,1} - \mat{A}(\vy)_{1, 2}) (\mat{A}(\vy)_{2, 2} - \mat{A}(\vy)_{2, 1}) = \left( 3 y_2 + \frac{1}{10} \right) \left( 3 y_2 + \frac{1}{100} \right) > 0
    \end{equation}
    and
    \begin{equation}
        \label{eq:diag-pos2}
        (\mat{A}(\vy)_{1,1} - \mat{A}(\vy)_{2, 1}) (\mat{A}(\vy)_{2, 2} - \mat{A}(\vy)_{1, 2}) = \left( 3 y_2 + \frac{1}{100} \right) \left( 3 y_2 + \frac{1}{10} \right) > 0;
    \end{equation}
    that is, the precondition of~\Cref{lemma:2x2} is satisfied, and so the value of~\eqref{eq:game-y} reads
    \begin{equation}
        \label{eq:value}
        v(\vy) = \min_{\vx \in \Delta^2} \max_{\vz \in \Delta^2} \langle \vx, \mat{A}(\vy) \vz \rangle = \frac{109 + 5890 y_2 + 3000 y_2^2}{110 + 6000y_2}.
    \end{equation}
    It is easy to verify that $v$ is a strictly convex function in $[0, 1]$, and admits a unique minimum corresponding to
    $
        \vy^* = 
        \left(\frac{611 - 9 \sqrt{3}}{600},
            \frac{9\sqrt{3} - 11}{600}\right),
    $
    which is irrational. Now, suppose that $(\vx^*, \vy^*, \vz^*)$ is a Nash equilibrium of the adversarial team game. We will first argue that $(\vx^*, \vz^*)$ is the unique Nash equilibrium of $\mat{A}(\vy^*)$. Indeed, suppose that there exists $\vx' \in \Delta^2$ such that $\langle \vx', \mat{A}(\vy^*) \vz^* \rangle < \langle \vx^*, \mat{A}(\vy^*) \vz^* \rangle$, or equivalently, $u(\vx', \vy^*, \vz^*) < u(\vx^*, \vy^*, \vz^*)$; this is a contradiction since $(\vx^*, \vy^*, \vz^*)$ is assumed to be a Nash equilibrium. Similar reasoning applies with respect to Player $\vz$. Thus, $(\vx^*, \vz^*)$ is a Nash equilibrium of $\mat{A}(\vy^*)$, and thereby uniquely determined by $\vy^*$---by~\Cref{lemma:2x2} coupled with \eqref{eq:diag-pos1} and~\eqref{eq:diag-pos2}. Furthermore, given the value of $\vy^*$, we get that $\vx^* = \left(\frac{3 - \sqrt{3}}{6}, \frac{3 + \sqrt{3}}{6}\right)$ and $\vz^* = \left(\frac{3 + \sqrt{3}}{6}, \frac{3 -
    \sqrt{3}}{6}\right)$. Now, consider the utility of Player $\vy$ when playing the first action $a_1$ or the second action $a_2$; plugging in the value of $\vx^*$ and $\vz^*$, we have $u(\vx^*, \ve_1, \vz^*) = \frac{578 + 9 \sqrt{3}}{600}$ and $u(\vx^*, \ve_2, \vz^*) = \frac{578 + 9 \sqrt{3}}{600}.$ Since $u(\vx^*, \ve_1, \vz^*) = u(\vx^*, \ve_2, \vz^*)$, $(\vx^*, \vy^*, \vz^*)$ is a Nash equilibrium.
    
    Moreover, suppose there exists another NE $(\vx', \vy' ,\vz')$ that is different from $(\vx^*, \vy^*, \vz^*)$. As shown above, $(\vx', \vz')$ is the unique NE of the zero-sum game induced by $\vy'$. Thus, if we have two different Nash equilibria, it implies that $\vy' \neq \vy^*$. We consider the following three cases:
    \begin{itemize}
        \item First, let $\vy'$ be a (fully) mixed strategy. Since $\vx'$ and $\vz'$ forms the unique NE in of $\mat{A}(\vy')$, we have
        \begin{align*}
            &\vx' = \left(\frac{1 + 300  y'_2}{11 + 600  y'_2}, \frac{10 + 300 y'_2}{11 + 600  y'_2}\right),\\
            &\vz' = \left(\frac{10 + 300  y'_2}{11 + 600  y'_2}, \frac{1+ 300 y'_2}{11 + 600  y'_2}\right).
        \end{align*}
        Further, for Player $\vy$,
        \begin{align*}
            &u(\vx', \ve_1, \vz') = \frac{1199 + 130800 y_2 + 3501000 y_2^2}{10(11 + 600 y_2)^2}, \\
            &u(\vx', \ve_2, \vz') = \frac{589 + 196800 y_2 + 5301000 y_2^2}{10(11 + 600 y_2)^2}.
        \end{align*}
        Since $\vy'$ is a mixed strategy, we have $u(\vx', \ve_1, \vz') = u(\vx', \ve_2, \vz')$; solving the equality we get $
        \vy' = 
        \left(\frac{611 - 9 \sqrt{3}}{600},
            \frac{9\sqrt{3} - 11}{600}\right),
    $ which contradicts the assumption that $\vy' \neq \vy^*$.
    \item If $\vy' = (1, 0)$, we have $u(\vx', \ve_1, \vz') = \frac{1199}{1210}$ and $u(\vx', \ve_2, \vz') = \frac{589}{1210}$. Thus, it follows that by unilaterally deviating to play $(0, 1)$, Player $\vy$ can decrease the utility of the adversary, contradicting the fact that $(\vx', \vy', \vz')$ is a Nash equilibrium.
    \item Finally, suppose that $\vy' = (0, 1)$. Similarly to the second case, we get $u(\vx', \ve_1, \vz') < u(\vx', \ve_2, \vz')$, which is a contradiction.
    \end{itemize}
    Thus, we conclude that $(\vx^*, \vy^*, \vz^*)$ is the unique Nash equilibrium of the 3-player adversarial team game defined above, completing the proof.   
\end{proof}

\subsection{The complexity of determining uniqueness}

A natural question arising from~\Cref{theorem:irrational} concerns the complexity of determining whether an adversarial team game admits a unique Nash equilibrium. Our next theorem establishes \NP-hardness for a version of that problem that accounts for approximate Nash equilibria.

\begin{theorem}
    \label{theorem:uniqueATG}
    For polymatrix, $3$-player adversarial team games, constants $c_1, c_2 > 0$, and $\epsilon = n^{-c_1}$, it is \NP-hard to distinguish between the following two cases:
    \begin{itemize}[noitemsep,topsep=0pt]
        \item any two $\epsilon$-Nash equilibria have $\ell_1$-distance at most $n^{-c_2}$, and
        \item there are two $\epsilon$-Nash equilibria that have $\ell_1$-distance $\Omega(1)$.
    \end{itemize}
\end{theorem}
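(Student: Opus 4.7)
The plan is to reduce from Theorem~\ref{theorem:symmetric-new}, which establishes the analogous $\mathrm{poly}(1/n)$-approximate distinguishing problem for symmetric, identical-interest two-player games, using exactly the polymatrix gadget from Section~\ref{sec:cls}. Given a 2-player instance with common payoff matrix $\mat{A} = \mat{A}^\top$ (shifted so $\mat{A}_{i,j} \leq -1$), I construct the $3$-player adversarial team game defined by~\eqref{eq:util-atg}. Theorem~\ref{theorem:actual} guarantees that this ATG is polymatrix, and Theorem~\ref{thm:teamadvtosymme} already furnishes one half of the equilibrium correspondence: every $\epsilon^2$-Nash equilibrium $(\vx^*,\vy^*,\vz^*)$ of the ATG yields a symmetric $(21n+1)|\Amin|\epsilon$-NE $(\vy^*,\vy^*)$ of the 2-player game.

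For the forward direction (two far 2-player NEs $\Rightarrow$ two far ATG NEs), I would take 2-player $\epsilon'$-NEs $(\vy_1,\vy_1)$ and $(\vy_2,\vy_2)$ with $\|\vy_1-\vy_2\|_1 = \Omega(1)$, and verify directly that the extensions $(\vy_i,\vy_i,\ve_{2n+1})$ are $\epsilon'$-NEs of the ATG at $\ell_1$-distance $2\|\vy_1-\vy_2\|_1 = \Omega(1)$. This is immediate: with $\vx=\vy=\vy_i$ the bracketed terms in~\eqref{eq:util-atg} vanish, so the adversary strictly prefers $\ve_{2n+1}$; and each team player's objective collapses to $\langle\cdot,\mat{A}\vy_i\rangle + |\Amin|$, whose approximate minimization is precisely the 2-player NE condition for $\vy_i$.

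For the reverse direction (contrapositive: two far ATG NEs $\Rightarrow$ two far 2-player NEs), suppose the ATG has two $\epsilon^2$-NEs $(\vx_j,\vy_j,\vz_j)$ with $j\in\{1,2\}$ at $\ell_1$-distance $\Omega(1)$. Lemma~\ref{lemma:small_z} gives $\|\vz_j-\ve_{2n+1}\|_1 = O(n\epsilon)$ and Lemma~\ref{lemma:closeness} gives $\|\vx_j-\vy_j\|_1 = O(n\epsilon)$, so for $\epsilon$ a sufficiently small polynomial in $1/n$ the triangle inequality forces $\|\vy_1-\vy_2\|_1 = \Omega(1)$. Applying Theorem~\ref{thm:teamadvtosymme} to each ATG NE then produces two 2-player $(21n+1)|\Amin|\epsilon$-NEs $(\vy_j,\vy_j)$ at $\ell_1$-distance $\Omega(1)$, contradicting the ``close'' promise on the 2-player instance.

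The main obstacle is parameter bookkeeping. Theorem~\ref{thm:teamadvtosymme} and both Lemmas~\ref{lemma:closeness} and~\ref{lemma:small_z} rely on an $\epsilon^2$-NE hypothesis while amplifying the approximation by factors of $\mathrm{poly}(n,|\Amin|)$. If Theorem~\ref{theorem:symmetric-new} provides hardness at level $(n^{-c_1^{\mathrm{sym}}},n^{-c_2^{\mathrm{sym}}})$, one must pick the ATG exponents $c_1,c_2$ with $c_1$ a sufficiently large constant multiple of $c_1^{\mathrm{sym}}$ (absorbing both the square-root loss in passing from $\epsilon^2$- to $\epsilon$-NE and the polynomial $|\Amin|$ factor) and $c_2$ a small enough fraction of $c_2^{\mathrm{sym}}$ to leave room for the $O(n\sqrt{\epsilon})$ slack between ATG and 2-player distances. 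With these constants calibrated, the reduction runs in polynomial time, preserves the polymatrix structure, and transfers NP-hardness as claimed.
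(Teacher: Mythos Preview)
Your proposal is correct and follows essentially the same approach as the paper: the paper's own proof is a one-line remark that \Cref{theorem:uniqueATG} ``follows immediately by combining~\Cref{theorem:symmetric-new} with the reduction of~\Cref{sec:cls}, and in particular, \Cref{lemma:closeness,lemma:small_z},'' and you have faithfully expanded exactly this outline, supplying both directions of the equilibrium correspondence and the parameter accounting that the paper leaves implicit.
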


We will see the proof of this theorem later in~\Cref{sec:nonsymmetric} when we examine the complexity of computing non-symmetric equilibria in symmetric min-max optimization problems.
\section{Complexity of equilibria in symmetric min-max optimization}

This section characterizes the complexity of computing symmetric first-order Nash equilibria\\ (\Cref{def:FONE}) in symmetric min-max optimization problems in the sense of~\Cref{def:symmetric}; namely, when $f(\vx, \vy) = - f(\vy, \vx)$ for all $(\vx, \vy) \in \calX \times \calY$ and $\calX = \calY$.

\subsection{Problem definitions and hardness results for symmetric equilibria}
\label{sec:symmetric}

Given a continuously differentiable function $f : \mathcal{D} \to \R$, we set $F_{\textrm{GDA}}:\mathcal{D} \to \mathcal{D}$ to be
$$F_{\textrm{GDA}}(\vx,\vy) \defeq \prod_{\mathcal{D}} \left[\vx - \nabla_{\vx}f(\vx,\vy),\vy + \nabla_{\vy}f(\vx,\vy)\right] \textrm{ for }(\vx,\vy)\in \mathcal{D},$$
the norm of which measures the fixed-point gap and corresponds to the update rule of GDA with stepsize equal to one; we recall that Player $\vx$ is the minimizer, while Player $\vy$ is the maximizer. The domain $\mathcal{D}$ is a compact subset of $\R^d$ for some $d \in \mathbb{N}$. Moreover, the projection operator $\prod$ is applied jointly on $\mathcal{D}$.\footnote{This is referred to as the ``safe'' version of GDA because it ensures that the mapping always lies in $\mathcal{D}$. One could also project independently on $\mathcal{D}(\vy)=\{\vx': (\vx',\vy)\in\mathcal{D}\}$ and $\mathcal{D}(\vx)=\{\vy': (\vx,\vy')\in\mathcal{D}\}$; see \citet{DSZ21} for further details and the polynomial equivalence for finding fixed points for both versions.} When $\mathcal{D}$ can be expressed as a \emph{Cartesian} product $\calX\times\calY$, the domain set is called \emph{uncoupled} (and the projection can happen independently), otherwise it is called \emph{coupled/joint}.

We begin by introducing the problem of computing fixed points of gradient descent/ascent (GDA) for domains expressed as the Cartesian product of polytopes, modifying the computational problem $\gdaFixed$ introduced by~\citet{DSZ21}.

\begin{nproblem}[\gdaFixed]
  \textsc{Input:} 
  \begin{itemize}
  \item Precision parameter $\epsilon > 0$ and smoothness parameter $L$,
  \item Polynomial-time Turing
  machine $\calC_f$ evaluating a $L$-smooth function $f : \mathcal{X} \times \mathcal{Y} \to \R$ and its gradient
  $\nabla f: \calX\times \calY \to \R^{d}$, where
   $\mathcal{X} = \{\vx:\matA_x \vx \leq \vecb_x\}$ and $\mathcal{Y} = \{\vy:\matA_y \vy \leq \vecb_y\}$ are nonempty, bounded polytopes described by input matrices
  $\matA_x \in \R^{m_x \times d_x}$, $\matA_y \in \R^{m_y \times d_y}$ and vectors $\vecb_x \in \R^{m_x}, \vecb_y \in \R^{m_y}$, with $d \defeq d_x + d_y$.
\end{itemize}
  \noindent \textsc{Output:} A point
  $(\vxstar,\vystar)\in \calX\times\calY$ such that
  $\norm{(\vx^*, \vy^*) - F_{GDA}(\vx^*,\vy^*)}_2 \leq \epsilon$.
\end{nproblem}

Based on $\gdaFixed$, we introduce the problem $\symgdaFixed$, which captures the problem of computing \emph{symmetric} (approximate) fixed points of GDA for symmetric min-max optimization problems. We would like to note that we define our computational problems as promise problems.
  
\begin{nproblem}[\symgdaFixed]
 \textsc{Input:}    \begin{itemize}
  \item Precision parameter $\epsilon > 0$ and smoothness parameter $L$, 
  \item Polynomial-time Turing
  machine $\calC_f$ evaluating a $L$-smooth, antisymmetric function $f : \mathcal{X} \times \mathcal{X} \to \R$ and its gradient
  $\nabla f: \calX\times \calX \to \R^{2d}$, where
  $\mathcal{X} = \{\vx:\matA \vx \leq \vecb\}$ is a nonempty, bounded polytope described by an input matrix
  $\matA \in \R^{m \times d}$ and vector $\vecb \in \R^{m}$.
\end{itemize}

  \noindent \textsc{Output:} A point
  $(\vxstar,\vxstar)\in \calX\times\calX$ such that
  $\norm{(\vx^*, \vx^*) - F_{GDA}(\vx^*,\vx^*)}_2 \leq \epsilon$.
\end{nproblem}

We start by showing that the problem $\symgdaFixed$ also lies in $\PPAD$; the fact that $\gdaFixed$ is in $\PPAD$---even under coupled domains---was shown to be the case by~\citet{DSZ21}.

\begin{lemma}\label{lem:membership}
  $\symgdaFixed$ is a total
search problem and lies in \PPAD.
  \end{lemma}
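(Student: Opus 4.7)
The plan is to establish both totality and \PPAD-membership by reducing $\symgdaFixed$ to the coupled-domain version of $\gdaFixed$, whose $\PPAD$ inclusion is due to \citet{DSZ21}. The linchpin is that antisymmetry of $f$ forces the GDA vector field to preserve the diagonal of $\calX \times \calX$: differentiating $f(\vx, \vy) = -f(\vy, \vx)$ in the first argument and specializing to $\vy = \vx$ yields
\begin{equation*}
    \nabla_{\vx} f(\vx, \vx) = -\nabla_{\vy} f(\vx, \vx).
\end{equation*}
Since the joint projection onto $\calX \times \calX$ factorizes, this immediately implies
\begin{equation*}
    F_{\textrm{GDA}}(\vx, \vx) = (G(\vx), G(\vx)), \qquad G(\vx) \defeq \Pi_{\calX}\bigl(\vx - \nabla_{\vx} f(\vx, \vx)\bigr),
\end{equation*}
so symmetric (approximate) fixed points of $F_{\textrm{GDA}}$ are in bijection (up to a $\sqrt{2}$ factor in the error) with (approximate) fixed points of the continuous self-map $G$ on $\calX$.

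For totality, I would invoke Brouwer's theorem on $G : \calX \to \calX$: it is a continuous self-map of a nonempty compact convex set, and therefore admits an exact fixed point $\vx^\ast$, which lifts to an exact symmetric fixed point $(\vx^\ast, \vx^\ast)$ of $F_{\textrm{GDA}}$ and, a fortiori, to an $\epsilon$-approximate one for any $\epsilon > 0$.

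For $\PPAD$-membership, the reduction constructs the coupled polytope
\begin{equation*}
    \calD \defeq \bigl\{(\vx, \vy) \in \calX \times \calX : \vx = \vy\bigr\},
\end{equation*}
which inherits nonemptiness and boundedness from $\calX$ and admits the polynomial-size description obtained by stacking the constraints of $\calX$ twice and adjoining $\vx - \vy = \vzero$; then it feeds $(f, \calD, \epsilon, L)$ to the coupled-domain $\gdaFixed$ oracle of~\citet{DSZ21}. Any returned point $(\vx^\ast, \vy^\ast) \in \calD$ is symmetric by definition of $\calD$. It remains to verify that an $\epsilon$-fixed-point under $\Pi_{\calD}$ is also an $\epsilon$-fixed-point under the factorized $\Pi_{\calX \times \calX}$ required by $\symgdaFixed$. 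This is where the antisymmetry identity pays off: the argument to both projections, namely $\bigl(\vx^\ast - \nabla_{\vx} f(\vx^\ast, \vx^\ast), \vx^\ast + \nabla_{\vy} f(\vx^\ast, \vx^\ast)\bigr)$, already lies on the diagonal of $\R^d \times \R^d$, on which the two projections coincide and both equal $(G(\vx^\ast), G(\vx^\ast))$.

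The main obstacle is not conceptual but bookkeeping: one must verify that the encoding of $\calD$ stays within the input format expected by $\gdaFixed$, that the smoothness parameter $L$ passes through unchanged (it does, since we reuse the same function $f$), and, crucially, that no approximation slack is lost when translating between the coupled projection $\Pi_{\calD}$ and the factorized projection $\Pi_{\calX \times \calX}$. The antisymmetry identity is precisely what guarantees that these two projections agree \emph{exactly}, not merely approximately, on the diagonal points produced by the GDA update, so the approximation parameter $\epsilon$ can be passed verbatim between the two problems.
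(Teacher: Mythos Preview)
Your proof is correct and rests on the same key observation as the paper: antisymmetry forces $\nabla_{\vx} f(\vx,\vx) = -\nabla_{\vy} f(\vx,\vx)$, so the GDA dynamics on the diagonal collapse to a single self-map $G : \calX \to \calX$, $G(\vx) = \Pi_{\calX}(\vx - \nabla_{\vx} f(\vx,\vx))$, whose (approximate) fixed points are exactly the symmetric solutions sought.

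The packaging, however, differs. You route the reduction through the coupled-domain $\gdaFixed$ of \citet{DSZ21}, encoding the diagonal as a joint polytope $\calD = \{(\vx,\vy) : \vx = \vy,\ \vx \in \calX\}$ and checking that, on diagonal inputs, $\Pi_{\calD}$ and $\Pi_{\calX \times \calX}$ coincide so no approximation is lost. The paper instead names the map $G$ (their $M$) directly, observes it is $(L+1)$-Lipschitz and polynomial-time computable, and invokes the result of \citet{Etessami10:On} (also \citet{Fearnley23:Complexity}) that approximate Brouwer fixed points on a bounded polytope lie in $\PPAD$. Your route is slightly more indirect---DSZ21's membership proof ultimately unwinds to the same Brouwer argument---but it has the virtue of staying entirely within the $\gdaFixed$ formalism already set up in the paper, and it makes the relationship to the coupled-domain setting explicit. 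The paper's route is shorter and exposes the one-variable Brouwer map, which is also what drives the totality claim.
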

\begin{proof}  
We first define the function (as in~\Cref{lem:exists}) $M: \calX \to \calX$ as
\begin{equation*}
M(\vx') \defeq \prod_{\calX} \left[\vx' - \nabla _{\vx}  f(\vx,\vy)\Big|_{(\vx,\vy)=(\vx',\vx')} \right],
\end{equation*}
where we recall that $\Pi$ is the projection operator on $\calX.$ Assuming that the input function $f$ is $L$-smooth, it follows that $M(\vx')$ is $(L+1)$-Lipschitz. Furthermore, projecting on the polytope $\calX$ takes polynomial time, and so $M$ is polynomial-time computable. As a result, we can use~\citet[Proposition 2, part 2]{Etessami10:On} (see also~\citet[Proposition D.1]{Fearnley23:Complexity}), where it was shown that finding an $\epsilon$-approximate fixed point of a Brouwer function that is efficiently computable and continuous, when the domain is a bounded polytope, lies in \PPAD.
\end{proof}

Having established that $\symgdaFixed$ belongs in \PPAD, we now prove the first main hardness result of this section.

\begin{theorem}[Complexity for symmetric equilibrium]\label{thm:symmetricminmax} 
$\symgdaFixed$ is \PPAD-complete, even for quadratic functions.
\end{theorem}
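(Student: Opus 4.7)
Membership of $\symgdaFixed$ in \PPAD{} is already provided by \Cref{lem:membership}, so the task reduces to establishing \PPAD-hardness, which I would do by a direct reduction from the problem of computing a symmetric $\nicefrac{1}{n^c}$-Nash equilibrium in a symmetric two-player bimatrix game, shown to be \PPAD-hard in \Cref{theorem:PPAD_for _symmetric}. Given an instance $(\mat{R}, \mat{R}^\top)$ on $\Delta^n \times \Delta^n$ (with $\mat{R}$ having polynomially bounded rational entries), I would construct the quadratic function
\begin{equation*}
f(\vx, \vy) \defeq \vx^\top (\mat{R}^\top - \mat{R}) \vy - \vx^\top \mat{R} \vx + \vy^\top \mat{R} \vy,
\end{equation*}
defined on the uncoupled product $\calX \times \calX$ with $\calX \defeq \Delta^n$. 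A one-line calculation swapping $\vx \leftrightarrow \vy$ confirms that $f$ is antisymmetric (the bilinear term flips sign because $(\mat{R}^\top - \mat{R})$ is skew-symmetric, and the pure-quadratic terms flip by construction), and $f$ is $L$-smooth with $L = O(\| \mat{R} \|)$, which is polynomial in the input size.

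Next I would verify that the symmetric fixed-point condition on $f$ collapses precisely to the symmetric-NE condition for $(\mat{R},\mat{R}^\top)$. Computing gradients,
\begin{equation*}
\nabla_\vx f(\vx,\vy) = (\mat{R}^\top - \mat{R})\vy - (\mat{R} + \mat{R}^\top)\vx,
\qquad
\nabla_\vy f(\vx,\vy) = (\mat{R} - \mat{R}^\top)\vx + (\mat{R} + \mat{R}^\top)\vy,
\end{equation*}
and evaluating at $(\vx^*,\vx^*)$ gives $\nabla_\vx f(\vx^*,\vx^*) = -2\mat{R}\vx^*$ and $\nabla_\vy f(\vx^*,\vx^*) = 2\mat{R}\vx^*$. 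Hence both coordinates of $F_{\mathrm{GDA}}(\vx^*,\vx^*)$ collapse to $\Pi_{\Delta^n}[\vx^* + 2\mat{R}\vx^*]$, so the symmetric $\epsilon$-fixed-point condition is equivalent to $\| \vx^* - \Pi_{\Delta^n}[\vx^* + 2\mat{R}\vx^*] \|_2 \le \epsilon / \sqrt{2}$.

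The remaining step is the (standard) translation from an approximate projected-gradient fixed point to an approximate best-response condition. Writing $\vp \defeq \Pi_{\Delta^n}[\vx^* + 2\mat{R}\vx^*]$ and applying the first-order optimality of the projection, for every $\vx \in \Delta^n$ we have $\langle \vx - \vp,\, (\vx^* - \vp) + 2\mat{R}\vx^* \rangle \le 0$; using $\|\vp - \vx^*\|_2 \le \epsilon/\sqrt{2}$, the diameter of the simplex, and Cauchy--Schwarz, this rearranges into
\begin{equation*}
\langle \vx - \vx^*, \mat{R}\vx^* \rangle \le O(\epsilon \cdot \|\mat{R}\|),
\end{equation*}
which is exactly the statement that $\vx^*$ is a symmetric $O(\epsilon \|\mat{R}\|)$-Nash equilibrium of $(\mat{R}, \mat{R}^\top)$. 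Thus choosing $\epsilon = 1/n^{c'}$ for a sufficiently large constant $c'$ (depending on $c$ and the polynomial bound on $\|\mat{R}\|$) yields the desired $\nicefrac{1}{n^c}$-hardness from \Cref{theorem:PPAD_for _symmetric}, and therefore \PPAD-completeness of $\symgdaFixed$ with a quadratic, antisymmetric objective on a product of simplexes.

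The only delicate point is the last bookkeeping step---tracking approximation parameters across the reduction and ensuring that the slack introduced by the projection argument remains within $1/\poly(n)$. This is routine but must be stated carefully because \Cref{theorem:PPAD_for _symmetric} is calibrated at the $\nicefrac{1}{n^c}$ regime for arbitrary constant $c>0$, so the reduction needs to preserve polynomial approximations rather than only inverse-exponential ones; the fact that $\|\mat{R}\|$ is polynomial in the input description (and our construction does not inflate it) is what makes this go through.
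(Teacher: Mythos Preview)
Your proposal is correct and follows essentially the same approach as the paper: the quadratic function you construct is precisely twice the paper's $f(\vx,\vy)=\tfrac12\vy^\top\mat{A}\vy-\tfrac12\vx^\top\mat{A}\vx+\vy^\top\mat{C}\vx$ with $\mat{A}=\tfrac12(\mat{R}+\mat{R}^\top)$ and $\mat{C}=\tfrac12(\mat{R}-\mat{R}^\top)$, and your direct projection-optimality argument plays the role of the paper's invocation of \Cref{lem:approxsmooth}. The parameter tracking you flag as ``delicate'' is handled identically in the paper (yielding hardness for $\epsilon\le n^{-1/2-c}$), so your sketch matches it in all essential respects.
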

\begin{proof}
We $P$-time reduce the problem of finding approximate symmetric NE in two-player symmetric games to $\symgdaFixed$.  
Given any two-player symmetric game with payoff matrices $(\mat{R},\mat{R}^{\top})$ of size $n\times n$, we set 
\begin{equation}\label{eq:matrices}
\mat{A} \defeq \frac{1}{2} \left(\mat{R}+\mat{R}^{\top}\right) \textrm{ (symmetric matrix) and } 
\mat{C} \defeq \frac{1}{2}\left(\mat{R}-\mat{R}^{\top}\right) \textrm{(skew-symmetric matrix)}.
\end{equation}

\noindent We define the \emph{quadratic}, antisymmetric function
\begin{equation}
    \label{eq:hard-quad}
 f(\vx,\vy) \defeq \frac{1}{2} \langle \vy, \mat{A}\vy \rangle - \frac{1}{2} \langle \vx, \mat{A}\vx \rangle + \langle \vy, \mat{C}\vx \rangle   
\end{equation}
with domain $\Delta^n \times \Delta^n$. Indeed, to see that $f$ is antisymmetric, one can observe that $$f(\vy,\vx) = \frac{1}{2} \langle \vx, \mat{A} \vx \rangle - \frac{1}{2} \langle \vy, \mat{A} \vy \rangle + \langle \vx, \mat{C} \vy \rangle = \frac{1}{2} \langle \vx, \mat{A}\vx \rangle -\frac{1}{2} \langle \vy, \mat{A} \vy \rangle - \langle \vy, \mat{C}^{\top} \vx \rangle = -f(\vx,\vy).$$
Assuming that all entries of $\mat{R}$ lie in $[-1,1],$ it follows that the singular values of $\mat{A}$ and $\mat{C}$ are bounded by $n.$ As a result $f$ and $\nabla_{\vx} f = -\mat{A}\vx-\mat{C}\vy, \nabla_{\vy} f = \mat{A}\vy +\mat{C}\vx$ are polynomial time computable and continuous, and $\nabla_{\vx} f, \nabla_{\vy}f$ are $L$-Lipschitz for $L \leq 2n,$ thus $f$ is $4n$-smooth.

We assume $\vx$ is the minimizer and $\vy$ is the maximizer, and let $(\vx^*,\vx^*)$ be an $\epsilon$-approximate fixed point of GDA. We shall show that $(\vx^*,\vx^*)$ is an $4n\epsilon$-approximate NE of the symmetric two-player game $(\mat{R},\mat{R}^{\top})$. Since $(\vx^*,\vx^*)$ is an $\epsilon$-approximate fixed point of GDA, we can use~\Cref{lem:approxsmooth} (\Cref{sec:proofs2}) and obtain the following variational inequalities:
\begin{equation*}
\max_{\vx^*+\bm{\delta} \in \Delta^n, \norm{\bm{\delta}}_2\leq 1} \bm{\delta}^{\top} (\mat{A}\vx^*+\mat{C}\vx^*)\leq \epsilon\left(2n+1\right),
\end{equation*}
implying that (since the diameter of $\Delta^n$ is $\sqrt{2}$ in $\ell_2$)
\begin{equation}
\tag{VI for NE}\label{eq:VIforNE}
\langle \vx - \vx^*,(\mat{A}+\mat{C})\vx^*  \rangle \leq \sqrt{2}\epsilon\left(2n+1\right) \textrm{ for any }\vx\in\Delta^n. 
\end{equation}

Now, we observe that \eqref{eq:VIforNE} implies that $(\vx^*,\vx^*)$ is a $\sqrt{2}\epsilon\left(2n+1\right)$-approximate symmetric NE in the two-player symmetric game with payoff matrices $(\mat{A}+\mat{C},\mat{A}-\mat{C})$ (recall Definition \eqref{def:NE}). Since $\sqrt{2}\epsilon\left(2n+1\right) \leq 4n\epsilon$ for $n\geq 2$, our claim follows.

By~\Cref{theorem:PPAD_for _symmetric} and~\Cref{lem:membership}, we conclude that $\symgdaFixed$ is \PPAD-complete, even for quadratic functions that are $O(n)$-smooth, $O(n)$-Lipschitz and $\epsilon \leq \nicefrac{1}{n^{1+c}}$, for any $c>0$.
\end{proof}

For symmetric first-order Nash equilibria, the same argument establishes \PPAD-hardness for any $\epsilon \leq \nicefrac{1}{n^c}$, where $c > 0$ (as claimed in~\Cref{theorem:main}). Moreover, leveraging the hardness result of~\citet{Rubinstein16:Settling}, we can also immediately obtain constant inapproximability under the so-called \emph{exponential-time hypothesis (ETH)} for \PPAD---which postulates than any algorithm for solving \textsc{EndOfALine}, the prototypical \PPAD-complete problem, requires $2^{\tilde{\Omega}(n)}$ time.

\begin{corollary}
    \label{cor:constant}
    Computing an $\Theta(1)$-approximate first-order Nash equilibrium in symmetric $n$-dimensional min-max optimization requires $n^{\tilde{\Omega}(\log n)}$ time, assuming ETH for \PPAD.
\end{corollary}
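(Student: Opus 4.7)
The plan is to chain Rubinstein's quasi-polynomial hardness for two-player games with the reduction already established in the proof of~\Cref{thm:symmetricminmax}. Recall that, under ETH for \PPAD, there exists a universal constant $\epsilon_0 > 0$ such that finding an $\epsilon_0$-Nash equilibrium in a bimatrix game with $n$ actions per player requires $n^{\tilde{\Omega}(\log n)}$ time~\citep{Rubinstein16:Settling}. The first step is to transfer this lower bound to \emph{symmetric} $\epsilon_0'$-NE in \emph{symmetric} two-player games (for some constant $\epsilon_0'>0$) by way of the standard symmetrization technique cited in the discussion preceding~\Cref{theorem:PPAD_for _symmetric}. This symmetrization produces a game of size polynomial in $n$ whose symmetric $\epsilon_0'$-NE translate back into $\epsilon_0$-NE of the original bimatrix instance, thereby preserving the additive constant approximation.

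The second step is to invoke the quadratic reduction from the proof of~\Cref{thm:symmetricminmax}. Given the resulting symmetric game with payoff matrix $\mat{R}\in [-1,1]^{n\times n}$, we form the antisymmetric quadratic $f$ defined in~\eqref{eq:hard-quad} over $\Delta^n\times\Delta^n$ with $\mat{A}, \mat{C}$ as in~\eqref{eq:matrices}. The crucial observation (already implicit in the proof of~\Cref{thm:symmetricminmax}) is that at any symmetric point $(\vx^*,\vx^*)$ one has
\begin{equation*}
\nabla_{\vx} f(\vx^*,\vx^*) = -(\mat{A}+\mat{C})\vx^* = -\mat{R}\vx^*, \qquad \nabla_{\vy} f(\vx^*,\vx^*) = (\mat{A}+\mat{C})\vx^* = \mat{R}\vx^*.
\end{equation*}
Plugging these into~\Cref{def:FONE}, the $\epsilon_0'$-first-order Nash condition at $(\vx^*,\vx^*)$ is \emph{exactly} the $\epsilon_0'$-symmetric NE condition of~\Cref{def:NE} for $(\mat{R},\mat{R}^\top)$. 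In particular, no $\sqrt{n}$ smoothness factor is paid in this direction: unlike the conversion from approximate fixed points of GDA, the first-order definition already matches the VI formulation verbatim, which is what allows us to preserve a \emph{constant} approximation.

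Combining the two steps, any algorithm computing a symmetric $\epsilon_0'$-first-order Nash equilibrium of the $O(\sqrt{n})$-smooth antisymmetric quadratic $f$ would, via a polynomial-time post-processing, yield a symmetric $\epsilon_0'$-NE of the associated symmetric two-player game and hence an $\epsilon_0$-NE of Rubinstein's bimatrix instance. This contradicts the ETH-for-\PPAD lower bound unless the algorithm takes $n^{\tilde{\Omega}(\log n)}$ time, proving the corollary. The only delicate point I anticipate is confirming that the symmetrization step preserves the constant additive approximation (with at most a constant-factor loss) rather than introducing a multiplicative blowup depending on the payoff range; this is handled by the explicit shift-and-pad construction used in the symmetrization, but should be stated precisely when the argument is written up in full.
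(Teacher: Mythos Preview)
Your proposal is correct and follows exactly the approach the paper has in mind: the paper states this corollary as an ``immediate'' consequence of combining Rubinstein's ETH-for-\PPAD lower bound with the reduction of \Cref{thm:symmetricminmax}, and you have correctly fleshed out the key observation that at a symmetric point the first-order Nash condition coincides verbatim with the symmetric NE condition for $(\mat{R},\mat{R}^\top)$, so no dimension-dependent loss is incurred. Your flagged caveat about the symmetrization step preserving constant approximation is apt and is the only detail one should spell out carefully in a full write-up.
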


\begin{remark}[Comparison with \citet{DSZ21}] The argument of~\Cref{thm:symmetricminmax} can be slightly modified to imply one of the main results of~\citet{DSZ21}---with simplex constraints instead of box constraints. We provide a simple proof of this fact below (\Cref{thm:simple}). The main idea is to introduce coupled constraints in order to \emph{force symmetry}, that is, constraints of the form $-\delta \leq x_i - y_i \leq \delta$ for all $i \in [n]$, where, if $\epsilon$ is the approximation accuracy, $\delta$ is of order $\Theta\left(\epsilon^{1/4}\right)$. Our result pertaining to symmetric equilibria is stronger in that it accounts for deviations in the whole domain, not merely on the coupled feasibility set. 
\end{remark}

\begin{theorem}[\PPAD-hardness for coupled domains]\label{thm:simple} The problem $\gdaFixed$ is \PPAD-hard when the domain is a joint polytope, even for quadratic functions.
\end{theorem}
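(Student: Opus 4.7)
The plan is to reduce from symmetric two-player Nash equilibrium computation (PPAD-hard by \Cref{theorem:PPAD_for _symmetric}) by repurposing the construction of \Cref{thm:symmetricminmax}, but replacing the product domain $\Delta^n\times\Delta^n$ with a joint polytope that enforces near-symmetry between the two players. Given a symmetric two-player game with payoffs $(\mat{R},\mat{R}^\top)$ whose entries lie in $[-1,1]$, I would take $\mat{A}\defeq\frac{1}{2}(\mat{R}+\mat{R}^\top)$ and $\mat{C}\defeq\frac{1}{2}(\mat{R}-\mat{R}^\top)$, reuse the same antisymmetric quadratic $f(\vx,\vy)\defeq\frac{1}{2}\langle\vy,\mat{A}\vy\rangle-\frac{1}{2}\langle\vx,\mat{A}\vx\rangle+\langle\vy,\mat{C}\vx\rangle$, and impose the coupled domain
\[
\mathcal{D}_\delta\defeq\{(\vx,\vy)\in\Delta^n\times\Delta^n: -\delta\le x_i-y_i\le \delta\ \text{for all }i\in[n]\},
\]
for a parameter $\delta>0$ to be fixed at the end. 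Note that $\mathcal{D}_\delta$ is a nonempty bounded polytope (it contains every $(\vx',\vx')$), so it is a valid instance of \gdaFixed.

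\textbf{Extracting a symmetric Nash equilibrium.} Let $(\vxstar,\vystar)\in\mathcal{D}_\delta$ be an $\epsilon$-approximate fixed point of the joint-projection GDA map. Combining the first-order optimality characterization of the projection with the fact that the gradient map $F(\vx,\vy)=(-\mat{A}\vx-\mat{C}\vy,\mat{A}\vy+\mat{C}\vx)$ is bounded in norm by $O(\sqrt{n})$ on $\Delta^n\times\Delta^n$ (as in the proof of \Cref{thm:symmetricminmax}, via \Cref{lem:approxsmooth}), I obtain the joint variational inequality
\[
\langle(\vx,\vy)-(\vxstar,\vystar),\,(\nabla_{\vx}f(\vxstar,\vystar),-\nabla_{\vy}f(\vxstar,\vystar))\rangle\ge -O(\sqrt{n}\,\epsilon)\quad\forall (\vx,\vy)\in\mathcal{D}_\delta.
\]
The crucial step is to specialize this to the \emph{symmetric} test point $(\vx,\vy)=(\vx',\vx')$ with $\vx'\in\Delta^n$ arbitrary, which is always feasible in $\mathcal{D}_\delta$. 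Expanding the gradients and substituting $\vystar=\vxstar+\vepsilon$ with $\|\vepsilon\|_\infty\le\delta$ (forced by the coupling constraint), the $\vepsilon$-free part of the right-hand side collapses to $-2\langle\vx'-\vxstar,(\mat{A}+\mat{C})\vxstar\rangle$, while the $\vepsilon$-dependent remainder is bounded by $O(n\delta)$ using $\|\mat{A}\|_2,\|\mat{C}\|_2\le\sqrt{n}$ and $\|\vx'-\vxstar\|_2\le\sqrt{2}$. Since $\mat{A}+\mat{C}=\mat{R}$, this yields
\[
\langle \vx'-\vxstar,\,\mat{R}\vxstar\rangle\le O(\sqrt{n}\,\epsilon)+O(n\,\delta)\quad\forall\vx'\in\Delta^n,
\]
which is exactly the statement that $(\vxstar,\vxstar)$ is an $O(\sqrt{n}\epsilon+n\delta)$-approximate symmetric Nash equilibrium of $(\mat{R},\mat{R}^\top)$.

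\textbf{Balancing parameters and the main obstacle.} Choosing $\delta$ and $\epsilon$ both inverse-polynomial in $n$ (the remark's scaling $\delta=\Theta(\epsilon^{1/4})$ is one such choice), $\sqrt{n}\epsilon+n\delta$ remains in the $1/\text{poly}(n)$ regime where \Cref{theorem:PPAD_for _symmetric} guarantees hardness, completing the reduction. The main technical obstacle is the joint-projection analysis: unlike in the uncoupled case, the deviations available to a single player are constrained by the coupling, and one must verify that the comparatively narrow family of \emph{symmetric} deviations $(\vx',\vx')$ already suffices to certify Nash-type optimality of $\vxstar$. The calculation works precisely because the symmetric test direction causes the two gradient contributions $\nabla_{\vx}f$ and $-\nabla_{\vy}f$ to add up (rather than partially cancel), folding both $\mat{A}$ and $\mat{C}$ into $\mat{R}$; the approximate equality $\vxstar\approx\vystar$ forced by $\mathcal{D}_\delta$ then controls the residual error and determines the precise $\delta$-vs.-$\epsilon$ trade-off.
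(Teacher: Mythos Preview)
Your proposal is correct and uses the same construction as the paper (same quadratic $f$ and same coupled polytope $\mathcal{D}_\delta$), but the extraction step is genuinely different. The paper first converts the joint-projection fixed point into two \emph{per-player} variational inequalities on the slices $\mathcal{D}(\vx^*)$ and $\mathcal{D}(\vy^*)$ via \Cref{lem:safe} (incurring a $\sqrt{\epsilon}$ loss), averages them to get a VI for $\frac{\vxstar+\vystar}{2}$ valid only on the narrow set $\overline{\mathcal{D}}=\{\vz:\|\vz-\frac{\vxstar+\vystar}{2}\|_\infty\le\delta/2\}$, and then uses a ray-scaling argument (picking up a $1/\delta$ factor) to extend to arbitrary pure strategies; this is what forces the balance $\delta=\Theta(\epsilon^{1/4})$ and the final $O(n^{3/4}\epsilon^{1/4})$ error. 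You instead work directly with the \emph{joint} variational inequality coming from the joint projection and test it only at symmetric points $(\vx',\vx')$, which are feasible in $\mathcal{D}_\delta$ for \emph{every} $\vx'\in\Delta^n$; the two gradient pieces then combine into $-2\langle \vx'-\vxstar,\mat{R}\vxstar\rangle$ plus an $O(\poly(n)\,\delta)$ remainder, and you output $\vxstar$ rather than the midpoint. Your route bypasses both \Cref{lem:safe} and the scaling step, giving a cleaner error of the form $O(\sqrt{n}\,\epsilon+\poly(n)\,\delta)$ (so the $\delta=\Theta(\epsilon^{1/4})$ choice you mention is not actually needed in your argument---any small enough polynomial $\delta$ works). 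The paper's route, in exchange, explicitly establishes per-player first-order conditions on $\mathcal{D}(\vx^*),\mathcal{D}(\vy^*)$, which is closer to the standard coupled first-order NE definition and ties into the \citet{DSZ21} machinery. One minor caveat: your invocation of \Cref{lem:approxsmooth} is for $\Delta^n$ and a genuine gradient, whereas here you need it for the polytope $\mathcal{D}_\delta$ and a monotone-type operator; the underlying projection inequality still goes through, but you should state that explicitly rather than citing the lemma verbatim.
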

\begin{proof}
The proof follows similar steps with \Cref{thm:symmetricminmax}, namely, we $P$-time reduce the problem of finding approximate symmetric NE in two-player symmetric games to $\gdaFixed$ with coupled domains.  
Given a two-player symmetric game with payoff matrices $(\mat{R},\mat{R}^{\top})$ of size $n\times n$, we set 
$\mat{A} \defeq \frac{1}{2} \left(\mat{R}+\mat{R}^{\top}\right)$, 
$\mat{C} \defeq \frac{1}{2}\left(\mat{R}-\mat{R}^{\top}\right)$ and define again the quadratic, antisymmetric function
\begin{equation*}
 f(\vx,\vy) \defeq \frac{1}{2} \langle \vy, \mat{A}\vy \rangle - \frac{1}{2} \langle \vx, \mat{A}\vx \rangle + \langle \vy, \mat{C}\vx \rangle.   
\end{equation*}
Moreover, given a parameter $\delta > 0$ (to be specified shortly), we define the joint  domain of $f$ to be
\begin{equation}\label{eq:polytope}\tag{joint Domain}
\mathcal{D} := \left\{(\vx,\vy) \in \Delta^n \times \Delta^n: -\delta\leq x_i - y_i \leq \delta \textrm{ for all }i\in[n]\right\}.
\end{equation}
Let $(\vx^*,\vy^*)$ be an $\epsilon$-approximate fixed point of GDA. We will show that $\left(\frac{\vx^*+\vy^*}{2},\frac{\vx^*+\vy^*}{2}\right)$ is an $O(\epsilon^{1/4})$-approximate (symmetric) NE of the game $(\mat{R},\mat{R}^{\top})$ for an appropriate choice of $\delta.$ 

\smallskip

We set $\mathcal{D}(\vx^*) = \{\vy:(\vx^*,\vy)\in\mathcal{D}\}$ and 
$\mathcal{D}(\vy^*) = \{\vx:(\vx,\vy^*)\in\mathcal{D}\}.$
In words, $\mathcal{D}(\vx^*)$ and $\mathcal{D}(\vy^*)$ capture the allowed deviations for $\vy$ and $\vx$ respectively. It also holds that $f$ is $G$-Lipschitz continuous with $G=4n$ and also $4n$-smooth (using the same reasoning as in Theorem~\ref{thm:symmetricminmax}). 

\smallskip
\noindent Since $(\vx^*,\vy^*)$ is an $\epsilon$-approximate fixed point of GDA, using~\Cref{lem:safe} (\Cref{sec:proofs2}), the following VIs must hold for some positive constant $K<10$ and $n$ sufficiently large:

\begin{equation}
\label{eq:VIforcoupled}
\begin{array}{cc}
\langle \vx - \vx^*,-\mat{A}\vx^*+\mat{C}^{\top}\vy^*  \rangle \geq -Kn^{3/2}\sqrt{\epsilon} \textrm{ for any }\vx \in \mathcal{D}(\vy^*) \textrm{ and }\\ 
\langle \vy - \vy^*,\mat{A}\vy^*+\mat{C}\vx^*  \rangle \leq Kn^{3/2}\sqrt{\epsilon} \textrm{ for any }\vy \in \mathcal{D}(\vx^*).
\end{array}
\end{equation}

\noindent Let $\overline{\mathcal{D}} = \left\{\vz \in \Delta^n: \left\|\vz-\frac{\vx^*+\vy^*}{2}\right\| _{\infty}\leq \frac{\delta}{2}\right\}.$ By triangle inequality, it follows that 
$\overline{\mathcal{D}} \subseteq \mathcal{D}(\vy^*)\cap\mathcal{D}(\vx^*).$
We express the VIs of \eqref{eq:VIforcoupled} using a single variable $\vz$ and common deviation domain:
\begin{equation*}
\label{eq:VIforcoupledz}
\begin{array}{cc}
\langle \vz - \vx^*,-\mat{A}\vx^*+\mat{C}^{\top}\vy^*  \rangle \geq -Kn^{3/2}\sqrt{\epsilon} \textrm{ and } 
\langle \vz - \vy^*,\mat{A}\vy^*+\mat{C}\vx^*  \rangle \leq Kn^{3/2}\sqrt{\epsilon} \textrm{ for any }\vz \in \overline{\mathcal{D}}.
\end{array}
\end{equation*}
Multiplying the first inequality by $-1/2$ and the second with $1/2$ and adding them up gives
\begin{equation}
\label{eq:touse}
\left\langle \vz - \frac{\vx^*+\vy^*}{2}, (\mat{A}+\mat{C})\frac{\vx^*+\vy^*}{2} \right\rangle \leq \frac{1}{4}\left\langle \vx^*-\vy^*,\mat{A}(\vx^*-\vy^*)\right\rangle + Kn^{3/2}\sqrt{\epsilon}. 
\end{equation}
Since $\vx^*,\vy^* \in \mathcal{D}$, it follows that $\left\langle \vx^*-\vy^*,\mat{A}(\vx^*-\vy^*)\right\rangle \leq n \|\vx^*-\vy^*\|^2_2 \leq n^2 \delta^2$. 
Combining this fact with \eqref{eq:touse}, we conclude that
\begin{equation}
\tag{VImedian}
\label{eq:lastVINE}
\left\langle \vz - \frac{\vx^*+\vy^*}{2}, (\mat{A}+\mat{C})\frac{\vx^*+\vy^*}{2} \right\rangle \leq n^2 \delta^2 + Kn^{3/2}\sqrt{\epsilon} \textrm{ for any }\vz \in \overline{\mathcal{D}}. 
\end{equation}

\eqref{eq:lastVINE} shows that by deviating from $\left(\frac{\vx^*+\vy^*}{2},\frac{\vx^*+\vy^*}{2}\right)$ to some $\vz$ in $\overline{\mathcal{D}}$, the payoff cannot increase by more than $\left(n^2\delta^2 + Kn^{3/2}\sqrt{\epsilon}\right)$ in the two-player symmetric game with matrices $(\mat{R},\mat{R}^{\top})$.

\noindent We consider any pure strategy $\bm{e}_j$ for $j\in[n]$. If $\left\|\bm{e}_j - \frac{\vx^*+\vy^*}{2}\right\|_{\infty} \leq \frac{\delta}{2}$ then $\bm{e}_j \in \overline{\mathcal{D}}$ 
and it is captured by \eqref{eq:lastVINE}. Suppose that 
$\left\|\bm{e}_j - \frac{\vx^*+\vy^*}{2}\right\|_{\infty} > \frac{\delta}{2}$
and consider the point $\vz' \in \overline{\mathcal{D}}$ on the line segment between $\bm{e}_j$ and $\frac{\vx^*+\vy^*}{2}$ that intersects the boundary of $\overline{\mathcal{D}}.$ It holds that $\bm{e}_j - \frac{\vx^*+\vy^*}{2} = c\left(\vz' - \frac{\vx^*+\vy^*}{2} \right)$ for some positive $c \leq \frac{2}{\delta}$ (it cannot be larger because otherwise the infinity norm of the difference between $\bm{e}_j$ and $\frac{\vx^*+\vy^*}{2}$ would exceed one, which is impossible as they both belong to $\Delta^n$). Therefore, 
\begin{equation}
\label{eq:lastlastVINE}
\left\langle \bm{e}_j - \frac{\vx^*+\vy^*}{2}, (\mat{A}+\mat{C})\frac{\vx^*+\vy^*}{2} \right\rangle \leq 2n^2 \delta + \frac{2Kn^{3/2}\sqrt{\epsilon}}{\delta} \textrm{ for any pure strategy }j. 
\end{equation}

From \eqref{eq:lastlastVINE}, we conclude that $\frac{\vx^*+\vy^*}{2}$ is $\left(2n^{2}\delta + \frac{2Kn^{3/2}\sqrt{\epsilon}}{\delta}\right)$-approximate NE of the symmetric two-player game $(\mat{R},\mat{R}^{\top}).$ We choose $\delta = \epsilon^{1/4} n^{-1/4}$ and we get that $\frac{\vx^*+\vy^*}{2}$ is an $O(n^{7/4}\epsilon^{1/4})$-approximate NE for $(\mat{R},\mat{R}^{\top})$, and thus the hardness result holds for $\epsilon$ of order  $O\left(\frac{1}{n^{7+c}}\right)$ for any constant $c>0.$ We note that if instead of an $\epsilon$-approximate fixed point of GDA, we were given an $\epsilon$-approximate First-order NE, then the hardness result would hold for any $\epsilon$ of order $\frac{1}{n^{c}}$ with $c>0.$
\end{proof}

\paragraph{Hardness results for symmetric dynamics}

Another interesting consequence of~\Cref{thm:symmetricminmax} is that it immediately precludes convergence under a broad class of iterative algorithms in general min-max optimization problems. 

\begin{definition}[Symmetric learning algorithms for min-max]
    \label{def:sym-dynamics}
    Let $T \in \mathbb{N}$. A deterministic, polynomial-time learning algorithm $\calA$ proceeds as follows for any time $t \in [T]$. It outputs a strategy as a function of the history $\mathcal{H}^{(t)}$ it has observed so far (where $\mathcal{H}^{(1)} \defeq \emptyset$ ), and then receives as feedback $\vec{g}^{(t)}$. It then updates $\mathcal{H}^{(t+1)} \defeq ( \mathcal{H}^{(t)}, \vec{g}^{(t)})$. 
    
    A \emph{symmetric} learning algorithm in min-max optimization consists of Player $\vx$ employing algorithm $\calA$ with history $\mathcal{H}_x^{(t)} \defeq (\nabla_{\vx} f(\vx^{(t)}, \vy^{(t)}) )_{t=1}^T$, and Player $\vy$ employing the \emph{same} algorithm with history $\mathcal{H}_y^{(t)} \defeq (- \nabla_{\vy} f(\vx^{(t)}, \vy^{(t)}) )_{t=1}^T$.
\end{definition}

(A consequence of the above definition is that both players initialize from the same strategy.) Many natural and well-studied algorithms in min-max optimization adhere to~\Cref{def:sym-dynamics}. Besides the obvious example of gradient descent/ascent, we mention extragradient descent(/ascent), optimistic gradient descent(/ascent), and optimistic multiplicative weights---all assumed to be executed simultaneously. A simple non-example is \emph{alternating} gradient descent(/ascent)~\citep{Wibisono22:Alternating,Bailey20:Finite}, wherein players do not update their strategies simultaneously.

\begin{theorem}
    \label{theorem:sym-dyn}
    No symmetric learning algorithm (per~\Cref{def:sym-dynamics}) can converge to $\epsilon$-first-order Nash equilibria in min-max optimization in polynomial time when $\epsilon = \nicefrac{1}{n^c}$, unless $\PPAD = \P$.
\end{theorem}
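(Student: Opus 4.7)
The plan is to argue that any symmetric learning algorithm, when run on a symmetric min-max instance starting from a symmetric initialization, produces symmetric iterates throughout its execution. Consequently, if such an algorithm terminates in polynomial time at an $\epsilon$-first-order Nash equilibrium, the output is necessarily a \emph{symmetric} $\epsilon$-first-order Nash equilibrium, which contradicts~\Cref{theorem:main} (equivalently~\Cref{thm:symmetricminmax}) unless $\PPAD = \P$.

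The key observation is a gradient identity at symmetric points for antisymmetric objectives. If $f : \calX \times \calX \to \R$ satisfies $f(\vx, \vy) = -f(\vy, \vx)$, then differentiating both sides with respect to the first argument yields $\nabla_1 f(\vx, \vy) = -\nabla_2 f(\vy, \vx)$, where $\nabla_i$ denotes the gradient with respect to the $i$-th argument. Specializing to $\vx = \vy$ gives
\begin{equation*}
  \nabla_{\vx} f(\vx, \vx) = -\nabla_{\vy} f(\vx, \vx).
\end{equation*}
Hence, at any symmetric point $(\vx, \vx)$, the feedback observed by Player $\vx$, namely $\nabla_{\vx} f(\vx, \vx)$, coincides exactly with the feedback observed by Player $\vy$, namely $-\nabla_{\vy} f(\vx, \vx)$.

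I would then proceed by induction on $t$. By~\Cref{def:sym-dynamics}, both players run the same deterministic algorithm $\calA$ and start from the same initial history $\mathcal{H}^{(1)} = \emptyset$, so they output the same first iterate $\vx^{(1)} = \vy^{(1)}$. Assuming $\vx^{(t)} = \vy^{(t)}$ and $\mathcal{H}_x^{(t)} = \mathcal{H}_y^{(t)}$, the gradient identity above implies that the feedbacks $\vg_x^{(t)} = \nabla_{\vx} f(\vx^{(t)}, \vy^{(t)})$ and $\vg_y^{(t)} = -\nabla_{\vy} f(\vx^{(t)}, \vy^{(t)})$ are equal, so the updated histories remain equal and the algorithm $\calA$ produces identical next iterates, giving $\vx^{(t+1)} = \vy^{(t+1)}$.

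Finally, suppose for contradiction that some symmetric learning algorithm $\calA$ outputs, in polynomial time, an $\epsilon$-first-order Nash equilibrium $(\vx^{(T)}, \vy^{(T)})$ with $\epsilon = 1/n^c$ on every symmetric min-max instance (in the sense of~\Cref{def:symmetric}). By the induction above, $\vx^{(T)} = \vy^{(T)}$, so the output is a symmetric $\epsilon$-first-order Nash equilibrium. This yields a polynomial-time algorithm for computing symmetric $1/n^c$-first-order Nash equilibria in symmetric min-max optimization, contradicting~\Cref{theorem:main} unless $\PPAD = \P$. The main conceptual step is the gradient identity together with the symmetry-preserving induction; once that is in place, the hardness transfer from~\Cref{theorem:main} is immediate.
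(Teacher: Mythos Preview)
Your proof is correct and follows essentially the same approach as the paper: both argue by induction that symmetric dynamics on an antisymmetric instance produce symmetric iterates (via the gradient identity $\nabla_{\vx} f(\vx,\vx) = -\nabla_{\vy} f(\vx,\vx)$), and then invoke the \PPAD-hardness of symmetric first-order Nash equilibria from~\Cref{thm:symmetricminmax}. Your version is in fact more explicit about the gradient identity and the induction step than the paper's brief justification.
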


Indeed, this is a direct consequence of our argument in~\Cref{thm:symmetricminmax}: under~\Cref{def:sym-dynamics} and the min-max optimization problem~\eqref{eq:hard-quad}, it follows inductively that $\vx^{(t)} = \vy^{(t)}$ and $\mathcal{H}_x^{(t)} = \mathcal{H}_y^{(t)}$ for all $t \in [T]$. But~\Cref{thm:symmetricminmax} implies that computing a symmetric first-order Nash equilibrium is $\PPAD$-hard when $\epsilon = \nicefrac{1}{n^c}$.

Assuming that $\P \neq \PPAD$, \Cref{theorem:sym-dyn}, and in particular its instantiation in team zero-sum games (\Cref{theorem:team-hard}), recovers and significantly generalizes some impossibility results shown by~\citet{kalogiannis2021teamwork} concerning lack of convergence for certain algorithms, such as optimistic gradient descent(/ascent)---our hardness result goes much further, precluding any algorithm subject to~\Cref{def:sym-dynamics}, albeit being conditional.

\subsection{The complexity of non-symmetric fixed points}
\label{sec:nonsymmetric}

An immediate question raised by~\Cref{thm:symmetricminmax} concerns the computational complexity of finding \emph{non-symmetric} fixed points of GDA for symmetric min-max optimization problems. Since totality is not guaranteed, unlike $\symgdaFixed$, we cannot hope to prove membership in \PPAD. In fact, we show that finding a non-symmetric fixed point of GDA is \FNP-hard (\Cref{theorem:non-symmetric}). To do so, we first define formally the computational problem of interest.

\begin{nproblem}[\nsymgdaFixed]
 \textsc{Input:}  \begin{itemize}
   \item Parameters $\epsilon,\delta>0$ and Lipschitz constant $L$ and
  \item Polynomial-time Turing
  machine $\calC_f$ evaluating a $L$-smooth antisymmetric function $f : \mathcal{X} \times \mathcal{X} \to \R$ and its gradient
  $\nabla f: \calX\times \calX \to \R^{2d}$, where
  $\mathcal{X} = \{\vx:\matA \vx \leq \vecb\}$ is a nonempty, bounded polytope described by a matrix
  $\matA \in \R^{m \times d}$ and vector $\vecb \in \R^{m}$.
\end{itemize}

  \noindent \textsc{Output:} A point
  $(\vxstar,\vystar)\in \calX\times\calX$ such that $\norm{\vxstar-\vystar}_2 \geq \delta$ and
  $\norm{(\vx^*, \vy^*) - F_{GDA}(\vx^*,\vy^*)}_2 \leq \epsilon$ if it exists, otherwise return \textsf{NO}.
\end{nproblem}

\noindent We establish that $\nsymgdaFixed$ is \FNP-hard. Our main result is restated below.

\nonsymmetric*

Our reduction builds on the hardness result of~\citet{MCLENNAN2010683}---in turn based on earlier work by~\citet{gilboa1989nash,Conitzer08:New}---which we significantly refine in order to account for $\poly(1/n)$-Nash equilibria. We begin by describing their basic approach. Let $G = ([n], E)$ be an $n$-node, undirected, unweighted graph, and construct
\begin{equation}
    \label{eq:matA}
    \mat{A}_{i, j} = 
    \begin{cases}
        -1 & \text{if $i = j$}, \\
        0 & \text{if $\{i, j\} \in E$}, \\
        -2 & \text{otherwise.}
    \end{cases}
\end{equation}
(\Cref{fig:graph} depicts an illustrative example.) Based on this matrix, \citet{MCLENNAN2010683} consider the symmetric, identical-payoff, two-player game $(\mat{A}, \mat{A})$---by construction, $\mat{A} = \mat{A}^\top$, and so this game is indeed symmetric. They were able to show the following key property.
\begin{figure}
    \centering
    \begin{tikzpicture}

        \begin{scope}[xshift=-3cm, yshift=.6cm, scale=0.8]
            \foreach \i in {1, 2, 3, 4, 5} {
                \node[circle, draw, minimum size=0.3cm, font=\footnotesize] (v\i) at (360/5 * \i:1.5) {\i};
            }

            \draw (v1) -- (v2);
            \draw (v1) -- (v3);
            \draw (v1) -- (v4);
            \draw (v2) -- (v3);
            \draw (v2) -- (v4);
            \draw (v3) -- (v4);
            \draw (v3) -- (v5);
            \draw (v4) -- (v5);
        \end{scope}

        \begin{scope}[xshift=1.5cm, yshift=-0.75cm]
            \foreach \i in {1, 2, 3, 4, 5} {
                \node[circle, draw, minimum size=0.3cm, font=\footnotesize,inner sep=0pt] at (-0.5, 2.5 - 0.5*\i + 0.25) {\i};
                \node[circle, draw, minimum size=0.3cm, font=\footnotesize,inner sep=0pt] at (0.5*\i - 0.25, 3.0) {\i};
            }

            \node at (0.25, 2.25) {$-1$}; 
            \node at (0.75, 2.25) {$0$};  
            \node at (1.25, 2.25) {$0$};  
            \node at (1.75, 2.25) {$0$};  
            \node at (2.25, 2.25) {$-2$}; 

            \node at (0.25, 1.75) {$0$};  
            \node at (0.75, 1.75) {$-1$}; 
            \node at (1.25, 1.75) {$0$};  
            \node at (1.75, 1.75) {$0$};  
            \node at (2.25, 1.75) {$-2$}; 

            \node at (0.25, 1.25) {$0$};  
            \node at (0.75, 1.25) {$0$};  
            \node at (1.25, 1.25) {$-1$}; 
            \node at (1.75, 1.25) {$0$};  
            \node at (2.25, 1.25) {$0$};  

            \node at (0.25, 0.75) {$0$};  
            \node at (0.75, 0.75) {$0$};  
            \node at (1.25, 0.75) {$0$};  
            \node at (1.75, 0.75) {$-1$}; 
            \node at (2.25, 0.75) {$0$};  

            \node at (0.25, 0.25) {$-2$}; 
            \node at (0.75, 0.25) {$-2$}; 
            \node at (1.25, 0.25) {$0$};  
            \node at (1.75, 0.25) {$0$};  
            \node at (2.25, 0.25) {$-1$}; 
        \end{scope}

    \end{tikzpicture}
    \caption{An example of matrix $\mat{A} = \mat{A}(G)$ (right) for graph $G$ (left).}
    \label{fig:graph}
\end{figure}
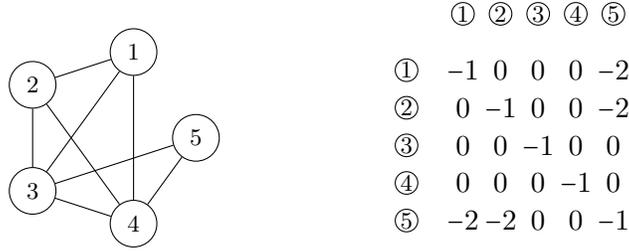

\begin{lemma}[\citep{MCLENNAN2010683}]
    \label{lemma:Nashgap}
    Let $C_k \subseteq [n]$ be a maximum clique of $G$ with size $k$ and $\vx^* = \frac{1}{k} \sum_{i \in C_k} \ve_{i}$. Then, $(\vx^*, \vx^*)$ is a Nash equilibrium of $(\mat{A}, \mat{A})$ that attains value $-\frac{1}{k}$. Furthermore, any symmetric Nash equilibrium not in the form described above has value at most $- \frac{1}{k - 1}$.
\end{lemma}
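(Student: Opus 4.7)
The plan is to handle the two assertions separately: a direct computation of the Nash value and best-response conditions at $\vx^*$, then a structural analysis of an arbitrary symmetric NE using only the first-order (indifference) conditions in its support.

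For the first assertion, I would plug $x_i^* = 1/k$ on $C_k$ and $0$ elsewhere into $\langle \vx^*, \mat{A}\vx^* \rangle$: the $k$ diagonal entries each contribute $-1$ and all off-diagonal pairs inside $C_k$ contribute $0$ (since $C_k$ is a clique), giving exactly $-k/k^2 = -1/k$. For the Nash check I would then compute $(\mat{A}\vx^*)_\ell$ for every pure deviation $\ell \in [n]$: for $\ell \in C_k$ it equals $-x_\ell^* = -1/k$, whereas for $\ell \notin C_k$ the maximality of $C_k$ guarantees at least one non-neighbor $i_0 \in C_k$ of $\ell$, making $(\mat{A}\vx^*)_\ell \le -2/k < -1/k$; hence no pure deviation helps.

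For the second assertion, let $(\vx,\vx)$ be a symmetric NE with value $v$ and support $S := \mathrm{supp}(\vx)$. The algebraic identity I would establish first is that, for $i \in S$, substituting $\sum_{j \in S} x_j = 1$ into the definition of $(\mat{A}\vx)_i$ yields
\begin{equation*}
    (\mat{A}\vx)_i \;=\; x_i + 2 \sum_{j \in N(i)\cap S} x_j - 2 \;=\; v,
\end{equation*}
where $N(i)$ denotes the neighborhood of $i$ in $G$. Summing this indifference equation over $i \in S$ unweighted and writing $d_j^S := |N(j)\cap S|$, I would obtain $|S|(v+2) = 1 + 2 \sum_{j \in S} x_j d_j^S$; since $d_j^S \le |S|-1$, this yields the pivotal bound $v \le -1/|S|$, with equality iff every $j \in S$ satisfies $d_j^S = |S|-1$, i.e., iff $S$ itself is a clique of $G$.

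From here the proof splits into cases. If $S$ is a clique, plugging $N(i)\cap S = S\setminus\{i\}$ back into the indifference identity collapses it to $-x_i = v$, forcing $\vx$ to be uniform on $S$ with value $-1/|S|$; then either $|S| = k$, recovering the excluded form $\vx^*$, or $|S| \le k-1$ and $v = -1/|S| \le -1/(k-1)$, as desired. If $S$ is not a clique, I would sharpen the bound by picking a \emph{maximum} clique $C$ of the induced subgraph $G[S]$, of size $t$, and averaging the indifference equation over $i \in C$ only; maximality of $C$ inside $G[S]$ implies that every $j \in S \setminus C$ is non-adjacent to at least one element of $C$, producing $v \le -(1+q')/t$ with $q' := \sum_{j \in S \setminus C} x_j > 0$. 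When $t \le k-1$ this already beats $-1/(k-1)$. The main obstacle I anticipate is the sub-case $t = k$ (so $S$ strictly contains a maximum clique of $G$), where one must combine the indifference equations for $i \in C$ with those for the extra vertices $j \in S \setminus C$ and solve the resulting linear system in order to lower-bound $q'$ by $1/(k-1)$; the bookkeeping there is the delicate part because the mass $q'$ interacts with how many neighbors in $C$ each extra vertex has.
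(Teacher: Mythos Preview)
The paper does not give its own proof of this lemma---it is quoted from \citet{MCLENNAN2010683} without argument---so there is no ``paper's proof'' to compare against, and I simply assess your attempt on its own merits.

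Your treatment of the first assertion is correct and complete.

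For the second assertion, your handling of the case where $S$ is a clique, and of the non-clique case with $t \le k-1$, is also correct. You rightly flag the sub-case $t=k$ (i.e., $S$ properly contains a maximum clique of $G$) as the obstacle---but it is more than ``delicate bookkeeping.'' The lower bound $q' \ge 1/(k-1)$ you hope to extract is in fact false, and so is the second assertion of the lemma as literally written. Take $G$ on $\{1,2,3,4\}$ with edges $\{1,2\},\{1,3\},\{2,3\},\{1,4\},\{2,4\}$, so that $\{1,2,3\}$ and $\{1,2,4\}$ are both maximum cliques and $k=3$. Solving the indifference equations on full support gives the symmetric Nash equilibrium $\vx=(\tfrac{3}{8},\tfrac{3}{8},\tfrac{1}{8},\tfrac{1}{8})$ with value $-\tfrac{3}{8}$; here $q'=\tfrac{1}{8}$, well short of $1/(k-1)=\tfrac{1}{2}$, and $-\tfrac{3}{8} > -\tfrac{1}{2} = -1/(k-1)$. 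Hence no completion of your argument---or any other---can establish the statement in this generality.

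The likely explanation is that the lemma is a loose paraphrase of what is actually proved and used. In the downstream reduction $k$ is a \emph{given} target (not the clique number), and what is really needed is: if the clique number of $G$ is at most $k-1$, then every symmetric Nash equilibrium of $(\mat{A},\mat{A})$ has value at most $-1/(k-1)$. That version \emph{does} follow immediately from your machinery, since then $t\le k-1$ always and your bound $v \le -(1+q')/t \le -1/(k-1)$ applies (with the clique case covered by $v=-1/|S|\le -1/(k-1)$). So your apparatus is sound; it is the lemma's phrasing, not your proof, that is defective.
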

The idea now is to construct a new symmetric, identical-payoff game $(\mat{B}, \mat{B})$, for
    \begin{align}
        \mat{B} \defeq \begin{bmatrix}
         \mat{A}_{1, 1} & \cdots & \mat{A}_{1, n}  & r\\
         \vdots & \ddots & \vdots & \vdots \\
         \mat{A}_{n, 1} & \cdots & \mat{A}_{n,n} & r\\
         r & \cdots & r  & V\\
    \end{bmatrix}, \label{eq:unique_NP}
    \end{align}
    where $V \defeq - \frac{1}{k}$ and $r = \frac{1}{2} (- \frac{1}{k} - \frac{1}{k - 1}) = -\frac{2k-1}{2(k - 1)k}$. Coupled with~\Cref{lemma:Nashgap}, this new game yields the following \NP-hardness result.

\begin{theorem}[\citep{MCLENNAN2010683}]
    \label{theorem:knownNP}
    It is \NP-hard to determine whether a symmetric, identical-payoff, two-player game has a unique symmetric Nash equilibrium.
\end{theorem}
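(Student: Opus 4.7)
The plan is to reduce from the \NP-complete \textsc{Clique} problem. Given an instance $(G, k)$ with $G = ([n], E)$, I would first build the matrix $\mat{A} = \mat{A}(G)$ as in~\eqref{eq:matA}, and then the $(n+1)\times(n+1)$ matrix $\mat{B}$ as in~\eqref{eq:unique_NP} with $V = -1/k$ and $r = -\tfrac{2k-1}{2k(k-1)}$; since $r$ is the midpoint of $-1/(k-1)$ and $-1/k$, one has the strict ordering $-1/(k-1) < r < V < 0$, which is the driving arithmetic relation of the reduction. The construction is polynomial-time, and the goal is to show that $(\mat{B}, \mat{B})$ admits a unique symmetric Nash equilibrium if and only if $G$ has no clique of size $k$.

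The first easy observation is that $(\ve_{n+1}, \ve_{n+1})$ is always a symmetric NE of $(\mat{B}, \mat{B})$, since playing $n+1$ yields payoff $V$ whereas any other pure strategy yields $r < V$. The whole argument then hinges on the equivalence: $(\mat{B}, \mat{B})$ has a symmetric NE distinct from $\ve_{n+1}$ iff $G$ has a $k$-clique. The forward direction follows immediately from~\Cref{lemma:Nashgap}: if $C_k$ is a $k$-clique, then $\vx^\star \defeq \tfrac{1}{k}\sum_{i\in C_k}\ve_i$ (padded with a $0$ in coordinate $n+1$) is a NE of $(\mat{A}, \mat{A})$ of value $V$, and it survives as a NE of $(\mat{B}, \mat{B})$ because the only new pure deviation, to $n+1$, yields $r < V$.

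For the converse direction---which is where the real work lies---I would take any symmetric NE $\vx \in \Delta^{n+1}$ of $(\mat{B}, \mat{B})$ with $\vx \neq \ve_{n+1}$, set $p \defeq 1 - x_{n+1} \in (0, 1]$, and let $\vy \defeq \vx_{[1\cdots n]}/p \in \Delta^n$. Unpacking the best-response conditions for $\vx$ in $(\mat{B}, \mat{B})$ shows that $\vy$ is itself a symmetric NE of $(\mat{A}, \mat{A})$; moreover its value $a \defeq \vy^\top \mat{A}\vy$ equals $r + \tfrac{1-p}{p}(V - r)$ when $p < 1$ (obtained by equating the payoff of $n+1$ with that of any $i \in \mathrm{supp}(\vy)$), and satisfies $a \geq r$ when $p = 1$ (so that $n+1$ is not a profitable deviation). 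Assuming $G$ has no $k$-clique, \Cref{lemma:Nashgap} forces $a \leq -1/(k-1) < r$; this contradicts the $p=1$ case immediately, and also the $p<1$ case since $V > r$ and $\tfrac{1-p}{p} > 0$ together give $a > r$. Hence $\ve_{n+1}$ is the unique symmetric NE precisely when $G$ has no $k$-clique, completing the reduction.

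The only delicate step is the unpacking in the $0 < p < 1$ case: one must carefully translate the indifference and inequality conditions for $\vx$ in $(\mat{B}, \mat{B})$ into a clean statement that $\vy$ is a symmetric NE of $(\mat{A}, \mat{A})$ with the precise value $a = r + \tfrac{1-p}{p}(V-r)$, so that the gap statement in~\Cref{lemma:Nashgap} can then be invoked. Once this bookkeeping is done, everything else is arithmetic powered by the chosen relation $-1/(k-1) < r < V$.
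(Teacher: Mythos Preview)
Your proposal is correct and follows essentially the same approach as the paper: both arguments show that $(\ve_{n+1},\ve_{n+1})$ is always a symmetric NE, that any other symmetric NE $\vx$ of $(\mat{B},\mat{B})$ restricts (after normalizing the first $n$ coordinates) to a symmetric NE of $(\mat{A},\mat{A})$ whose value must exceed $r$, and then invoke~\Cref{lemma:Nashgap} together with $-1/(k-1) < r < V$ to force a $k$-clique. The only cosmetic difference is that the paper also pins down $x_{n+1}=1/2$ in the mixed case (needed for the subsequent classification of all symmetric NEs), whereas you compute the value $a = r + \tfrac{1-p}{p}(V-r)$ explicitly; neither refinement is needed for the \NP-hardness itself.
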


Our goal here is to prove a stronger result, \Cref{theorem:symmetric-new}, that characterizes the set of $\epsilon$-Nash equilibria even for $\epsilon = \nicefrac{1}{n^c}$; this will form the basis for our hardness result in min-max optimization and adversarial team games. To do so, we first derive some basic properties of game~\eqref{eq:unique_NP}.

Game~\eqref{eq:unique_NP} always admits the trivial (symmetric) Nash equilibrium $(\vec{e}_{n+1}, \vec{e}_{n+1})$. Now, consider any symmetric Nash equilibrium $(\vx^*, \vx^*)$ with $x^*_{n+1} \neq 1$. If $x^*_{n+1} = 0$, it follows that $(\vx^*_{[i \cdots n] }, \vx^*_{[i \cdots n]})$ is a Nash equilibrium of $(\mat{A}, \mat{A})$, which in turn implies that $G$ admits a clique of size $k$; this follows from~\Cref{lemma:Nashgap}, together with the fact that $- \nicefrac{1}{k - 1} < r < -\nicefrac{1}{k}$.

We now analyze the case where $x^*_{n+1} \in (0, 1)$. It then follows that $(\nicefrac{\vx^*_{[1 \cdots n]}}{1 - x^*_{n+1}}, \nicefrac{\vx^*_{[1 \cdots n]}}{1 - x^*_{n+1}})$ is a (symmetric) Nash equilibrium of $(\mat{A}, \mat{A})$. Furthermore, the utility of playing action $a_{n+1}$ is $(1 - x^*_{n+1}) r + x^*_{n+1} V > r $. By~\Cref{lemma:Nashgap}, it follows that $(\nicefrac{\vx^*_{[1 \cdots n] }}{1 - x^*_{n+1}}, \nicefrac{\vx^*_{[1 \cdots n]}}{1 - x^*_{n+1}})$ has a value of $V$ and $G$ admits a clique of size $k$. As a result, the utility of playing any action $a_i$, with $i \in \text{supp}(\vx^*)$ and $i \neq n+1$, is $( 1 - x^{*}_{n+1}) V + x^*_{n+1} r$. At the same time, the utility of playing action $a_{n+1}$ reads $(1 - x_{n+1}^*) r + x_{n+1}^* V$. Equating those two quantities, it follows that $x^*_{n+1} = \nicefrac{1}{2}$.

In summary, $G$ contains a clique of size $k$ if and only if game~\eqref{eq:unique_NP} admits a unique symmetric Nash equilibrium, which implies~\Cref{theorem:knownNP}. What is more, we have shown a stronger property. Namely, any symmetric Nash equilibrium of $(\mat{B}, \mat{B})$ has to be in one of the following forms:
\begin{enumerate} \label{eq:three_exact_NE}
        \item $(\vx^*, \vx^*)$ with $\vx^* \defeq \ve_{n+ 1}$;\label{item:case1}
        \item $(\vx^*, \vx^*)$ with $\vx^* \defeq \frac{1}{k} \sum_{i \in C_k} \ve_i$, where $C_k \subseteq [n]$ is a clique in $G$ of size $k$;\label{item:case2}
        \item $(\vx^*, \vx^*)$ with $\vx^* \defeq \frac{1}{2} \ve_{n + 1} + \frac{1}{2k} \sum_{i \in C_k} \ve_i$, where $C_k \subseteq [n]$ is a clique in $G$ of size $k$.\label{item:case3}
\end{enumerate}
In particular, the equilibria in Items~\ref{item:case2} or~\ref{item:case3}---which exist iff $G$ contains a clique of size $k$---are always far from the one in Item~\ref{item:case1}. However, this characterization only applies to exact Nash equilibria. In any two-player game $\Gamma$, when $\epsilon$ is sufficiently small with $\log(1/\epsilon) \leq \poly(|\Gamma|)$,  \citet{Etessami10:On} have shown that any $\epsilon$-Nash equilibrium is within $\ell_1$-distance $\delta$ from an exact one, and so the above characterization can be applied; unfortunately, this does not apply (for general games) in the regime we are interested, namely $\epsilon = \poly(1/n)$.

We address this challenge by refining the result of~\citet{MCLENNAN2010683}. Our main result, which forms the basis for~\Cref{theorem:non-symmetric} and~\Cref{theorem:uniqueATG}, is summarized below.

\begin{theorem}
    \label{theorem:symmetric-new}
    For symmetric, identical-interest, two-player games, constants $c_1, c_2 > 0$, and $\epsilon = n^{-c_1}$, it is \NP-hard to distinguish between the following two cases:
    \begin{itemize}[noitemsep,topsep=0pt]
        \item any two symmetric $\epsilon$-Nash equilibria have $\ell_1$-distance at most $n^{-c_2}$, and
        \item there are two symmetric $\epsilon$-Nash equilibria that have $\ell_1$-distance $\Omega(1)$.
    \end{itemize}
\end{theorem}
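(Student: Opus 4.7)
The plan is to refine the reduction of \citet{MCLENNAN2010683} so that the structural characterisation of symmetric Nash equilibria of the identical-interest game $(\mat{B},\mat{B})$ from~\eqref{eq:unique_NP} is quantitatively robust to $\epsilon$-approximation. I would start from the NP-hard decision problem $k$-\textsc{Clique} on an $N$-vertex graph $G$---with $k$ chosen as a polynomial function of $N$---and pad $G$ with isolated vertices until the total number of vertices $n$ is polynomially larger than $k$. The YES case is then immediate: whenever $G$ contains a $k$-clique $C_k$, both $(\ve_{n+1},\ve_{n+1})$ and the uniform-on-clique profile $\bigl(\frac{1}{k}\sum_{i\in C_k}\ve_i,\frac{1}{k}\sum_{i\in C_k}\ve_i\bigr)$ are \emph{exact} symmetric Nash equilibria by~\Cref{lemma:Nashgap}, hence $\epsilon$-Nash equilibria for every $\epsilon \ge 0$, and they sit at $\ell_1$-distance $2$ from each other.

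The heart of the proof is the NO case: when the clique number of $G$ is at most $k-1$, I need to show that every symmetric $\epsilon$-NE $(\vx^*,\vx^*)$ is forced into an $n^{-c_2}$-neighborhood of $(\ve_{n+1},\ve_{n+1})$. Writing $p \defeq x^*_{n+1}$ and $\vy \defeq \vx^*_{[1\cdots n]}$, a direct expansion yields
\begin{equation*}
(\mat{B}\vx^*)_{n+1} - \langle \vx^*, \mat{B}\vx^*\rangle \;=\; (1-p)\bigl[r(1-2p)+Vp\bigr] - \langle \vy, \mat{A}\vy\rangle.
\end{equation*}
The Motzkin--Straus inequality, applied to the normalised distribution $\vy/(1-p)\in\Delta^n$, yields $\langle \vy, \mat{A}\vy\rangle \le -(1-p)^2/(k-1)$ whenever the clique number is at most $k-1$, so the above quantity is lower-bounded by $(1-p)\,h(p)$ with $h(p) \defeq r(1-2p)+Vp+(1-p)/(k-1)$. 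The key algebraic identity---which is precisely what the choices $V=-1/k$ and $r=-(2k-1)/(2k(k-1))$ were engineered to achieve---is that the coefficient of $p$ in $h$ vanishes, so $h(p) \equiv 1/(2k(k-1))$ for all $p$. Combined with the $\epsilon$-NE deviation inequality $(\mat{B}\vx^*)_{n+1} \le \langle \vx^*,\mat{B}\vx^*\rangle + \epsilon$, this produces $1-p \le 2k(k-1)\,\epsilon$, and hence $\|\vx^*-\ve_{n+1}\|_1 = 2(1-p) \le 4k(k-1)\,\epsilon$.

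Choosing the padding so that $8k(k-1) \le n^{c_1-c_2}$---possible for $c_1$ sufficiently larger than $c_2$, by taking $k$-\textsc{Clique} instances where $k$ is a small polynomial in $N$---guarantees that in the NO case any pair of symmetric $\epsilon$-NEs sits within $\ell_1$-distance $n^{-c_2}$, whereas the YES case still exhibits the pair at $\ell_1$-distance $\Omega(1)$, completing the promised gap reduction. The main obstacle is establishing the exact algebraic cancellation that makes $h$ independent of $p$: without it the bound on $1-p$ would degrade in some $p$-regime and the reduction would not achieve the $\mathrm{poly}(1/n)$ quantitative strength claimed in the theorem. A secondary subtlety is the use of Motzkin--Straus to convert the combinatorial statement ``no $k$-clique'' into the quantitative slack $1/(k-1)-1/k = 1/(k(k-1))$, which is precisely what is needed to absorb the $\epsilon$-approximation.
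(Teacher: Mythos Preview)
Your approach is correct and genuinely different from the paper's. The paper takes a considerably more circuitous route: it replaces $\mat{A}$ by a modified matrix $\lineA$ (adjacency matrix with a parameter $\delta$ on the diagonal), passes through \emph{well-supported} equilibria, and proves a structural lemma (\Cref{lemma:well_supported_nash_value}) via a three-way case analysis on the support size, ultimately showing that every approximate symmetric equilibrium of the augmented game is close to one of the three exact equilibrium forms; the conversion from $\epsilon$-NE to well-supported NE then costs an extra square root in the approximation. By contrast, your argument stays with the original $\mat{A}$, appeals once to Motzkin--Straus (in the equivalent form $\max_{\vz\in\Delta^n}\vz^\top\mat{A}\vz=-1/\omega(G)$, which follows from the standard statement plus the identity $\vz^\top\mat{A}\vz=-2+\|\vz\|_2^2+2\vz^\top\mat{M}\vz$ and $\sum_i z_i^2+2\sum_{\{i,j\}\notin E}z_iz_j\ge 1/\omega$), and exploits the exact algebraic cancellation in $h(p)$---which you correctly identify as the design principle behind the specific choices of $V$ and $r$---to force $1-p\le 2k(k-1)\epsilon$ in the NO case with no casework and no loss from the well-supported conversion. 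What the paper's approach buys is a full characterisation of approximate equilibria in \emph{both} YES and NO instances (they are all near one of the three exact forms), which is slightly more than what the theorem needs; your approach buys a shorter, sharper proof of exactly the stated dichotomy. Both approaches require $c_1$ to be taken sufficiently large relative to $c_2$, so your reading of the quantifiers is consistent with the paper's.
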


Our reduction proceeds similarly, but defines $\lineA$ to be the adjacency matrix of $G$ with $\delta \in (0, 1)$ in each diagonal entry. Using $\lineA$, we show that we can refine~\Cref{lemma:Nashgap} of~\citet{MCLENNAN2010683}. Before we state the key property we prove in \Cref{lemma:well_supported_nash_value}, we recall the following definition.

\begin{definition}[Well-supported NE]
    A symmetric strategy profile $(\vx, \vx)$ is an \emph{$\epsilon$-well-supported} Nash equilibrium of the symmetric, identical-payoff game $(\lineA, \lineA)$ if for all $i \in [n]$,
    \begin{align*}
        x_i > 0 \implies (\lineA \vx)_i \geq \max_{j \in [n]} (\lineA \vx)_j - \epsilon.
    \end{align*}
\end{definition}

\begin{restatable}{lemma}{wellsupclique}
\label{lemma:well_supported_nash_value}
    Suppose that the maximum clique in $G$ is of size $k$. For any symmetric $\epsilon$-well-supported NE $(\hat{\vx}, \hat{\vx})$ of $(\lineA, \lineA)$ not supported on a clique of size $k$, we have $u(\hat{\vx}, \hat{\vx}) \leq 1 - \frac{1}{k} + \frac{\delta}{k} - \frac{2\delta}{ n^2k^4} + 2 \epsilon$.
\end{restatable}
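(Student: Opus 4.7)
The plan is to combine an algebraic identity for $u$ with the Motzkin--Straus inequality and a threshold-based case analysis of the support of $\hvx$. I would begin with the identity
\[ u(\hvx, \hvx) = \hvx^\top \lineA\, \hvx = 1 - (1-\delta)\|\hvx\|_2^2 - 2\sum_{i<j,\, \{i,j\}\notin E} \hat x_i \hat x_j, \]
which follows from $\lineA = \mat{A}_G + \delta \emI$ (where $\mat{A}_G$ is the adjacency matrix of $G$ with zero diagonal) together with $(\vone^\top \hvx)^2 = 1$. Since at $\tilde\vx := \frac{1}{k}\sum_{i \in C_k} \ve_i$ for a maximum clique $C_k$ the first negative term equals exactly $(1-\delta)/k$ and the second vanishes, the desired bound reduces to showing that a non-max-clique $\hvx$ must either increase $\|\hvx\|_2^2$ beyond $1/k$ by enough, or contribute at least $2/(n^2 k^4)$ through the non-edge sum.

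To make this quantitative I set a threshold $\tau := 1/(nk^2)$ and let $T := \{i : \hat x_i \geq \tau\}$, so that the mass off $T$ is at most $n\tau = 1/k^2$. The argument then splits into three cases on the structure of $T$. In Case A, $T$ contains a non-edge $(i_0, j_0)$, whence $\hat x_{i_0} \hat x_{j_0} \geq \tau^2 = 1/(n^2 k^4)$ and so $2\sum_{\{i,j\}\notin E} \hat x_i \hat x_j \geq 2/(n^2 k^4)$; combining this with the identity above whenever $\|\hvx\|_2^2 \geq 1/k - 2/(n^2k^4) - 2\epsilon/(1-\delta)$, and otherwise with the Motzkin--Straus bound $\hvx^\top \mat{A}_G \hvx \leq 1 - 1/k$ applied to $u = \hvx^\top \mat{A}_G \hvx + \delta\|\hvx\|_2^2$, yields the target bound in either subcase (the constants match exactly at the splitting value of $\|\hvx\|_2^2$). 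In Case B, $T$ is a clique of size $s \leq k-1$, on which $\lineA$ acts as $J_T + (\delta-1)\emI_T$, so $\hvx_T^\top \lineA\, \hvx_T = p^2 - (1-\delta)\|\hvx_T\|_2^2 \leq p^2(1 - (1-\delta)/s)$ with $p := \sum_{i\in T}\hat x_i \geq 1 - 1/k^2$; crudely bounding the cross-terms by the at-most-$1/k^2$ off-$T$ mass gives $u \leq 1 - (1-\delta)(1 - 2/k^2)/(k-1)$, which for $k \geq 3$ beats the target by $\Omega((1-\delta)/k^3)$, easily dominating the correction $2\delta/(n^2 k^4)$ (the boundary $k = 2$ is verified by a direct computation since then $|T|=1$ and the clique contribution is simply $\delta p^2$).

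Case C---the only case that genuinely uses the $\epsilon$-well-supported hypothesis, and the main obstacle---is when $T$ itself is a $k$-clique $C$. Since $\hvx$ is not supported on a $k$-clique by assumption, some $\ell \in \mathrm{supp}(\hvx) \setminus C$ exists with $0 < \hat x_\ell < \tau$; and because $C$ is a maximum clique, $\ell$ has at least one non-neighbor $m \in C$. The well-supported condition restricted to pairs in $C$ first forces $\hvx|_C$ to be nearly uniform, with $|\hat x_i - \hat x_j| \leq O((\epsilon + 1/k^2)/(1-\delta))$ for $i, j \in C$, and hence $(\lineA \hvx)_m \geq 1 - (1-\delta)/k - \epsilon - O(1/k^2)$; on the other hand $(\lineA \hvx)_\ell \leq \delta \tau + (k-1)/k + O(1/k^2)$, since $\ell$ has at most $k-1$ neighbors in $C$ and carries almost no mass off $C$. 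Plugging these into the well-supported inequality $(\lineA \hvx)_\ell \geq (\lineA \hvx)_m - \epsilon$ forces $\hat x_\ell \gtrsim 1/k$, contradicting $\hat x_\ell < \tau = 1/(nk^2)$ in the parameter regime where $\epsilon$ is polynomially smaller than $\delta/k$. The delicate point is choosing $\tau$ and reconciling the constants across the three cases with the $\delta$ and $\epsilon$ required by the outer hardness reduction of~\Cref{theorem:symmetric-new}.
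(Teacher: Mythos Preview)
Your proposal is correct, but it takes a genuinely different route from the paper's own proof. The paper cases on the \emph{support size} of $\hvx$ (fewer than $k$, more than $k$, exactly $k$ but not a clique) and in the latter two cases uses a mass-transfer argument: it finds a non-adjacent pair $(i,j)$ in the support, moves all of $\hat x_j$ onto $i$, and uses the well-supported condition to show the utility strictly increases by at least $2\delta \hat x_j^2 - 2\epsilon$; the threshold $1/(nk^2)$ then enters exactly as in your argument, distinguishing whether such a heavy non-edge endpoint exists. By contrast, you case on the structure of the \emph{heavy set} $T=\{i:\hat x_i\ge 1/(nk^2)\}$ and work with the explicit identity $u(\hvx,\hvx)=1-(1-\delta)\|\hvx\|_2^2-2\sum_{\{i,j\}\notin E}\hat x_i\hat x_j$ together with Motzkin--Straus. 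This lets you handle Cases~A and~B without invoking the well-supported hypothesis at all---the bound is purely analytic there---and confines the use of well-supportedness to Case~C, where you derive a contradiction rather than a utility bound.

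Both routes ultimately hinge on the same $1/(nk^2)$ threshold and the same polynomial gap $2\delta/(n^2k^4)$, so neither is materially shorter; yours is perhaps cleaner conceptually because the identity isolates exactly which term (the $\ell_2$-norm or the non-edge sum) is responsible for the loss. Two minor remarks: your Case~B margin is actually $\Theta((1-\delta)/k^2)$, not $\Theta((1-\delta)/k^3)$, and your $k=2$ boundary check is unnecessary since the paper works under Assumption~\ref{assump:parameter}, which takes $k\ge 10$ and $\delta=1/2$; with those values the constants in your Case~C contradiction go through comfortably.
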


Equipped with this property, we show (in~\Cref{sec:proofs3}) that a similar argument to the one described earlier concerning game~\eqref{eq:unique_NP} establishes~\Cref{theorem:symmetric-new}. We proceed now with~\Cref{theorem:non-symmetric}.
\begin{proof}[Proof of~\Cref{theorem:non-symmetric}]
It suffices to consider the antisymmetric function $f(\vx,\vy) \defeq \vy^{\top} \mat{B}\vy - \vx^{\top} \mat{B}\vx$, where symmetric matrix $\mat{B}$ is defined as in \eqref{eq:unique_NP}, using our new matrix $\lineA$ instead of $\mat{A}$ (see \eqref{eq:unique_NP_matrix}). Any $\epsilon$-first-order Nash equilibrium $(\vx^*, \vy^*)$ of this (separable) min-max optimization problem induces, two symmetric $\epsilon$-Nash equilibria---namely, $(\vx^*, \vx^*)$ and $(\vy^*, \vy^*)$---in the symmetric, identical-interest, game $(\mat{B}, \mat{B})$. Using~\Cref{theorem:symmetric-new}, the claim follows.
\end{proof}

Finally, the proof of~\Cref{theorem:uniqueATG} that was claimed earlier follows immediately by combining~\Cref{theorem:symmetric-new} with the reduction of~\Cref{sec:cls}, and in particular, \Cref{lemma:closeness,lemma:small_z}.

\subsection{Team zero-sum games}
\label{sec:teamzero}

Our previous hardness result concerning symmetric min-max optimization problems does not have any immediate implications for (normal-form) team zero-sum games since the class of hard instances we constructed earlier contains a quadratic term. Our next result provides such a hardness result by combining the basic gadget we introduced in~\Cref{sec:cls} in the context of adversarial team games; the basic pieces of the argument are similar to the ones we described in~\Cref{sec:cls}, and so the proof is deferred to~\Cref{sec:proofs4}. Our goal is to prove the following.

\teamhard*

We begin by describing the class of $3$ vs. $3$ team zero-sum games upon which our hardness result is based on. To do so, based on~\eqref{eq:util-atg}, let us define the auxiliary function
\begin{equation*}
    \delta: \Delta^n \times \Delta^n \times \Delta^{2n} \ni (\vx, \vy, \vz) \mapsto \frac{|\Amin|}{\epsilon} \sum_{i=1}^n ( z_i (x_i - y_i) + z_{n+i} (y_i - x_i)) + |\Amin| z_{2n+1}.
\end{equation*}
In what follows, the $3$ players of the one team will be identified with $(\vx, \vy, \vz)$, while the $3$ players of the other team with $(\hvx, \hvy, \hvz)$. Now, we define the utility of the latter team to be
\begin{equation}
    \label{eq:team-sym}
        u(\vx, \vy, \vz , \hat{\vx}, \hat{\vy}, \hat{\vz})  = \langle \vx, \mat{A} \vy \rangle - \langle \hvx, \mat{A} \hvy \rangle + \langle \vx, \mat{C} \hvx \rangle  + \delta(\vx, \vy, \hvz) - \delta(\hvx, \hvy, \vz),
\end{equation}
where again $\mat{A}$ is a symmetric matrix and $\mat{C}$ is skew-symmetric. As in~\Cref{sec:cls}, it is assumed for convenience that $\mat{A}_{i, j} \leq -1$ for all $i, j \in [n]$; we denote by $\Amin$ the minimum entry of $\mat{A}$. That the game defined above is symmetric is clear: $u(\vx, \vy, \vz, \hvx, \hvy, \hvz) = - u(\hvx, \hvy, \hvz, \vx, \vy, \vz)$ for all joint strategies (since $\mat{C} = - \mat{C}^\top$). It is also evident that~\eqref{eq:team-sym} is a polymatrix game, as promised.

The first key lemma, which mirrors~\Cref{lemma:closeness}, shows that, in an approximate Nash equilibrium of~\eqref{eq:team-sym}, $\vx \approx \vy$ and $\hvx \approx \hvy$. This is crucial as it allows us to construct---up to some small error---quadratic terms in the utility function, as in our hardness result for symmetric min-max optimization.

\begin{restatable}{lemma}{closeteams}
    \label{lemma:close-teams}
    Let $(\vx^*, \vy^*, \vz^*, \hvx^*, \hvy^*, \hvz^*)$ be an $\epsilon^2$-Nash equilibrium of~\eqref{eq:team-sym} with $\epsilon^2 \leq \nicefrac{1}{2}$. Then, $\| \vx^* - \vy^* \|_\infty \leq 2 \epsilon$ and $\| \hvx^* - \hvy^* \|_\infty \leq 2 \epsilon$.
\end{restatable}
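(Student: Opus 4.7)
The plan is to follow the blueprint of Lemma~\ref{lemma:closeness} essentially verbatim, exploiting the fact that the utility~\eqref{eq:team-sym} is designed so that player $\hvz$ plays the ``adversarial'' role with respect to the team $(\vx, \vy)$ (through the $+\delta(\vx, \vy, \hvz)$ term), while player $\vz$ plays the symmetric role against $(\hvx, \hvy)$. Consequently I only have to argue the first inequality $\| \vx^* - \vy^* \|_\infty \leq 2\epsilon$; the second is identical after swapping hatted and unhatted variables. I argue by contradiction: assume $x^*_i - y^*_i > 2\epsilon$ for some $i \in [n]$ (the case $y^*_i - x^*_i > 2\epsilon$ is treated analogously).

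The first step handles player $\hvz$, who seeks to maximize $u$. Comparing pure actions, $u(\cdots, a_i) - u(\cdots, a_{2n+1}) = \tfrac{|\Amin|}{\epsilon}(x^*_i - y^*_i) - |\Amin| \geq |\Amin| \geq 1$, so \Cref{lemma:mass_on_small_reward} applied to the $\epsilon^2$-Nash property of $\hvz$ yields $\hat z^*_{2n+1} \leq \epsilon^2/|\Amin| \leq \epsilon^2$. Next, the $\epsilon^2$-best-response condition for $\hvz$ gives $u^* \geq u(\vx^*, \vy^*, \vz^*, \hvx^*, \hvy^*, a_i) - \epsilon^2$, from which one readily extracts the lower bound $\delta(\vx^*, \vy^*, \hvz^*) \geq 2|\Amin| - \epsilon^2$.

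The second step is a profitable deviation for player $\vx$ to $\vx' \defeq \vy^*$. The crucial observation is that $\delta(\vy^*, \vy^*, \hvz^*) = |\Amin|\, \hat z^*_{2n+1} \leq |\Amin|\epsilon^2$, so the term driving the inflation of $u^*$ collapses. Using $\langle \vy^*, \mat{A}\vy^* \rangle \leq \Amax$, $\langle \vx^*, \mat{A}\vy^* \rangle \geq \Amin = -|\Amin|$, and the trivial estimate $|\langle \vy^* - \vx^*, \mat{C}\hvx^* \rangle| \leq 2\|\mat{C}\|_\infty$, a short calculation mirroring~\eqref{eq:cancel2} gives
\begin{equation*}
    u(\vx', \vy^*, \vz^*, \hvx^*, \hvy^*, \hvz^*) - u(\vx^*, \vy^*, \vz^*, \hvx^*, \hvy^*, \hvz^*) \leq \Amax - |\Amin| + 2\|\mat{C}\|_\infty + (|\Amin| + 1)\epsilon^2.
\end{equation*}
Since $\vx$ is a minimizer of $u$, the $\epsilon^2$-Nash condition forces the left-hand side to be at least $-\epsilon^2$, and this fails provided $|\Amin|$ is sufficiently large compared to $|\Amax|$ and $\|\mat{C}\|_\infty$, yielding the desired contradiction.

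The only substantive difference with Lemma~\ref{lemma:closeness} is the extra bilinear term $\langle \vx, \mat{C}\hvx \rangle$ in~\eqref{eq:team-sym}, which contributes the cross-term $\langle \vy^* - \vx^*, \mat{C}\hvx^* \rangle$ of magnitude at most $2\|\mat{C}\|_\infty$ to the deviation; this is the main thing to keep track of. It is absorbed without loss of generality by assuming the construction has $|\Amin| \gg \|\mat{C}\|_\infty + |\Amax|$, which we can always enforce by subtracting a sufficiently large constant from every entry of $\mat{A}$ (this shift affects $u$ only by an additive constant on $\Delta^n \times \Delta^n$ and therefore does not change any player's best responses). The bound $\|\hvx^* - \hvy^*\|_\infty \leq 2\epsilon$ follows by repeating the argument with the roles of $\vz$ and $\hvz$, and of $\vx$ and $\hvx$, interchanged---using the $-\delta(\hvx, \hvy, \vz)$ term in~\eqref{eq:team-sym} (noting that $\vz$, on the minimizing team, is thus a maximizer of $\delta(\hvx, \hvy, \vz)$).
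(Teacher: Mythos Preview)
Your overall strategy is identical to the paper's, and the first step (bounding $\hat z^*_{2n+1}$ and deriving $\delta(\vx^*,\vy^*,\hvz^*)\geq 2|\Amin|-\epsilon^2$) matches exactly. However, in the second step you deviate Player~$\vx$ to $\vx' \defeq \vy^*$, whereas the paper deviates Player~$\vy$ to $\vy' \defeq \vx^*$. This seemingly minor choice matters: Player~$\vy$ does not appear in the bilinear term $\langle \vx, \mat{C}\hvx\rangle$, so the paper's deviation incurs no $\mat{C}$-dependent cross term at all, and the contradiction $\langle \vx^*,\mat{A}\vx^*\rangle + \epsilon^2|\Amin| - |\Amin| + \epsilon^2 \leq -2 + 2\epsilon^2 < -\epsilon^2$ goes through using only the standing assumption $\mat{A}_{i,j}\leq -1$.

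Your route, by contrast, picks up the extra $\langle \vy^*-\vx^*,\mat{C}\hvx^*\rangle$ term and forces you to assume $|\Amin|\gg \|\mat{C}\|_\infty$. Your proposed fix---shifting $\mat{A}$ by a constant---does not quite work as stated: the game~\eqref{eq:team-sym} uses $|\Amin|$ inside $\delta$, so shifting $\mat{A}$ changes the coefficients $\tfrac{|\Amin|}{\epsilon}$ in $\delta$ and hence does not merely add a constant to $u$. At best this amounts to strengthening the lemma's hypotheses (which is fine for the downstream reduction, where you control the construction, but does not prove the lemma as stated). The clean repair is simply to deviate $\vy$ instead of $\vx$; then the $\mat{C}$ term drops out and no extra assumption is needed.
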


Next, following the argument of~\Cref{lemma:small_z}, we show that, in equilibrium, Players $\vz$ and $\hvz$ place most of their probability mass on action $a_{2n + 1}$, thereby having only a small effect on the game between Players $\vx$ and $\vy$ vs. $\hvx$ and $\hvy$.

\begin{restatable}{lemma}{smallz}
    \label{lemma:smallzteam}
    Let $(\vx^*, \vy^*, \vz^*, \hvx^*, \hvy^*, \hvz^*)$ be an $\epsilon^2$-Nash equilibrium of~\eqref{eq:team-sym} with $\epsilon \leq \nicefrac{1}{10}$. Then, $z_j, \hat{z}_j \leq 9 \epsilon$ for all $j \in [2n]$.
\end{restatable}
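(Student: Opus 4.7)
The plan is to mirror the argument of Lemma~\ref{lemma:small_z}, applied separately to Player $\hvz$ and Player $\vz$. Since the game~\eqref{eq:team-sym} is symmetric between the two teams, it suffices to establish the bound on $\hat{z}_j^*$; the bound on $z_j^*$ then follows by exchanging the roles of $(\vx, \vy)$ and $(\hvx, \hvy)$. The key observation is that the part of $u$ depending on $\hvz$ is precisely $\delta(\vx^*, \vy^*, \hvz)$, sharing the same structure as the adversary's utility in~\eqref{eq:util-atg}; and the part of $u$ depending on $\vx$ (which Player $\vx$ is minimizing, since $u$ is the opposing team's utility) is $\langle \vx, \mat{A}\vy^* \rangle + \langle \vx, \mat{C}\hvx^* \rangle + \delta(\vx, \vy^*, \hvz^*)$, differing from the adversarial team game of Section~\ref{sec:cls} only by the additional bilinear term $\langle \vx, \mat{C}\hvx^* \rangle$.

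Fix $i \in [n]$ and assume without loss of generality that $x_i^* \geq y_i^*$ (the opposite case is symmetric). By Lemma~\ref{lemma:close-teams}, $\|\vx^* - \vy^*\|_\infty \leq 2\epsilon$. If $|x_i^* - y_i^*| \leq \nicefrac{\epsilon}{2}$, then for Player $\hvz$ both actions $a_i$ and $a_{n+i}$ yield utility at least $\nicefrac{1}{2}|\Amin|$ less than action $a_{2n+1}$, so Lemma~\ref{lemma:mass_on_small_reward} immediately gives $\hat{z}_i^*, \hat{z}_{n+i}^* \leq 2\epsilon^2$. Otherwise, $x_i^* - y_i^* > \nicefrac{\epsilon}{2}$, which forces some $j \in [n]$ with $x_j^* - y_j^* < 0$; comparing the utilities of $a_j$, $a_{n+i}$, and $a_{2n+1}$ for Player $\hvz$ and applying Lemma~\ref{lemma:mass_on_small_reward} once more yields $\hat{z}_{n+i}^*, \hat{z}_j^* \leq \epsilon^2$.

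To bound $\hat{z}_i^*$ in the second case, I would consider, exactly as in the proof of Lemma~\ref{lemma:small_z}, the deviation $\vx' \defeq \vx^* + (y_i^* - x_i^*) \ve_i + (x_i^* - y_i^*) \ve_j$ by Player $\vx$, reallocating probability mass from $a_i$ to $a_j$. Reproducing the computation leading to~\eqref{eq:cancel2} yields the same bound, with the addition of $\langle \vx' - \vx^*, \mat{C} \hvx^* \rangle$. Since $\|\vx' - \vx^*\|_1 \leq 2(x_i^* - y_i^*) \leq 4\epsilon$ and $\|\mat{C}\hvx^*\|_\infty = O(1)$ for the matrices $\mat{C}$ arising in the reduction, this extra contribution is $O(\epsilon)$, which is dominated by the leading $\nicefrac{1}{2}\hat{z}_i^* |\Amin|$ term extracted from $\delta$. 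Combining with the $\epsilon^2$-Nash condition, exactly as in Lemma~\ref{lemma:small_z}, then yields $\hat{z}_i^* \leq 9\epsilon$. The only new element relative to Lemma~\ref{lemma:small_z} is the bookkeeping for the cross-team term $\langle \vx, \mat{C}\hvx^* \rangle$, which is the main (though minor) obstacle: since its contribution is only $O(\epsilon)$ while the dominant $\delta$-term scales as $\Omega(|\Amin|)$, the argument carries through essentially unchanged.
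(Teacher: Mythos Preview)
Your overall plan mirrors the paper's, but there is one concrete divergence that matters. You choose to deviate Player~$\vx$, which forces you to control the cross-team term $\langle \vx' - \vx^*, \mat{C}\hvx^*\rangle$. The paper instead deviates Player~$\vy$: since $u$ depends on $\vy$ only through $\langle \vx^*, \mat{A}\vy\rangle + \delta(\vx^*, \vy, \hvz^*)$, the matrix $\mat{C}$ never enters the computation, and the bound $\hat z_j^* \le 9\epsilon$ follows verbatim from the argument of Lemma~\ref{lemma:small_z} with no new bookkeeping.

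This is not merely cosmetic. Lemma~\ref{lemma:smallzteam} is stated for the game~\eqref{eq:team-sym} with an \emph{arbitrary} skew-symmetric $\mat{C}$ (only $\mat{A}$ is assumed to have entries $\le -1$), so your appeal to ``$\|\mat{C}\hvx^*\|_\infty = O(1)$ for the matrices arising in the reduction'' does not establish the lemma as written. Even granting that bound, the extra $O(\epsilon)$ term you pick up feeds directly into the final inequality $(\tfrac{1}{2}\hat z_i^* - 4\epsilon^2 - 4\epsilon)|\Amin| \le \epsilon^2$, and the resulting bound on $\hat z_i^*$ would exceed $9\epsilon$ unless $\|\mat{C}\hvx^*\|_\infty$ is small relative to $|\Amin|$; the slack in the paper's computation is only $10\epsilon^2 \le \epsilon$. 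So your route would work for the downstream reduction with a looser constant, but the clean observation you are missing is simply to deviate $\vy$ rather than $\vx$, which makes the $\mat{C}$ issue disappear entirely.
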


Armed with those two basic lemmas, we complete the proof of~\Cref{theorem:team-hard} in~\Cref{sec:proofs4}.

\section{Conclusion and open problems}

We have provided a number of new complexity results concerning min-max optimization in general, and team zero-sum games in particular (\emph{cf.}~\Cref{tab:results}). There are many interesting avenues for future research. The complexity of computing first-order Nash equilibria (equivalently, the $\gdaFixed$ problem) remains wide open, but our hardness results suggest a possible approach: as we have seen, in symmetric min-max optimization, computing \emph{either} symmetric or non-symmetric equilibria is intractable. It would be enough if one could establish this using the same underlying function---that is, somehow combine our two reductions into one. With regard to Nash equilibria in adversarial team games, the main question that remains open concerns the \emph{strong} approximation version of the problem, whereby one requires identifying points that are close---in geometric distance---to exact Nash equilibria. Is there some natural subclass of \textsf{FIXP} that characterizes the complexity of that problem? Relatedly, \citet{Etessami10:On} provided strong evidence for the intractability of strong approximation---hovering above \NP---through a reduction from the \emph{square root sum} problem; it is currently unclear whether such reductions can carry over using adversarial team games.

\section*{Acknowledgments}
I.P. would like to acknowledge ICS research award and a start-up grant from UCI. Part of this work was done while I.P. and J.Y. were visiting Archimedes Research Unit. 
T.S. is supported by the Vannevar Bush Faculty Fellowship ONR N00014-23-1-2876, National Science Foundation grants RI-2312342 and RI-1901403, ARO award W911NF2210266, and NIH award A240108S001. We are grateful to Alexandros Hollender for many valuable discussions.


\printbibliography

\appendix
\section{Omitted proofs}
\label{sec:proofs}

This section contains the proofs omitted from the main body.

\subsection{Proofs from Section~\ref{sec:cls}}
\label{sec:proofs1}

We begin with~\Cref{lemma:mass_on_small_reward}.

\begin{proof}[Proof of~\Cref{lemma:mass_on_small_reward}]
    For the sake of contradiction, suppose that $x^*_i(a_k) > \frac{\epsilon^2}{c}$ for some $k \in [n]$ such that $u_i(a_k, \vx^*_{-i}) \leq u_i(a_j, \vx^*_{-i}) - c$. Consider the strategy $\Delta^n \ni \vx_i' = \vx_i^* + x_i^*(a_k) \ve_{j} - x_i^*(a_k) \ve_{k}$. Then, we have
    \begin{align*}
        u_i(\vx_i', \vx^*_{-i}) - u_i(\vx_i^*, \vx^*_{-i}) &= x_i^*(a_k) u_i(a_j, \vx^*_{-i}) - x_i^*(a_k) u_i(a_k, \vx^*_{-i}) \\
        &\geq c x_i^*(a_k) \\
        &> \epsilon^2.
    \end{align*}
    That is, deviating to $\vx_i'$ yields a utility benefit strictly larger than $\epsilon^2$, which contradicts the assumption that $(\vx_i^*, \vx^*_{-i})$ is an $\epsilon^2$-Nash Equilibrium.
\end{proof}

We continue with~\Cref{thm:teamadvtosymme}, which combines~\Cref{lemma:small_z,lemma:closeness} to complete the $\CLS$-hardness reduction of~\Cref{sec:cls}.

\teamadvtosymme*

\begin{proof}
    Since $(\vx^*, \vy^*, \vz^*)$ is an $\epsilon^2$-Nash equilibrium, we have that for any for any deviation $\vy' \in \Delta^n$ of Player $\vy$,
    \begin{align}
        \langle \vx^*, \mat{A} \vy^* \rangle & \leq \langle \vx^*, \mat{A} \vy' \rangle  + \frac{|\Amin|}{\epsilon} \left(\sum_{i = 1}^n z_i (x^*_i - y'_i) + z_{n + i} (y'_i - x^*_i)\right) + \epsilon^2. \label{eq:nash1}
    \end{align}
    Moreover, by considering a deviation of Player $\vx$ again to $\vy'$,
    \begin{align}
        \langle \vx^*, \mat{A} \vy^* \rangle  & \leq \langle \vy', \mat{A} \vy^* \rangle  + \frac{|\Amin|}{\epsilon} \left(\sum_{i = 1}^n z_i (y'_i - y^*_i) + z_{n + i} (y^*_i - y'
        _i)\right) + \epsilon^2 \label{eq:nash2}
    \end{align}
    Adding~\eqref{eq:nash1} and~\eqref{eq:nash2}, and using the fact that $\mat{A}$ is a symmetric matrix,
    \begin{align}
        2 \langle \vx^*, \mat{A} \vy^* \rangle  & \leq \langle \vy', \mat{A} (\vx^* + \vy^*) \rangle + \frac{|\Amin|}{\epsilon} \left(\sum_{i = 1}^n z_i (x^*_i - y^*_i) + z_{n + i} (y^*_i - x^*
        _i) \right) + 2 \epsilon^2 \notag \\
        & \leq \langle \vy', \mat{A} (2 \vy^*) \rangle  + 2\epsilon n |\Amin| + \frac{|\Amin|}{\epsilon} ( 2n\cdot 9 \epsilon \cdot 2\epsilon ) + 2\epsilon^2 \label{eq:bounderror1} \\
        & \leq 2 \langle \vy', \mat{A} \vy^* \rangle + (38 n + 2) |\Amin| \epsilon, \label{eq:final1}
    \end{align}
    where in~\eqref{eq:bounderror1} we use \Cref{lemma:closeness,lemma:small_z}. Also,
    \begin{align}
        \langle \vy^*, \mat{A} \vy^* \rangle & = \langle \vx^*, \mat{A} \vy^* \rangle + \langle \vy^* - \vx^*, \mat{A} \vy^* \rangle \notag \\
        & \leq \langle \vx^*, \mat{A} \vy^* \rangle + 2 \epsilon n |\Amin|. \label{eq:final2}
    \end{align}
    Finally, combining~\eqref{eq:final1} and~\eqref{eq:final2}, we conclude that for any $\vy' \in \Delta^n$,
    \begin{align*}
        \langle \vy^*, \mat{A} \vy^* \rangle \leq \langle \vy', \mat{A} \vy^* \rangle  + (21 n + 1) |\Amin|\epsilon.
    \end{align*}
    This concludes the proof.
\end{proof}

\subsection{Proofs from Section~\ref{sec:symmetric}}
\label{sec:proofs2}

We continue by applying Brouwer's fixed point theorem to show that symmetric min-max optimization problems always admit a symmetric equilibrium.

\begin{lemma}\label{lem:exists}
Let $\calX$ be a convex and compact set. Then, any $L$-smooth, antisymmetric function (\Cref{def:symmetric}) $f:\calX\times \calX \to \R$ admits a symmetric first-order Nash equilibrium $(\vx^*,\vx^*)$.
\end{lemma}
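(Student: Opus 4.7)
The plan is to apply Brouwer's fixed point theorem to the self-map $M:\calX\to\calX$ defined by
\[
  M(\vx) \;\defeq\; \Pi_{\calX}\bigl[\vx - \nabla_{\vx} f(\vx,\vy)\bigr|_{(\vx,\vy)=(\vx,\vx)}\bigr],
\]
exactly the operator that already appears in the proof of \Cref{lem:membership}. First I would check that $M$ is continuous: since $f$ is $L$-smooth its gradient is $L$-Lipschitz, the diagonal embedding $\vx\mapsto(\vx,\vx)$ is linear, and the Euclidean projection onto a convex compact set is $1$-Lipschitz, so $M$ is $(L+1)$-Lipschitz on $\calX$. Because $\calX$ is convex and compact, Brouwer's theorem yields some $\vx^*\in\calX$ with $M(\vx^*)=\vx^*$.

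Next I would convert this fixed-point identity into a variational inequality via the standard characterization of projections on convex sets: $\vx^*=\Pi_{\calX}[\vx^*-\nabla_{\vx}f(\vx^*,\vx^*)]$ is equivalent to
\[
  \langle \vx-\vx^*,\;\nabla_{\vx}f(\vx^*,\vx^*)\rangle \;\geq\; 0 \qquad \forall\,\vx\in\calX.
\]
This is exactly the first-order condition for $\vx^*$ in \Cref{def:FONE} (with $\epsilon=0$).

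The remaining step, and the key point where antisymmetry is used, is to verify the matching condition for the maximizer, namely $\langle \vy-\vx^*,\nabla_{\vy}f(\vx^*,\vx^*)\rangle\leq 0$ for all $\vy\in\calX$. Differentiating the identity $f(\va,\vb)=-f(\vb,\va)$ with respect to $\vb$ and evaluating at $\va=\vb=\vx^*$ yields
\[
  \nabla_{\vy}f(\vx^*,\vx^*) \;=\; -\,\nabla_{\vx}f(\vx^*,\vx^*),
\]
so the VI for the maximizer collapses to the VI for the minimizer established in the previous step. Hence $(\vx^*,\vx^*)$ is a symmetric first-order Nash equilibrium, as desired.

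I do not expect a serious obstacle here: continuity of $M$ is immediate from smoothness, Brouwer supplies the fixed point, and the antisymmetry identity $\nabla_{\vy}f(\vx^*,\vx^*)=-\nabla_{\vx}f(\vx^*,\vx^*)$ is the only nontrivial observation, yet it follows from a one-line differentiation. The only mild care needed is to ensure $M$ maps into $\calX$, which is automatic because $\Pi_{\calX}$ is applied in the definition.
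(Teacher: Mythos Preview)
Your proposal is correct and follows essentially the same approach as the paper: define the same map $M(\vx)=\Pi_{\calX}[\vx-\nabla_{\vx}f(\vx,\vx)]$, note it is $(L+1)$-Lipschitz, apply Brouwer, and use the antisymmetry identity $\nabla_{\vy}f(\vx^*,\vx^*)=-\nabla_{\vx}f(\vx^*,\vx^*)$ to obtain the maximizer's condition from the minimizer's. If anything, you spell out the projection-to-VI step more explicitly than the paper does.
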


\begin{proof}
We define the function $M: \calX \to \calX$ to be
\begin{equation}
M(\vx') := \prod_{\calX} \left[\vx' - \nabla _{\vx}  f(\vx,\vy)\Big|_{(\vx,\vy)=(\vx',\vx')} \right].
\end{equation}
Given that $f$ is $L$-smooth, we conclude that $M(\vx')$ is $(L+1)$-Lipschitz, hence continuous. Therefore, from Brouwer's fixed point theorem, there exists an $\vx^*$ so that $M(\vx^*) = \vx^*.$ Moreover, the symmetry of $f$ implies that $\nabla_{\vy}f(\vx,\vy) \Big|_{(\vx,\vy)=(\vw,\vw)} =-\nabla_{\vx}f(\vx,\vy) \Big|_{(\vx,\vy)=(\vw,\vw)} $ for all $\vw\in\calX$; as a result,
\begin{equation*}
    \begin{array}{cc}
    \vx^* &= \left[\vx^* - \nabla _{\vx}  f(\vx,\vy)\Big|_{(\vx,\vy)=(\vx^*,\vx^*)} \right] \\
    &= \left[\vx^* + \nabla _{\vy}  f(\vx,\vy)\Big|_{(\vx,\vy)=(\vx^*,\vx^*)} \right].
    \end{array} 
\end{equation*}
Therefore, $(\vx^*,\vx^*)$ is a first-order Nash equilibrium of the symmetric min-max problem with function $f$. 
\end{proof}

We next state a standard lemma that connects first-order optimality with the fixed-point gap of gradient ascent.

\begin{lemma}[\citep{ghadimi2016accelerated}, Lemma 3 for $c=1$]\label{lem:approxsmooth} Let $f(\vx)$ be a $L$-smooth function in $\vx \in \Delta^n$. Define the gradient mapping 
\[G(\vx) := \prod_{\Delta^n}
\left\{\vx+\nabla_{\vx}f(\vx)\right\}-\vx.\]
If $\norm{G(\vx^*)}_2 \leq \epsilon$, that is, $\vx^*$ is an $\epsilon$-approximate fixed point of gradient ascent with stepsize equal to one, then
\[
\max_{\vx^*+\bm{\delta} \in \Delta^n, \norm{\bm{\delta}}_2\leq 1} \bm{\delta}^{\top}\nabla_{\vx} f(\vx^*) \leq \epsilon(L+1).
\]
\end{lemma}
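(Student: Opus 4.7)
}

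The plan is to set $\vx^+ \defeq \Pi_{\Delta^n}[\vx^* + \nabla_{\vx} f(\vx^*)]$, so that $G(\vx^*) = \vx^+ - \vx^*$ and the hypothesis reads $\|\vx^+ - \vx^*\|_2 \leq \epsilon$. The workhorse of the proof is the first-order optimality (variational inequality) characterization of the Euclidean projection onto the convex set $\Delta^n$: for every $\vy \in \Delta^n$,
\[
\langle (\vx^* + \nabla_{\vx} f(\vx^*)) - \vx^+,\; \vy - \vx^+ \rangle \le 0,
\]
which rearranges to
\[
\langle \nabla_{\vx} f(\vx^*),\; \vy - \vx^+ \rangle \;\le\; \langle \vx^+ - \vx^*,\; \vy - \vx^+ \rangle.
\]

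Now fix any admissible $\bm{\delta}$, i.e.\ $\vy \defeq \vx^* + \bm{\delta} \in \Delta^n$ with $\|\bm{\delta}\|_2 \le 1$. I would split
\[
\bm{\delta}^{\top}\nabla_{\vx} f(\vx^*) \;=\; \langle \nabla_{\vx} f(\vx^*),\; \vy - \vx^+ \rangle \;+\; \langle \nabla_{\vx} f(\vx^*),\; \vx^+ - \vx^* \rangle,
\]
and bound the two summands separately. The first is handled directly by the VI above together with Cauchy--Schwarz: using $\|\vx^+ - \vx^*\|_2 \le \epsilon$ and $\|\vy - \vx^+\|_2 \le \|\vy - \vx^*\|_2 + \|\vx^* - \vx^+\|_2 \le 1 + \epsilon$, one gets a bound of order $\epsilon(1+\epsilon)$, which is at most $\epsilon$ to leading order.

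The main obstacle is the second summand $\langle \nabla_{\vx} f(\vx^*),\, \vx^+ - \vx^* \rangle$, because Cauchy--Schwarz alone would require an a-priori bound on $\|\nabla_{\vx} f(\vx^*)\|_2$, which $L$-smoothness does not provide. The trick is to substitute $\nabla_{\vx} f(\vx^*)$ by $\nabla_{\vx} f(\vx^+) - (\nabla_{\vx} f(\vx^+) - \nabla_{\vx} f(\vx^*))$, pay $L\epsilon \cdot \|\bm{\delta}\|_2 \le L\epsilon$ for the Lipschitz error via $\|\nabla_{\vx} f(\vx^+) - \nabla_{\vx} f(\vx^*)\|_2 \le L\epsilon$, and then re-apply the projection VI at $\vx^+$ to dispose of the remainder. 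Tracking the constants through the two applications of the VI (one with gradient at $\vx^*$, one after the Lipschitz swap) yields a total bound of $\epsilon + L\epsilon = \epsilon(L+1)$, which is the desired inequality. The key conceptual step is that the projection VI is the only ``structural'' property used, and $L$-smoothness enters solely through the gradient-swap at $\vx^+$ once the bound has been localized near $\vx^+$ by the VI.
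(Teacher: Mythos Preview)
The paper does not prove this lemma; it is quoted from \citet{ghadimi2016accelerated}, so there is no in-paper argument to compare against. That said, your plan has a genuine gap.

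Your decomposition is fine, and the first summand is handled cleanly by the projection VI. The problem is the second summand $\langle \nabla_{\vx} f(\vx^*),\,\vx^+-\vx^*\rangle$. After your proposed swap $\nabla f(\vx^*)=\nabla f(\vx^+)-(\nabla f(\vx^+)-\nabla f(\vx^*))$, the Lipschitz error is $|\langle \nabla f(\vx^+)-\nabla f(\vx^*),\,\vx^+-\vx^*\rangle|\le L\epsilon^2$ (there is no $\|\bm{\delta}\|$ here), and you are left with $\langle \nabla f(\vx^+),\,\vx^+-\vx^*\rangle$. There is no ``projection VI at $\vx^+$'' involving $\nabla f(\vx^+)$ to invoke: the only optimality condition available is the one for $\vx^+=\Pi_{\Delta^n}[\vx^*+\nabla f(\vx^*)]$, which features $\nabla f(\vx^*)$, and plugging $\vy=\vx^*$ into it yields the \emph{lower} bound $\langle \nabla f(\vx^*),\,\vx^+-\vx^*\rangle\ge \|\vx^+-\vx^*\|^2$, not an upper bound. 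So the step ``re-apply the projection VI at $\vx^+$ to dispose of the remainder'' does not exist.

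In fact the statement as literally written cannot be proved from $L$-smoothness alone. Take $n\ge 2$, $f(\vx)=Mx_1$ (so $L=0$), and $\vx^*=(1-\epsilon/2)\ve_1+(\epsilon/2)\ve_2$. For $M>\epsilon$ one checks $\vx^+=\ve_1$, hence $\|G(\vx^*)\|=\epsilon/\sqrt{2}\le\epsilon$, while $\max_{\bm{\delta}}\bm{\delta}^\top\nabla f(\vx^*)=M\epsilon/2$, which exceeds $\epsilon(L+1)=\epsilon$ once $M>2$. The result that does follow from the projection VI plus $L$-smoothness is the bound on $\langle \nabla f(\vx^+),\,\vy-\vx^+\rangle$ for $\vy\in\Delta^n$: the VI gives $\langle \nabla f(\vx^*),\,\vy-\vx^+\rangle\le \langle G,\,\vy-\vx^+\rangle$, and then one swaps $\nabla f(\vx^*)\to\nabla f(\vx^+)$ there (paying $L\|G\|\,\|\vy-\vx^+\|$) to get the $(L+1)$ factor. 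In the paper's actual application this distinction is immaterial because the quadratic objective has $\|\nabla f\|\le L$ on the simplex, so the ``bad'' term is trivially $\le L\epsilon$ anyway; but as a standalone lemma your second summand cannot be closed the way you describe.
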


Similarly, the next lemma makes such a connection for min-max optimization problems with coupled constraints; it is mostly extracted from~\citet[Section B.2]{DSZ21}.

\begin{lemma}\label{lem:safe} Let $f(\vx,\vy)$ be a $G$-Lipschitz, $L$-smooth function defined in some polytope domain $\mathcal{D} \subseteq \Delta^n \times \Delta^n$ of diameter $D$. Define the mapping 
\[G(\vx,\vy) := \prod_{\mathcal{D}}
\left\{\vx-\nabla_{\vx}f(\vx,\vy),\vy+\nabla_{\vy}f(\vx,\vy)\right\}-(\vx,\vy).\]
If $\norm{G(\vx^*,\vy^*)}_2 \leq \epsilon$, that is, $(\vx^*,\vy^*)$ is an $\epsilon$-approximate fixed point of (the safe version) of GDA with stepsize equal to one, then 
\[
\langle \vx-\vx^*,\nabla_{\vx}f(\vx^*,\vy^*)\rangle
\geq  -\sqrt{\epsilon}K \textrm { for }\vx \in \mathcal{D}(\vy^*) \textrm{ and }\langle \vy-\vy^*,\nabla_{\vy}f(\vx^*,\vy^*)\rangle
\leq  \sqrt{\epsilon}K \textrm { for }\vy \in \mathcal{D}(\vx^*),
\]
where $\mathcal{D}(\vx^*) = \{\vy:(\vx^*,\vy)\in\mathcal{D}\}$, 
$\mathcal{D}(\vy^*) = \{\vx:(\vx,\vy^*)\in\mathcal{D}\}$ and $K = (L+1)\sqrt{(G+4\sqrt{2})}.$
\end{lemma}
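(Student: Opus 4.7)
The plan is to unpack the $\epsilon$-approximate fixed-point hypothesis via the first-order optimality of the Euclidean projection, then specialize the resulting joint variational inequality to per-player test points that respect the coupling of $\mathcal{D}$. Letting $\vu \defeq (\vx^* - \nabla_{\vx} f(\vx^*, \vy^*),\, \vy^* + \nabla_{\vy} f(\vx^*, \vy^*))$ and $\vp = (\vp_x, \vp_y) \defeq \prod_{\mathcal{D}}(\vu)$, the hypothesis $\|G(\vx^*, \vy^*)\|_2 \leq \epsilon$ becomes $\|\vp - (\vx^*, \vy^*)\|_2 \leq \epsilon$, and the projection inequality onto the convex polytope $\mathcal{D}$ gives, for every $(\vx, \vy) \in \mathcal{D}$,
\begin{equation*}
    \langle \vu - \vp,\; (\vx, \vy) - \vp\rangle \leq 0.
\end{equation*}
After expansion into $\vx$- and $\vy$-blocks, this is a single joint variational inequality that blends both gradients.

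To isolate the min-player VI, I would substitute $(\vx, \vy) = (\vx', \vy^*)$ with $\vx' \in \mathcal{D}(\vy^*)$, a legitimate choice by the very definition of $\mathcal{D}(\vy^*)$. The $\vy$-block then collapses to $\|\vy^* - \vp_y\|_2^2 + \langle \nabla_{\vy} f,\, \vy^* - \vp_y\rangle$, whose magnitude is at most $\epsilon^2 + G\epsilon$ by $\|\vp_y - \vy^*\|_2 \leq \epsilon$ and the Lipschitz bound $\|\nabla_{\vy} f\|_2 \leq G$. Rearranging and applying Cauchy--Schwarz to the remaining $\vx$-block yields $\langle \nabla_{\vx} f(\vx^*, \vy^*),\, \vx' - \vp_x\rangle \geq -\epsilon D - G\epsilon - \epsilon^2$, where $D$ denotes the diameter of $\mathcal{D}$. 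A final basepoint switch $\langle \nabla_{\vx} f,\, \vp_x - \vx^*\rangle \geq -G\epsilon$ then delivers the desired lower bound on $\langle \nabla_{\vx} f(\vx^*, \vy^*),\, \vx' - \vx^*\rangle$. The max-player VI follows symmetrically via the test point $(\vx^*, \vy')$ with $\vy' \in \mathcal{D}(\vx^*)$.

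The main obstacle is matching the explicit $\sqrt{\epsilon}\,K$ form with $K = (L+1)\sqrt{G + 4\sqrt{2}}$ in the statement, rather than the looser $O(\epsilon(D + G))$ bound that the direct calculation above produces; this is where $L$-smoothness must genuinely enter. Following \citet[Section B.2]{DSZ21}, the sharpening proceeds by analyzing the $\eta$-stepsize gradient mapping $G_\eta(\vx, \vy) \defeq \prod_{\mathcal{D}}\{(\vx - \eta \nabla_{\vx} f, \vy + \eta \nabla_{\vy} f)\} - (\vx, \vy)$ for a suitably chosen $\eta \in (0, 1]$: $L$-smoothness controls how $\|G_\eta\|_2$ degrades from $\|G_1\|_2$ as $\eta$ shrinks, while converting the point-wise projection inequality into a VI tested against the un-scaled gradient contributes a factor $1/\eta$. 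Balancing the two opposing scalings by choosing $\eta$ of order $\sqrt{\epsilon}$ yields the advertised $\sqrt{\epsilon}$ dependence, and the explicit $K$ drops out of the algebra after absorbing the diameter term into the $G$- and $L$-dependent constants. I expect this stepsize calibration, rather than the per-player decomposition, to be the most delicate part of the transcription.
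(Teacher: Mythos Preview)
Your first two paragraphs are already a complete and \emph{sharper} proof than the one the paper gives: the direct projection inequality at $(\vx',\vy^*)$ yields $\langle \nabla_{\vx} f(\vx^*,\vy^*),\, \vx' - \vx^*\rangle \geq -(D + 2G + \epsilon)\epsilon$, which for the $\epsilon \to 0$ regime relevant to Theorem~\ref{thm:simple} is strictly stronger than the claimed $-\sqrt{\epsilon}\,K$. There is no gap to close, and $L$-smoothness is simply not needed for the inequality the lemma asserts.

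The paper does \emph{not} use the $\eta$-stepsize calibration you sketch. Instead, it proceeds via a detour: first, \citet[Claim B.2]{DSZ21} gives the joint inequality $\langle (\vx_\Delta,\vy_\Delta) - (\vx^*,\vy^*),\, (\vx,\vy) - (\vx^*,\vy^*)\rangle \leq (G+2D)\epsilon$ for all $(\vx,\vy) \in \mathcal{D}$ (essentially your projection inequality with the basepoint shifted from $\vp$ to $(\vx^*,\vy^*)$); next, plugging in the per-player unsafe projection $\vq \defeq \prod_{\mathcal{D}(\vy^*)}(\vx^* - \nabla_\vx f)$ and combining with the optimality of $\vq$ gives $\|\vx^* - \vq\|^2 \leq (G+2D)\epsilon$, hence $\|\vx^* - \vq\| \leq \sqrt{(G+2D)\epsilon}$; finally, Lemma~\ref{lem:approxsmooth} converts this unsafe fixed-point bound back into a VI, contributing the $(L+1)$ factor. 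The $\sqrt{\epsilon}$ thus arises from the VI $\to$ fixed-point $\to$ VI round trip, not from any stepsize balancing, and the $L$-dependence enters only in the last step via Lemma~\ref{lem:approxsmooth}. Your direct argument avoids this round trip entirely; discard the $\eta$-stepsize paragraph and keep your linear-in-$\epsilon$ bound, noting that it dominates the lemma's stated form.
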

\begin{proof}
Let 
$(\vx_{\Delta},\vy_{\Delta}) = (\vx^*-\nabla_{\vx}f(\vx^*,\vy^*),\vy^*+\nabla_{\vy}f(\vx^*,\vy^*)).$ 
In~\citet[Claim B.2]{DSZ21}, it was shown that for all $(\vx,\vy)\in \mathcal{D}$, we have
\[\langle(\vx_{\Delta},\vy_{\Delta})-(\vx^*,\vy^*),(\vx,\vy)-(\vx^*,\vy^*)\rangle\leq (G+2D)\epsilon.\]
Using the above inequality, it was concluded that  $(\vx^*,\vy^*)$ is an approximate fixed point of the ``unsafe'' version of GDA; specifically,
\[
\left\|\vx^* - \prod_{\mathcal{D}(\vy^*)}\{\vx^* - \nabla_{\vx}f(\vx^*,\vy^*)\}\right\|
\leq \sqrt{(G+2D)\epsilon}
\]
and
\[
\left\|\vy^* - \prod_{\mathcal{D}(\vx^*)}\{\vy^* + \nabla_{\vy}f(\vx^*,\vy^*)\}\right\|
\leq \sqrt{(G+2D)\epsilon}.
\]
We now use~\Cref{lem:approxsmooth} for both inequalities above, together the fact that $D = 2\sqrt{2}$, to conclude that
\[
\langle \vx-\vx^*,\nabla_{\vx}f(\vx^*,\vy^*)\rangle
\geq  -\sqrt{(G+4\sqrt{2})\epsilon}(L+1) \textrm { for }\vx \in \mathcal{D}(\vy^*),\]  and \[\langle \vy-\vy^*,\nabla_{\vy}f(\vx^*,\vy^*)\rangle
\leq  \sqrt{(G+4\sqrt{2})\epsilon}(L+1) \textrm { for }\vy \in \mathcal{D}(\vx^*).
\]
\end{proof}

\subsection{Proofs from Section~\ref{sec:nonsymmetric}}
\label{sec:proofs3}

In this subsection, our goal is to establish~\Cref{theorem:symmetric-new} that was claimed earlier in the main body, which forms the basis for~\Cref{theorem:uniqueATG,theorem:non-symmetric}.

We consider the symmetric game $(\lineA, \lineA^{\top})$, where $\lineA \in \mathbb{R}^{n \times n}$ is a symmetric matrix. In particular, since $\lineA = \lineA^{\top}$, the two players in the game share the same payoff matrix. The payoff matrix $\lineA$ is constructed based on an underlying graph $G = ([n], E) $ and a parameter $\delta \in (0, 1)$ as follows:
\begin{align*}
    \lineA_{i, j} = 
    \begin{cases}
        \delta & \text{if $i = j$}, \\
        1 & \text{if $(i, j) \in E$}, \\
        0 & \text{otherwise.}
    \end{cases}
\end{align*}

\begin{lemma} \label{lemma:value_of_approx_nash_on_clique}
    Let $(\hat{\vx}, \hat{\vx})$ be an $\epsilon$-well-supported NE of the game where $\hat{\vx}$ is supported on a max clique of size $k$, denoted as $C_k$, then the value $u(\hat{\vx}, \hat{\vx})$ is at least $ 1 - \frac{1}{k} + \frac{\delta}{k} - (\frac{k - \delta}{1 - \delta})  \epsilon$ and $\norm{\hat{\vx} - \vx^*}_{\infty} \leq \frac{k - \delta}{1 - \delta} \epsilon$ where $\vx^* = \frac{1}{k} \sum_{i \in C_k} \ve_{i}$.
\end{lemma}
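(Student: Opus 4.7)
The plan is to exploit the simple combinatorial structure of $\lineA$ on a clique. Since $\hat{\vx}$ is supported on the clique $C_k$, for any $i \in C_k$ the payoff against $\hat{\vx}$ evaluates to
\begin{equation*}
  (\lineA \hat{\vx})_i = \sum_{j \in C_k,\, j \neq i} \hat{x}_j + \delta \hat{x}_i = 1 - (1-\delta)\hat{x}_i,
\end{equation*}
because every off-diagonal entry indexed by the clique is $1$ and every diagonal entry is $\delta$. This turns the ``all pure strategies in the support have nearly equal payoff'' condition of a well-supported equilibrium into a direct constraint on the entries of $\hat{\vx}$.

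Concretely, the first step is to apply $\epsilon$-well-supportedness on the support $C_k$: for every $i,j \in C_k$,
\begin{equation*}
  (\lineA \hat{\vx})_i \geq (\lineA \hat{\vx})_j - \epsilon \ \Longrightarrow\ \hat{x}_i - \hat{x}_j \leq \tfrac{\epsilon}{1-\delta}.
\end{equation*}
Since $\sum_{i \in C_k} \hat{x}_i = 1$, the average entry is $1/k$, so bounding the spread above gives $|\hat{x}_i - 1/k| \leq \tfrac{(k-1)\epsilon}{k(1-\delta)}$ for $i \in C_k$, and the remaining coordinates of $\hat{\vx}$ and $\vx^*$ are simultaneously zero. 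A crude estimate $\tfrac{k-1}{k(1-\delta)} \leq \tfrac{k-\delta}{1-\delta}$ yields the claimed $\ell_\infty$-bound $\|\hat{\vx} - \vx^*\|_\infty \leq \tfrac{k-\delta}{1-\delta}\epsilon$.

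For the value, I would substitute the closed form for $(\lineA \hat{\vx})_i$ into $u(\hat{\vx},\hat{\vx}) = \sum_{i \in C_k} \hat{x}_i (\lineA \hat{\vx})_i$ to get
\begin{equation*}
  u(\hat{\vx},\hat{\vx}) = 1 - (1-\delta)\sum_{i\in C_k} \hat{x}_i^2.
\end{equation*}
Using the elementary bound $\sum_i \hat{x}_i^2 \leq (\max_i \hat{x}_i)\sum_i \hat{x}_i = \max_i \hat{x}_i$ together with the previous step's upper bound $\max_i \hat{x}_i \leq \tfrac{1}{k} + \tfrac{(k-1)\epsilon}{k(1-\delta)}$, I would obtain $u(\hat{\vx},\hat{\vx}) \geq 1 - \tfrac{1}{k} + \tfrac{\delta}{k} - \tfrac{(k-1)\epsilon}{k}$, and loosen $\tfrac{k-1}{k} \leq \tfrac{k-\delta}{1-\delta}$ to match the lemma's form.

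There isn't a real conceptual obstacle here; the only thing to be careful about is the quantitative loosening so the final constants match the statement. The whole argument relies on the fact that $\lineA$ restricted to a clique has the uniform simplex fixed point as its unique exact symmetric equilibrium, and that the $(1-\delta)$ factor in $(\lineA \hat{\vx})_i$ controls how quickly near-equality of payoffs forces near-equality of weights.
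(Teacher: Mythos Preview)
Your proposal is correct and follows essentially the same approach as the paper: both exploit the closed form $(\lineA\hat{\vx})_i = 1-(1-\delta)\hat{x}_i$ on the clique and use $\epsilon$-well-supportedness to bound the spread of the coordinates around $1/k$. The only minor difference is in the value step: the paper lower-bounds $u(\hat{\vx},\hat{\vx})$ via $u(\hat{\vx},\hat{\vx})\geq u(\ve_i,\hat{\vx})-\epsilon$ for a coordinate of minimal mass, whereas you compute $u(\hat{\vx},\hat{\vx})=1-(1-\delta)\sum_i\hat{x}_i^2$ exactly and bound $\sum_i\hat{x}_i^2\leq \max_i\hat{x}_i$; both routes give the same (indeed slightly tighter than stated) constants before loosening to $\tfrac{k-\delta}{1-\delta}$.
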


\begin{proof}
    Since $\hat{\vx}$ is supported on $C_k$, let $i$ be a coordinate that $\hat{\vx}^*$ puts the least probability mass on; that is, $i \in \argmin_{j \in C_k} \hat{x}_j.$ Considering the utility of playing action $a_i$, we have
    \begin{align*}
        u(\ve_i, \hat{\vx}) = \hat{x}_i \cdot \delta + (1 - \hat{x}_i) \cdot 1 = 1 - \hat{x}_i + \hat{x}_i \delta.
    \end{align*}
    Moreover, let $j \in C_k$ be a coordinate such that $\hat{\vx}_j \geq \frac{1}{k}$. It should hold that 
    \begin{align*}
    u(\ve_j, \hat{\vx}) = 1 - (1-\delta)\hat{x}_j \leq 1 - \frac{1}{k} + \frac{\delta}{k}.
    \end{align*}
    Since $(\hat{\vx}, \hat{\vx})$ is an $\epsilon$-well-supported NE, we have
    \begin{align*}
    & u(\ve_j, \hat{\vx}) \geq u(\ve_i, \hat{\vx}) - \epsilon \\
    \Rightarrow  \quad  &1 - \frac{1}{k} + \frac{\delta}{k} + \epsilon \geq 1 - \hat{x}_i + \hat{x}_i \delta \\
        \Rightarrow \quad &\hat{x}_i \geq \frac{1}{k} - \frac{\epsilon}{1 - \delta}.
    \end{align*}
    Thus, for all coordinates $i\in C_k$ , we have $\frac{1}{k} -\frac{\epsilon}{1 - \delta}\leq \hat{x}_i \leq \frac{1}{k} + \frac{\epsilon}{1 - \delta} \cdot (k - 1).$
    It then holds that
    \begin{align*}
        u(\hat{\vx}, \hat{\vx}) & \geq u(\ve_i, \hat{\vx}) - \epsilon \\
        & \geq \left(\frac{1}{k} + \frac{\epsilon}{1 - \delta} \cdot (k - 1)\right) \cdot \delta + \left(1 - \frac{1}{k} - \frac{\epsilon}{1 - \delta} \cdot (k - 1)\right) - \epsilon \\
        & \geq 1 - \frac{1}{k} + \frac{\delta}{k} - \left(\frac{k - \delta}{1 - \delta}\right) \epsilon.
    \end{align*}
\end{proof}

\begin{assumption} \label{assump:parameter}
    For the rest of this subsection, we set the parameters as follows:
    \begin{itemize}
        \item $n \geq k \geq 10;$
        \item $\epsilon < \delta(1 - \delta)/6n^7;$
        \item $\delta \defeq 1/2.$
    \end{itemize}
\end{assumption}

(Using the symbolic value of $\delta$ is more convenient in our derivations below.)

\begin{proof}[Proof of~\Cref{lemma:well_supported_nash_value}]
    We let $\vx^* = \frac{1}{k} \sum_{i \in C_k} \ve_{i}$. The proof considers the following three cases:
    \begin{itemize}
        \item The symmetric $\epsilon$-well-supported NE $(\hat{\vx}, \hat{\vx})$ has support size less than $k$.
        \item The symmetric $\epsilon$-well-supported NE $(\hat{\vx}, \hat{\vx})$ has support size greater than $k$.
        \item The symmetric $\epsilon$-well-supported NE $(\hat{\vx}, \hat{\vx})$ has support size equal to $k$ but is not supported on a clique.
    \end{itemize}
    We proceed to show that for any of the three cases above, we would not be able to have a symmetric $\epsilon$-well-supported NE that achieves value greater than $1 - \frac{1}{k} + \frac{\delta}{k} - \frac{2\delta}{ n^2k^4} + 2 \epsilon$, which is a contradiction.
    \begin{itemize}
        \item For the first case, since the support has size less than $k$, we can find a coordinate $i \in [n]$ such that $\hat{\vx}_i \geq \frac{1}{k - 1}.$ Therefore, the value of playing that action is 
        \begin{align*}
            u(\ve_i, \hat{\vx}) \leq 1 - \frac{1}{k - 1} + \frac{\delta}{k - 1}.
        \end{align*}  
        Since $(\hat{\vx}, \hat{\vx})$ is a symmetric $\epsilon$-well-supported NE, we have
        \begin{align*}
            u(\hat{\vx}, \hat{\vx}) & \leq u(\ve_i, \hat{\vx}) + \epsilon \\
            & \leq 1 - \frac{1}{k - 1} + \frac{\delta}{k - 1} + \epsilon \\
            & \leq 1 - \frac{1}{k} +\frac{\delta}{k} - \frac{2\delta}{n^2k^4} + 2\epsilon,
        \end{align*}
        where in the last step we use~\Cref{assump:parameter}.
        \item For the second case, suppose $|\text{supp}(\hat{\vx})| = m > k$. By \Cref{lemma:not_connected_set}, since the maximum clique size is $k$, we can find a set $\mathcal{S} \subseteq \text{supp}(\hat{\vx})$ with at least $m - k + 1$ elements such that for each coordinate $i \in \mathcal{S}$, we can find a coordinate $j \in \mathcal{S}$ such that $\lineA_{i, j} = \lineA_{j, i} = 0.$ Now we consider the utility of playing action $a_i$ and $a_j$, we have
    \begin{align*}
        & u(\ve_i, \hat{\vx}) = \sum_{l \in \text{supp}(\hat{\vx}) - \{i, j\}} \hat{x}_l \cdot \lineA_{i, l} + \hat{x}_i \cdot \delta + \hat{x}_j  \cdot \lineA_{i, j}, \\
        &u(\ve_j, \hat{\vx}) = \sum_{l \in \text{supp}(\hat{\vx}) - \{i, j\}} \hat{x}_l \cdot \lineA_{j, l} + \hat{x}_j \cdot \delta + \hat{x}_i  \cdot \lineA_{i, j}.
    \end{align*}
    Since $(\hat{\vx}, \hat{\vx})$ is a symmetric $\epsilon$-well-supported NE, we have
    \begin{align*}
        & \sum_{l \in \text{supp}(\hat{\vx}) - \{i, j\}} \hat{x}_l \cdot \lineA_{i, l} + \hat{x}_i \cdot \delta + \hat{x}_j  \cdot\lineA_{i, j} \geq \sum_{l \in \text{supp}(\hat{\vx}) - \{i, j\}} \hat{x}_l \cdot \lineA_{j, l} + \hat{x}_j \cdot \delta + \hat{x}_i  \cdot \lineA_{i, j} - \epsilon\\
        \Rightarrow \quad & \sum_{l \in \text{supp}(\hat{\vx}) - \{i, j\}} \hat{x}_l \cdot \lineA_{i, l}  - \sum_{l \in \text{supp}(\hat{\vx}) - \{i, j\}} \hat{x}_l \cdot \lineA_{j, l}   \geq \hat{x}_j \cdot \delta - \hat{x}_i \cdot \delta - \epsilon.
    \end{align*}
    Now, by moving all the probability mass from action $j$ to action $i$, we form a new strategy $\vx' = \hat{\vx} + \hat{x}_j \cdot \ve_i - \hat{x}_j \cdot \ve_j $ such that
    \begin{align}
        u(\vx', \vx') - u(\hat{\vx}, \hat{\vx}) &=  2 \hat{x}_j\left(\sum_{l \in \text{supp}(\hat{\vx}) - \{i ,j\}} \hat{x}_l \cdot \lineA_{i, l} - \sum_{l \in \text{supp}(\hat{\vx}) - \{i ,j\}} \hat{x}_ml\cdot \lineA_{j, l}\right) + 2 \hat{x}_i \cdot \hat{x}_j \cdot \delta \notag \\
        & \geq  2 \hat{x}_j \cdot ((\hat{x}_j - \hat{x}_i) \delta - \epsilon)+ 2 \hat{x}_i \cdot\hat{x}_j \cdot \delta \notag \\
        & \geq 2 {\hat{x}_j}^2 \delta - 2 \epsilon \label{eq:marginal_gain_when_one_move}. 
    \end{align}
        Suppose there is a coordinate $i \in \mathcal{S}$ such that $\hat{x}_i \geq \frac{1}{nk^2}$ then from \eqref{eq:marginal_gain_when_one_move}, we have 
        \begin{align*}
            u(\hat{\vx}, \hat{\vx}) &\leq u(\vx^*, \vx^*) - 2\cdot(\frac{1}{nk^2})^2 \cdot \delta + 2 \epsilon\\
            & = 1 - \frac{1}{k} + \frac{\delta}{k} - \frac{2\delta}{ n^2k^4} + 2 \epsilon.
        \end{align*}
        If $\hat{x}_i < \frac{1}{nk^2}$ for any $i \in \mathcal{S}$, then there exists a coordinate $l \not \in \mathcal{S}$ such that $\hat{x}_l > \frac{1 - (m - k + 1)\cdot\frac{1}{nk^2}}{m - (m - k + 1)} \geq \frac{1}{k} + \frac{1}{k^2}.$ Then, considering the utility when playing action $a_l$,
        \begin{align}
            u(\ve_l, \hat{\vx}) &\leq 1 \cdot (1 - \frac{1}{k} - \frac{1}{k^2}) + \delta \cdot (\frac{1}{k} + \frac{1}{k^2}) \notag \\
            & = 1 - \frac{1}{k} + \frac{\delta}{k} - \frac{1}{k^2} + \frac{\delta}{k^2} \notag \\
            &< 1 - \frac{1}{k} + \frac{\delta}{k} - \frac{2\delta}{n^2k^4}, \label{eq:case_two_vaue_of_nash}
        \end{align}
        where in~\eqref{eq:case_two_vaue_of_nash} we used~\Cref{assump:parameter}.
        
        Since the $l$th action is played with positive probability and $(\hat{\vx}, \hat{\vx})$ is an $\epsilon$-well-supported NE, we have $u(\hat{\vx}, \hat{\vx}) \leq u(\ve_l, \hat{\vx}) + \epsilon < 1 - \frac{1}{k} + \frac{\delta}{k} - \frac{2\delta}{n^2k^4} + 2\epsilon.$ 
        \item For the third case, since the support is not on a clique, the exists at least coordinates $i, j$ such that $\hat{x}_i >0, \hat{x}_j>0,$ and $\lineA_{i, j} = \lineA_{j, i} = 0.$ Similarly as case two, if $\hat{x}_i \geq \frac{1}{nk^2}$ or $\hat{x}_j \geq \frac{1}{nk^2}$, then we have  
        \begin{align*}
            u(\hat{\vx}, \hat{\vx}) &\leq u(\vx^*, \vx^*) - 2\cdot(\frac{1}{nk^2})^2 \cdot \delta + 2 \epsilon\\
            & = 1 - \frac{1}{k} + \frac{\delta}{k} - \frac{2\delta}{ n^2k^4} + 2\epsilon.
        \end{align*}
        If $\hat{x}_i < \frac{1}{nk^2}$ and $\hat{x}_j < \frac{1}{nk^2}$, then there exists an coordinate $l$ such that $\hat{x}_l \geq \frac{1 - 2 \cdot \frac{1}{nk^2}}{k - 2} > \frac{1}{k} + \frac{1}{k^2}.$ Same as \eqref{eq:case_two_vaue_of_nash}, we conclude that $u(\hat{\vx}, \hat{\vx})$ is at most $1 - \frac{1}{k} + \frac{\delta}{k} - \frac{2\delta}{n^2k^4} + 2\epsilon.$
    \end{itemize}
    The proof is complete.
\end{proof}

We now construct a new symmetric identical payoff game $(\mat{B}, \mat{B})$, where $\mat{B}$ is defined as
    \begin{align}
        \mat{B} = \begin{bmatrix}
         \lineA_{1, 1} & \cdots & \lineA_{1, n}  & r\\
         \vdots & \ddots & \vdots & \vdots \\
         \lineA_{n, 1} & \cdots & \lineA_{n,n} & r\\
         r & \cdots & r  & V\\
    \end{bmatrix}. \label{eq:unique_NP_matrix}
    \end{align}
Above, $V \defeq 1 - \frac{1}{k} + \frac{\delta}{k}$ and $r \defeq 1 - \frac{1}{k} + \frac{\delta}{k} - \frac{\delta}{n^2k^4} + 3\epsilon$. (We caution that the values of $V$ and $r$ have been set differently compared to~\Cref{sec:nonsymmetric}.) Similarly to our derivation in~\cref{eq:three_exact_NE}, it follows that the symmetric (exact) Nash equilibria of this game can only be in one of the following forms:
\begin{enumerate}
        \item $(\vx^*, \vx^*)$ with $\vx^* \defeq \ve_{n+ 1}$; \label{eq:NE_case1}
        \item $(\vx^*, \vx^*)$ with $\vx^* \defeq \frac{1}{k} \sum_{i \in C_k} \ve_i$, where $C_k \subseteq [n]$ is a clique in $G$ of size $k$ \label{eq:NE_case2};
        \item $(\vx^*, \vx^*)$ with $\vx^* \defeq \frac{1}{2} \ve_{n + 1} + \frac{1}{2k} \sum_{i \in C_k} \ve_i$, where $C_k \subseteq [n]$ is a clique in $G$ of size $k$ \label{eq:NE_case3}.
\end{enumerate}

We now show the following lemma.

\begin{lemma} \label{lemma:NP_completeness_for_two_symmetric_well_supported}
    For any $\epsilon$-well-supported NE $(\hat{\vx}, \hat{\vx})$ in game $(\mat{B}, \mat{B})$, it holds that $\norm{\hat{\vx} - \vx^*}_{\infty} \leq 2n^6\epsilon$, where $(\vx^*, \vx^*)$ is an exact NE in one of the three cases above (\ref{eq:NE_case1},\ref{eq:NE_case2},\ref{eq:NE_case3}).
\end{lemma}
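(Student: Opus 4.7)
\textbf{Setup.} Write $p := \hat{x}_{n+1}$ and, when $p < 1$, set $\tilde{\vx} := (\hat{\vx}_{[1\cdots n]})/(1-p) \in \Delta^n$ with $S := \mathrm{supp}(\tilde{\vx})$, so that $\hat{\vx} = p \vec{e}_{n+1} + (1-p) \tilde{\vx}$. For any $i \in [n]$ we have $u_{\mat{B}}(\vec{e}_i, \hat{\vx}) = (1-p)(\lineA \tilde{\vx})_i + p r$, while $u_{\mat{B}}(\vec{e}_{n+1}, \hat{\vx}) = (1-p) r + p V$; and the three exact NE correspond respectively to $p = 1$, $p = 0$ with $\tilde{\vx}$ uniform on a max $k$-clique, and $p = 1/2$ with $\tilde{\vx}$ uniform on a max $k$-clique. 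If $1-p \leq 2n^6\epsilon$, we are trivially within $\ell_\infty$-distance $2n^6\epsilon$ of $\vec{e}_{n+1}$, matching Case~\ref{eq:NE_case1}; henceforth assume $1 - p > 2n^6\epsilon$, and write $\epsilon' := \epsilon/(1-p)$, which is much smaller than $\delta/(n^2 k^4)$ under Assumption~\ref{assump:parameter}.

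\textbf{WSNE implications.} The $\epsilon$-WSNE condition applied jointly to $\vec{e}_i$ (for $i \in S$) and $\vec{e}_{n+1}$ (if $p > 0$), after dividing by $(1-p)$, gives
\begin{equation*}
    (\lineA \tilde{\vx})_i = c \pm \epsilon', \qquad c := r + \frac{p(V-r)}{1-p},
\end{equation*}
for every $i \in S$, while the WSNE comparisons purely among $\{\vec{e}_i : i \in S\}$ yield a ``support-$\epsilon'$-WSNE'' for $\tilde{\vx}$ in $(\lineA,\lineA)$: $(\lineA \tilde{\vx})_i \geq (\lineA \tilde{\vx})_j - \epsilon'$ for all $i, j \in S$ (when $p = 0$, we instead use $u_{\mat B}(\vec{e}_{n+1},\hat{\vx}) = r$ to obtain $(\lineA\tilde{\vx})_i \geq r - \epsilon'$, and the support-WSNE condition promotes to a genuine $\epsilon'$-WSNE of $(\lineA,\lineA)$ since the condition against $\vec{e}_{n+1}$ now subsumes all $(\lineA\tilde{\vx})_j$ for $j \in [n]$).

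\textbf{Dichotomy on the support of $\tilde{\vx}$.} In the first case, suppose $\tilde{\vx}$ is not supported on a max $k$-clique. Inspecting the proof of Lemma~\ref{lemma:well_supported_nash_value}, every invocation of the WSNE property is between supported pure strategies, so it extends verbatim to the support-$\epsilon'$-WSNE and yields $u(\tilde{\vx}, \tilde{\vx})_{\lineA} \leq V - 2\delta/(n^2 k^4) + 2\epsilon'$. But the key relation also gives $u(\tilde{\vx}, \tilde{\vx})_{\lineA} = \sum_{i\in S}\tilde{x}_i (\lineA \tilde{\vx})_i \geq c - \epsilon' \geq r - \epsilon' = V - \delta/(n^2 k^4) + 3\epsilon - \epsilon'$; rearranging forces $1-p = O(n^6\epsilon/\delta)$, which (after fixing the constant in the threshold $2n^6\epsilon$ of the setup paragraph slightly larger if necessary) contradicts $1-p > 2n^6\epsilon$. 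In the second case, $\tilde{\vx}$ is supported on a max $k$-clique $C_k$, and Lemma~\ref{lemma:value_of_approx_nash_on_clique} (whose proof is again within-support) applied with $\epsilon'$ gives $\|\tilde{\vx} - \vx^*_0\|_\infty \leq O(k)\epsilon'$ where $\vx^*_0 := (1/k)\sum_{i \in C_k} \vec{e}_i$; this implies $(\lineA\tilde{\vx})_i = V \pm O(k)\epsilon'$ for $i \in C_k$, so equating with $c \pm \epsilon'$ yields $(V - r)|1-2p|/(1-p) \leq O(k)\epsilon'$, i.e., $|1-2p| = O(n^{O(1)})\epsilon$ since $V-r \geq \delta/(2n^2k^4)$.

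\textbf{Conclusion and main obstacle.} If $p = 0$, we obtain $\|\hat{\vx} - \vx^*_0\|_\infty \leq O(k\epsilon)$, matching Case~\ref{eq:NE_case2}; if $p > 0$, combining $|p-1/2| = O(n^{O(1)}\epsilon)$ with $\|(1-p)\tilde{\vx} - (1/(2k))\sum_{i\in C_k}\vec{e}_i\|_\infty \leq |p - 1/2|/k + (1-p)\cdot O(k)\epsilon'$ yields $\|\hat{\vx} - ((1/2)\vec{e}_{n+1} + (1/(2k))\sum_{i\in C_k}\vec{e}_i)\|_\infty = O(n^{O(1)}\epsilon)$, matching Case~\ref{eq:NE_case3}. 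The one delicate point is that $\tilde{\vx}$ need not be a \emph{bona fide} $\epsilon'$-WSNE of $(\lineA,\lineA)$ when $0 < p < 1$, because the WSNE condition of $(\mat{B},\mat{B})$ imposes no constraint on unsupported actions in $[n]$; the main observation making the argument work is that the proofs of Lemmas~\ref{lemma:value_of_approx_nash_on_clique} and~\ref{lemma:well_supported_nash_value} only use within-support comparisons, so they survive the weakening to support-WSNE. Verifying the constants line up with the target bound $2n^6\epsilon$ (as opposed to, say, $n^{O(1)}\epsilon$) is a routine but somewhat tedious bookkeeping exercise, potentially requiring a slight adjustment of the threshold used in the setup paragraph.
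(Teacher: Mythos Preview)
Your proposal is essentially correct and follows the same overall outline as the paper (pass to the normalized strategy $\tilde{\vx}$, invoke Lemmas~\ref{lemma:value_of_approx_nash_on_clique} and~\ref{lemma:well_supported_nash_value} for the $(\lineA,\lineA)$ game, and then pin down $p$), but it differs from the paper in one organizational choice that costs you the exact constant. The paper does \emph{not} threshold on $1-p$; instead it first invokes \citet[Proposition~4]{MCLENNAN2010683}---which guarantees, for any $\tilde{\vx}\in\Delta^n$, a supported coordinate $i$ with $(\lineA\tilde{\vx})_i\le V$---to obtain $\alpha\le\tfrac12+\epsilon/(2(V-r))\le\tfrac12+2n^6\epsilon$ \emph{unconditionally}. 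Only after securing $\alpha\le 2/3$ does it derive a $3\epsilon$-WSNE for $\eta(\hat{\vx}_{[1\cdots n]})$ and apply the clique lemmas to force the support onto a $k$-clique, after which the matching lower bound on $\alpha$ follows. Your dichotomy instead yields in Case~A only $1-p\le 6n^6\epsilon$, so (as you correctly note) the threshold and hence the final bound come out as $\Theta(n^6)\epsilon$ rather than $2n^6\epsilon$; this is immaterial downstream.

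One simplification: your ``main obstacle'' is not an obstacle at all. The $\epsilon$-WSNE property of $(\mat{B},\mat{B})$ compares any supported action to \emph{every} action $j\in[n{+}1]$, not just supported ones. For $i\in S$ and arbitrary $j\in[n]$ it reads $(1-p)(\lineA\tilde{\vx})_i+pr\ge(1-p)(\lineA\tilde{\vx})_j+pr-\epsilon$, and dividing by $1-p$ gives a \emph{genuine} $\epsilon'$-WSNE of $(\lineA,\lineA)$, so you need not verify that the lemma proofs use only within-support comparisons. The paper uses exactly this observation (after establishing $\alpha\le 2/3$) to obtain a genuine $3\epsilon$-WSNE.
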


\begin{proof}
    First, we observe that since $V > r$, clearly $(\ve_{n +1}, \ve_{n + 1})$ is a $\epsilon$-well-supported NE; in this case, $\norm{\vx' - \vx^*}_{\infty} = 0.$ Furthermore, since $r = 1 - \frac{1}{k} + \frac{\delta}{k} - \frac{\delta}{n^2k^4} + 3 \epsilon$, the game does not attain any $\epsilon$-NE with value less than $1 - \frac{1}{k} + \frac{\delta}{k} - \frac{\delta}{n^2k^4} + 2 \epsilon.$ Suppose the game admits an symmetric $\epsilon$-well-supported NE $(\hat{\vx}, \hat{\vx})$ where $\hat{\vx}$ is supported only on the first $n$ actions. Since $u(\hat{\vx}, \hat{\vx}) \geq 1 - \frac{1}{k} + \frac{\delta}{k} - \frac{\delta}{n^2k^4} + 2 \epsilon > 1 - \frac{1}{k} + \frac{\delta}{k} - \frac{2\delta}{n^2k^4} + 2\epsilon,$ taking  $\vx^*$ as in \eqref{eq:NE_case2}, we conclude that $\norm{\hat{\vx} - \vx^*}_{\infty} \leq \frac{k - 1}{1 - \delta} \epsilon < 2n^6\epsilon$ from \cref{lemma:well_supported_nash_value}. 
    
     We proceed to the case where there is a mixed symmetric $\epsilon$-well-supported Nash $(\hat{\vx}, \hat{\vx})$ between the last action and the rest of actions such that $0 < \hat{x}_{n+1} < 1.$ Denote $\hat{x}_{n + 1} = \alpha$ and $\eta(\cdot)$ to denote the renormalization operation. Since $\left(\eta(\hat{\vx}_{[1 \cdots n]}), \eta(\hat{\vx}_{[1 \cdots n]})\right)$ is a symmetric strategy profile, from~\citet[Proposition 4]{MCLENNAN2010683}, we conclude that there is at least one coordinate $i \in \text{supp}(\hat{\vx}) - \{n+1\}$ such that
     \begin{align*}
         u_A\left(\ve_i, \eta(\hat{\vx}_{[1 \cdots n]})\right) \leq 1 - \frac{1}{k} + \frac{\delta}{k} = V,
     \end{align*}
     where $u_A$ is the utility when the payoff matrix is $\lineA.$ Since $\left(\eta(\hat{\vx}_{[1 \cdots n]}), \eta(\hat{\vx}_{[1 \cdots n]})\right)$ is an $\epsilon$-well-supported NE, we have
     \begin{align}
         & u_B(\ve_i, \hat{\vx}) \geq u_B(\ve_{n + 1}, \hat{\vx}) - \epsilon \notag \\
         \Rightarrow \quad & (1 - \alpha)\cdot V + \alpha \cdot r \geq (1 - \alpha)r + \alpha V - \epsilon \notag \\
         \Rightarrow \quad & \alpha \leq \frac{1}{2} + \frac{\epsilon}{2(V - r)}, \label{eq:lower_bound_on_alpha}
     \end{align}
     where $u_B$ is the utility function when the payoff matrix is $\mat{B}$. Plugging in the value of $V$ and $r$ and using~\Cref{assump:parameter}, we find that $\alpha \leq \frac{1}{2} + 2n^6\epsilon \leq \frac{2}{3}$.
     
     Now, we observe that for any action in the support other than the last action $a_i$, the utility of playing such action $u_B(a_i, \hat{\vx}) = u_A(a_i, \hat{\vx}_{[1 \cdots n]}) + r \alpha$. Since $(\hat{\vx}, \hat{\vx})$ is an $\epsilon$-well-supported Nash Equilibrium, we have $u_B(a_i, \hat{\vx}) \geq \max_{j} u_B(a_j, \hat{\vx}) - \epsilon$ for all pairs $(i, j) \in \text{supp}(\hat{\vx}).$ Since $\alpha \leq \frac{2}{3},$ it follows that $u_A\left(a_i, \eta(\hat{\vx}_{[1 \cdots n]})\right) \geq \max_j u_A\left(a_j, \eta(\hat{\vx}_{[1 \cdots n]})\right) - 3\epsilon$ for any pairs $(i, j) \in \text{supp}(\hat{\vx}) - \{n + 1\}.$ Thus, we conclude that $\left(\eta(\hat{\vx}_{[1 \cdots n]}), \eta(\hat{\vx}_{[1 \cdots n]})\right)$ forms a symmetric $3\epsilon$-well-supported Nash Equilibrium in game $(\lineA, \lineA)$. Further, the value of playing the last action is $(1 - \alpha)r + \alpha V > r$, and so the only situation where there is a mixed Nash between the last action and the rest actions is when $u_A\left(\left(\eta(\hat{\vx}_{[1\cdots n]}), \eta(\hat{\vx}_{[1\cdots n]})\right)\right) \geq r - \epsilon$. Therefore, by \Cref{lemma:value_of_approx_nash_on_clique} and \Cref{lemma:well_supported_nash_value}, we conclude that $\hat{\vx}_{[1 \cdots n]}$ is supported on a clique of size $k.$
     There exits at least one coordinate $i \in [n]$, with $0 < \hat{x}_i$ and $ \hat{x}_i \geq \frac{1}{k},$ such that
    \begin{align*}
        u_A(\ve_{i}, \eta(\hat{\vx}_{[1 \cdots n]})) \geq 1 - \frac{1}{k} + \frac{\delta}{k} = V.
    \end{align*}
    Since $(\hat{\vx}, \hat{\vx})$ is an $\epsilon$-well-supported Nash, we have $u(\ve_i, \hat{\vx}) \leq u(\ve_{n +1}, 
    \hat{\vx}) + \epsilon$, and so this gives
    \begin{align}
        & (1 - \alpha) \cdot V + \alpha \cdot r \leq (1 - \alpha) \cdot r + \alpha \cdot V + \epsilon \label{eq:alpha_upper_lower_bound}\\
        \Rightarrow \quad &  \alpha \geq \frac{1}{2} - \frac{\epsilon}{2(V - r)}. \label{eq:range_alpha}
    \end{align}
    Using~\Cref{assump:parameter} and 
    combining with~\eqref{eq:lower_bound_on_alpha},
    \begin{align*}
        \frac{1}{2} - 2n^6\epsilon \leq \alpha \leq \frac{1}{2} + 2n^6\epsilon.
    \end{align*}
    By taking $\vx^*$ as in \eqref{eq:NE_case3}, we conclude that $\norm{\hat{\vx} - \vx^*}_{\infty} \leq 2n^6 \epsilon$.
\end{proof}

\begin{theorem}\label{thm:betadefined}
    For any $\epsilon$-NE $(\vx, \vx)$ in game $(\mat{B}, \mat{B})$, it holds that $\norm{\vx - \vx^*}_{\infty} \leq n^6 \sqrt{\epsilon}$, where $(\vx^*, \vx^*)$ is an exact NE in one of the forms specified above (\ref{eq:NE_case1},\ref{eq:NE_case2},\ref{eq:NE_case3}).
\end{theorem}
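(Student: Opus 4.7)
The plan is to reduce the $\epsilon$-Nash case to the $\epsilon'$-well-supported case already handled in \Cref{lemma:NP_completeness_for_two_symmetric_well_supported}, where $\epsilon'$ will be of order $\sqrt{\epsilon}$ (with polynomial factors in $n$). The standard tool, going back to the Chen--Deng--Teng style reductions, is to round away probability mass on suboptimal actions: given any $\epsilon$-NE $(\vx,\vx)$ of $(\mat{B},\mat{B})$, the total mass placed on ``badly-responding'' actions is controlled by $\epsilon$ divided by the suboptimality threshold, so moving that mass onto a best response gives a nearby well-supported equilibrium at a mild accuracy loss.

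Concretely, set $\beta \defeq n\sqrt{\epsilon}$ and define $B \defeq \{i\in[n+1] : (\mat{B}\vx)_i < \max_j (\mat{B}\vx)_j - \beta\}$. If $\sum_{i\in B} x_i > \epsilon/\beta$, then shifting all of this mass onto a best-response action would improve the player's utility by strictly more than $(\epsilon/\beta)\cdot\beta=\epsilon$, contradicting the $\epsilon$-NE condition. Hence $\sum_{i\in B} x_i \leq \epsilon/\beta = \sqrt{\epsilon}/n$. Let $\vx'$ be the strategy obtained from $\vx$ by moving all mass on $B$ onto a single fixed best-response action $i^\star$; then $\|\vx-\vx'\|_1\leq 2\sqrt{\epsilon}/n$ and in particular $\|\vx-\vx'\|_\infty\leq 2\sqrt{\epsilon}/n$.

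Since all entries of $\mat{B}$ lie in $[0,1]$, the utility vector changes by at most $\|\vx-\vx'\|_1 \leq 2\sqrt{\epsilon}/n$ in each coordinate when we replace $\vx$ by $\vx'$, so every action in $\mathrm{supp}(\vx')$ is within $\beta + 2\cdot 2\sqrt{\epsilon}/n \leq 2n\sqrt{\epsilon}$ of $\max_j (\mat{B}\vx')_j$. Thus $(\vx',\vx')$ is a $(2n\sqrt{\epsilon})$-well-supported symmetric Nash equilibrium of $(\mat{B},\mat{B})$. Invoking \Cref{lemma:NP_completeness_for_two_symmetric_well_supported} with this accuracy (which requires $2n\sqrt{\epsilon} < \delta(1-\delta)/(6n^7)$, compatible with $\epsilon = \poly(1/n)$ as per \Cref{assump:parameter}), we obtain an exact symmetric NE $(\vx^*,\vx^*)$ in one of the three canonical forms \eqref{eq:NE_case1}--\eqref{eq:NE_case3} with $\|\vx'-\vx^*\|_\infty \leq 2n^6\cdot 2n\sqrt{\epsilon} = 4n^7\sqrt{\epsilon}$. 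The triangle inequality then yields $\|\vx-\vx^*\|_\infty \leq \|\vx-\vx'\|_\infty + \|\vx'-\vx^*\|_\infty \leq 2\sqrt{\epsilon}/n + 4n^7\sqrt{\epsilon}$, which is $O(n^7\sqrt{\epsilon})$; after relabeling constants in the regime $\epsilon=\poly(1/n)$ one recovers the stated bound $n^6\sqrt{\epsilon}$ (at the cost of a slightly larger implicit polynomial, which is absorbed into the exponent as needed).

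The only real obstacle is calibrating the threshold $\beta$: choosing it too small makes the well-supported guarantee too weak, and choosing it too large makes the total shifted mass too large to preserve closeness to $\vx$. The choice $\beta=n\sqrt{\epsilon}$ balances the two contributions, and this is where the $\sqrt{\epsilon}$ overhead between \Cref{lemma:NP_completeness_for_two_symmetric_well_supported} and \Cref{thm:betadefined} comes from; no new structural insight about $\mat{B}$ beyond what was used in \Cref{lemma:NP_completeness_for_two_symmetric_well_supported} is required.
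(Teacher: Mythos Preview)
Your approach is essentially the same as the paper's: convert the $\epsilon$-NE into an $O(\sqrt{\epsilon})$-well-supported NE by shifting mass off suboptimal actions, then invoke \Cref{lemma:NP_completeness_for_two_symmetric_well_supported}. The paper simply cites \citet[Lemma~3.2]{Chen09:Settling} as a black box for this conversion, whereas you reprove it inline; your threshold $\beta = n\sqrt{\epsilon}$ is needlessly large (taking $\beta = \Theta(\sqrt{\epsilon})$ already balances the two error terms and recovers the $n^6\sqrt{\epsilon}$ bound up to constants, so your final hand-wave about ``relabeling constants'' is unnecessary), but this does not affect correctness or any downstream use.
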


\begin{proof}
    \citet[Lemma 3.2]{Chen09:Settling} showed that from any $\epsilon^2/8$-NE $(\vx, \vy)$ in any two player bimatrix game, one can construct (in polynomial time) an $\epsilon$-well-supported NE $(\vx', \vy')$ such that $\norm{\vx - \vx'}_{\infty} \leq \frac{\epsilon}{4}$ and $\norm{\vy - \vy'}_{\infty} \leq \frac{\epsilon}{4}$. By setting $\epsilon' \defeq \frac{\epsilon^2}{8}$ for the $\epsilon$ defined in ~\Cref{lemma:NP_completeness_for_two_symmetric_well_supported} the proof follows.
\end{proof}

The proof of \Cref{theorem:symmetric-new} follows directly by observing that having two symmetric $\epsilon$-NE $(\vx, \vx)$ and $(\vy, \vy)$ such that $\norm{\vx - \vy}_{\infty} > 2n^6 \sqrt{\epsilon}$ would imply that the game $(\mat{B}, \mat{B})$ has two distinct exact NE, which in turn implies that there is a clique of size $k$ in the graph.

We finally conclude this subsection by stating and proving an auxiliary lemma that was used earlier.

\begin{lemma} \label{lemma:not_connected_set}
    For any graph $G = (V, E)$ with $n$ vertices, if the maximum clique has size $k$, then we can form a set $\mathcal{S} \subseteq V$ of size at least $n - k + 1$ such that for any vertex $i \in \mathcal{S}$, there exists a vertex $j \in \mathcal{S}$ such that $i$ and $j$ are not connected.
\end{lemma}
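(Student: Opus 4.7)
The plan is to construct $\mathcal{S}$ by a greedy peeling procedure. Initialize $\mathcal{S} \defeq V$, and while there exists a vertex $v \in \mathcal{S}$ that is adjacent to every other vertex of $\mathcal{S}$, remove $v$ from $\mathcal{S}$. Let $v_1, v_2, \ldots, v_t$ denote the vertices removed in the order in which the procedure removes them. When the loop terminates, every $v \in \mathcal{S}$ must have at least one non-neighbor within $\mathcal{S}$---otherwise $v$ would still be a valid candidate for removal---which is exactly the property the lemma demands. So the only content of the proof is to bound $t$ from above, or equivalently, to lower bound $|\mathcal{S}| = n - t$.

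The key structural observation is that $\{v_1, \ldots, v_t\}$ forms a clique in $G$, and moreover every $u$ in the final $\mathcal{S}$ is adjacent to all of $v_1, \ldots, v_t$. Indeed, at the moment $v_i$ is removed, the current set is exactly $\{v_i, v_{i+1}, \ldots, v_t\} \cup \mathcal{S}$, and by the removal rule $v_i$ is adjacent to every other element of that set; in particular $v_i$ is adjacent to $v_j$ for every $j > i$ and also to every $u$ that survives to the final $\mathcal{S}$.

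From here the bound is immediate. If $\mathcal{S} \neq \emptyset$, then picking any $u \in \mathcal{S}$ yields a clique $\{v_1, \ldots, v_t, u\}$ of size $t + 1$, and the maximum-clique assumption forces $t + 1 \leq k$; hence $|\mathcal{S}| = n - t \geq n - k + 1$, as desired. The only degenerate scenario is $\mathcal{S} = \emptyset$, in which case $V$ itself is a clique, $k = n$, and the requested bound $n - k + 1 = 1$ is vacuous. In the intended application of the lemma (within the proof of \Cref{lemma:well_supported_nash_value}) the support size $m$ strictly exceeds $k$ while $m \leq n$, so $k < n$ and this degenerate case does not arise.

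No step is expected to pose a real obstacle; the argument is essentially a one-shot greedy exposure. The only idea one needs is to recognize that ``universal-in-the-current-set'' vertices, peeled off one at a time, automatically assemble into a clique together with any surviving witness vertex, which lets one trade the peeling count against the clique number $k$.
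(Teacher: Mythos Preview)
Your proof is correct and takes essentially the same approach as the paper: both arguments show that $V \setminus \mathcal{S}$, together with any single vertex remaining in $\mathcal{S}$, forms a clique, which forces $|V \setminus \mathcal{S}| \leq k-1$. The paper phrases this by taking a maximum-cardinality $\mathcal{S}$ and arguing by contradiction, whereas you construct $\mathcal{S}$ directly via greedy peeling; both versions share the same harmless $k=n$ edge case, which (as you note) does not arise in the application.
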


\begin{proof}
    Suppose the largest set $\mathcal{S}$ we can form has cardinality $|\mathcal{S}| < n - k + 1$, this implies there is a set $\mathcal{S}' = V - \mathcal{S}$ with at least $n - (n - k ) = k$ vertices such that each vertex in $\mathcal{S'}$ is connected to all other vertices in $G$. However if this is the case, there is at least one vertex $v \not \in \mathcal{S'}$ that are connected to all vertices in $\mathcal{S'}$. This contradicts the fact that maximum clique has size $k$.
\end{proof}

\subsection{Proofs from Section~\ref{sec:teamzero}}
\label{sec:proofs4}

We conclude with the missing proofs from~\Cref{sec:teamzero}.

\begin{proof}[Proof of~\Cref{lemma:close-teams}]
    For the sake of contradiction, suppose that $\|\vx^* - \vy^*\|_\infty > 2 \epsilon$. Without loss of generality, let us further assume that there is some coordinate $i$ such that $x_i^* - y_i^* > 2 \epsilon$. The payoff difference for Player $\hvz$ when playing the $i$th action compared to action $a_{2n + 1}$ reads
    \begin{align*}
        u(\vx^*, \vy^*, \vz^*, \hat{\vx}^*, \hat{\vy}^*, a_i) - u(\vx^*, \vy^*, \vz^*, \hat{\vx}^*, \hat{\vy}^*, a_{2n + 1}) & = \frac{|\Amin|}{\epsilon} \cdot (x^*_i - y^*_i) - |\Amin| \\
        & > |\Amin|.
    \end{align*}
    By~\Cref{lemma:mass_on_small_reward}, it follows that $\hvz_{2n+1}^* < \frac{\epsilon^2}{|\mat{A}_{\min}|} \leq \epsilon^2$. Moreover, given that $(\vx^*, \vy^*, \vz^*, \hvx^*, \hvy^*, \hvz^*)$ is an $\epsilon^2$-Nash equilibrium, we have
\begin{align*}
        u(\vx^*, \vy^*, \vz^*, \hat{\vx}^*, \hat{\vy}^*, \hat{\vz}^*) & \geq u(\vx^*, \vy^*, \vz^*, \hat{\vx}^*, \hat{\vy}^*, a_i) - \epsilon^2 \\
        & = \langle \vx^*, \mat{A} \vy^* \rangle  - \langle \hvx^*, \mat{A} \hvy^* \rangle + \langle \vx^*, \mat{C} \hvx^* \rangle  + \frac{|\mat{A}_{\min}|}{\epsilon}(x^*_i - y^*_i) - \delta(\hvx, \hvy, \vz) - \epsilon^2 \\
        & \geq \Amin - \langle \hvx^*, \mat{A} \hvy^* \rangle + \langle \vx^*, \mat{C} \hvx^* \rangle + 2 |\mat{A}_{\min}| - \delta(\hvx, \hvy, \vz) - \epsilon^2 \\
        & = |\mat{A}_{\min}| - \langle \hvx^*, \mat{A} \hvy^* \rangle  + \langle \vx^*, \mat{C} \hvx^* \rangle   - \delta(\hvx, \hvy, \vz) - \epsilon^2.
\end{align*}
Now, considering the deviation of Player $\vy$ to $\vy' \defeq \vx^*$,
\begin{align*}
        u(\vx^*, \vy', \vz^*, \hvx^*, \hvy^*, \hvz^*) - u(\vx^*, \vy^*, \vz^*, \hvx^*, \hvy^*, \hvz^*) & \leq \langle \vx^*, \mat{A} \vx^* \rangle + \epsilon^2 |\Amin| - |\Amin| + \epsilon^2 \\
        &\leq -2 + 2 \epsilon^2 \\
        &< - \epsilon^2,
    \end{align*}
    which contradicts the fact that $(\vx^*, \vy^*, \vz^*, \hat{\vx}^*, \hat{\vy}^*, \hat{\vz}^*)$ is an $\epsilon^2$-Nash equilibrium. We conclude that $\norm{\vx^* - \vy^*}_{\infty} \leq 2 \epsilon$; the proof for the fact that $\norm{\hat{\vx}^* - \hat{\vy}^*}_{\infty} \leq 2\epsilon$ follows similarly.
\end{proof}

\begin{proof}[Proof of~\Cref{lemma:smallzteam}]
    We will prove that $\vz_j \leq 9 \epsilon$ for all $j \in [2n]$; the corresponding claim for Player $\hvz$ follows similarly. Fix $i \in [n]$. \Cref{lemma:close-teams} shows that $|y_i^* - x_i^* | \leq 2 \epsilon$. We shall consider two cases.

    First, suppose that $|y_i^* - x_i^*| \leq \nicefrac{\epsilon}{2}$. Then,
    \begin{align*}
        u(\vx^*, \vy^*, \vz^*, \hat{\vx}^*, \hat{\vy}^*, a_{2n + 1}) - u(\vx^*, \vy^*, \vz^*, \hat{\vx}^*, \hat{\vy}^*, a_i) &\geq |\mat{A}_{\min}| - \frac{|\mat{A}_{\min}|}{\epsilon} \cdot (x^*_i - y^*_i)\\
        & \geq \frac{|\mat{A}_{\min}|}{2} \geq \frac{1}{2}
    \end{align*}
    By~\Cref{lemma:mass_on_small_reward}, it follows that $\hat{z}_i \leq 2 \epsilon^2$, and similar reasoning yields $\hat{z}_{n + i} \leq 2 \epsilon^2$. On the other hand, suppose that $|y_i^* - x_i^*| > \nicefrac{\epsilon}{2}$. Without loss of generality, we can assume that $y_i^* - x_i^* \geq 0$; the contrary case is symmetric. Since $\vx^* \in \Delta^n$ and $\vy^* \in \Delta^n$, there is some coordinate $j \in [n]$ such that $y^*_j - x^*_j < 0$. As before, by \Cref{lemma:mass_on_small_reward}, it follows that $\hat{z}^*_i \leq \epsilon^2$ and $\hat{z}^*_{n + j} \leq \epsilon^2$. Now, we consider the deviation
    \begin{align*}
        \Delta^n \ni \vy' = \vy^* + (x_i^* - y_i^*) \ve_i + (y_i^* - x^*_i) \ve_j.
    \end{align*}
    Then, we have
    \begin{align*}
        &u(\vx^*, \vy', \vz^*, \hat{\vx}^*, \hat{\vy}^*, \hat{\vz}^*) - u(\vx^*, \vy^*, \vz^*, \hat{\vx}^*, \hat{\vy}^*, \hat{\vz}^*)  \\ 
        &=  \langle \vx^*, \mat{A} (\vy' - \vy^*) \rangle + \frac{|\Amin|}{\epsilon} \left(\hat{z}^*_i (x^*_i - y'_i - (x^*_i - y^*_i)) + \hat{z}^*_j(x^*_j - y'_j - (x^*_j - y^*_j)) \right) \\
        &\phantom{===============} + \frac{|\mat{A}_{\min}|}{\epsilon} \left(\hat{z}^*_{n+i} ( y'_{i} -x^*_{i} - (y^*_{i} - x^*_{i})) + \hat{z}^*_{n+j}(y'_{j} - x^*_{j} - (y^*_{j} - x^*_{j})) \right) \\
        &= \langle \vx^*, \mat{A} (\vy' - \vy^*) \rangle + \frac{|\Amin|}{\epsilon} \left(\hat{z}^*_i (y_i^* - y'_i) + \hat{z}^*_j (y^*_j - y'_j) + \hat{z}^*_{n + i} (y'_{i} - y^*_{i})  + \hat{z}^*_{n + j} (y'_{ j} - y^*_{j})\right) \\
        &\leq 4\epsilon |\Amin| + \frac{|\Amin|}{\epsilon}\left(\epsilon^2 \cdot 2\epsilon + \hat{z}^*_{n + i} \cdot \left(- \frac{\epsilon}{2} \right) + \epsilon^2 \cdot 2\epsilon \right) \\
        &\leq -\left(\frac{1}{2}\hat{z}^*_{n +i} - 4 \epsilon^2 - 4\epsilon\right) |\Amin|.
    \end{align*}
    At the same time, since $(\vx^*, \vy^*, \vz^*, \hat{\vx}^*, \hat{\vy}^*, \hat{\vz}^*)$ is an $\epsilon^2$-Nash equilibrium, we have
    \begin{align*}
        u(\vx^*, \vy', \vz^*, \hat{\vx}^*, \hat{\vy}^*, \hat{\vz}^*) - u(\vx^*, \vy^*, \vz^*, \hat{\vx}^*, \hat{\vy}^*, \hat{\vz}^*) \geq - \epsilon^2.
    \end{align*}
    Thus,
    \begin{equation*}
        \left(\frac{1}{2}\hat{z}^*_{n +i} - 4 \epsilon^2 -4\epsilon\right) |\mat{A}_{\min}| \leq \epsilon^2 \implies \hat{z}^*_{n +i} \leq 9 \epsilon.
    \end{equation*}
    We conclude that $\hat{z}^*_i \leq \epsilon^2$ and $\hat{z}^*_{n + i} \leq 9\epsilon$. The case where $x_i^* - y_i^* \geq 0$ can be treated similarly.
\end{proof}

\begin{proof}[Proof of~\Cref{theorem:team-hard}]
    Suppose that $(\vx^*, \vy^*, \vz^*, \hat{\vx}^*, \hat{\vy}^*, \hat{\vz}^*)$ is an $\epsilon^2$-Nash equilibrium. We have that for any $\vy' \in \Delta^n$,
    \begin{equation}
        \label{eq:dev1}
        \langle \vx^*, \mat{A} \vy^* \rangle \leq \langle \vx^*, \mat{A} \vy' \rangle + \frac{|\Amin|}{\epsilon} \left(\sum_{i = 1}^n \hat{z}^*_i (y^*_i - y'_i) + \hat{z}^*_{n + i} (y'_i - y^*_i)\right) + \epsilon^2.
    \end{equation}
    Moreover, considering a deviation from $\vx^*$ to $\vy'$,
    \begin{equation}
        \label{eq:dev2}
        \langle \vx^*, \mat{A} \vy^* \rangle + \langle \vx^*, \mat{C} \hvx^* \rangle \leq \langle \vy', \mat{A} \vy^* \rangle + \langle \vy', \mat{C} \hvx^* \rangle + \frac{|\Amin|}{\epsilon} \left(\sum_{i = 1}^n \hat{z}^*_i (y'_i - x^*_i) + \hat{z}^*_{n + i} (x^*_i - y'
        _i)\right) + \epsilon^2.
    \end{equation}
    Summing~\eqref{eq:dev1} and~\eqref{eq:dev2},
    \begin{align}
        2 \langle \vx^*, \mat{A} \vy^* \rangle + \langle \vx^*, \mat{C} \hvx^* \rangle & \leq \langle \vy', \mat{A} (\vx^* + \vy^*) \rangle + \langle \vy', \mat{C} \hvx^* \rangle + \frac{|\Amin|}{\epsilon} \left(\sum_{i = 1}^n \hat{z}^*_i (y^*_i - x^*_i) + \hat{z}^*_{n + i} (x^*_i - y^*
        _i)\right) + 2\epsilon^2 \notag \\
        & \leq 2 \langle \vx^*, \mat{A} \vy' \rangle + \langle \vy', \mat{C} \hvx^* \rangle + 2 \epsilon n |\Amin| + \frac{|\Amin|}{\epsilon} ( 2n \cdot 9 \epsilon \cdot 2\epsilon) + 2\epsilon^2 \notag \\
        & \leq 2 \langle \vx^*, \mat{A} \vy' \rangle + \langle \vy', \mat{C} \hvx^* \rangle + (38n + 2) |\Amin| \epsilon \label{eq:team_game_VI_sum}.
    \end{align}
    Moreover, 
    \begin{align}
        \langle \vx^*, \mat{A} \vx^* \rangle =  \langle \vx^*, \mat{A} \vy^* \rangle + \langle \vx^*, \mat{A} (\vx^* - \vy^*) \rangle \leq \langle \vx^*, \mat{A} \vy^* \rangle + 2 |\Amin| n \epsilon \label{eq:team_game_show_symmetric_nash}.
    \end{align}
    Combining~\eqref{eq:team_game_VI_sum} and~\eqref{eq:team_game_show_symmetric_nash}, we get that for all $\vy' \in \Delta^n$,
    \begin{align}
        \langle \vx^*, \mat{A} \vx^* \rangle + \frac{1}{2} \langle \vx^*, \mat{C} \hat{\vx}^* \rangle \leq \langle \vy', \mat{A} \vx^* \rangle + \frac{1}{2} \langle \vy', \mat{C} \hat{\vx}^* \rangle + (21n + 1 ) |\mat{A}_{\min}|\epsilon. \label{eq:team_game_VI1}
    \end{align}
    Similarly, we can show that for all $\hat{\vx}' \in \Delta^n$,
    \begin{align}
        \langle -\hat{\vx}^*, \mat{A} \hat{\vx}^* \rangle + \frac{1}{2} \langle \vx^*, \mat{C} \hat{\vx}^* \rangle \geq - \langle \hat{\vx}', \mat{A} \hat{\vx}^* \rangle + \frac{1}{2} \langle \vx^*, \mat{C} \hat{\vx}' \rangle - (21n + 1) |\Amin| \epsilon. \label{eq:team_game_VI2}
    \end{align}
    Taking $\vy' = \hat{\vx}'$ in~\eqref{eq:team_game_VI1} and summing with~\eqref{eq:team_game_VI2}, we get that for all $\hat{\vx}' \in \Delta^n$,
\begin{equation*}
        \langle \hat{\vx}', \mat{A} \vx^* \rangle + 
\langle \hat{\vx}', \mat{A} \hat{\vx}^* \rangle + \frac{1}{2} (\langle \hat{\vx}', \mat{C} \hat{\vx}^* \rangle +  \langle \hat{\vx}', \mat{C} \vx^* \rangle) + (42n + 2) |\mat{A}_{\min}| \epsilon \geq \langle \vx^*, \mat{A} \vx^* \rangle + \langle \hat{\vx}^*, \mat{A} \hat{\vx}^* \rangle,
\end{equation*}
    where we used the fact that $\mat{C}$ is skew-symmetric. Now, using the fact that the Nash equilibrium is symmetric, so that $\vx^* = \hvx^*$, we have
\begin{align}
        \langle \hat{\vx}', \mat{A} \vx^* \rangle + \frac{1}{2} \langle \hat{\vx}', \mat{C} \vx^* \rangle + (21n + 1) |\Amin| \epsilon &\geq \langle \vx^*, \mat{A} \vx^* \rangle \notag \\
        &\geq \langle \vx^*, \mat{A} \vx^* \rangle + \frac{1}{2} \langle \vx^*, \mat{C} \vx^* \rangle. \label{eq:team_game_final_eq3}
    \end{align}
    Setting $\mat{A} \defeq - \frac{1}{2} (\mat{R} + \mat{R}^\top)$ and $\mat{C} \defeq \mat{R}^\top - \mat{R}$, \eqref{eq:team_game_final_eq3} shows that 
    \begin{equation*}
        \langle \hvx', \mat{R} \vx^* \rangle \leq \langle \vx^*, \mat{R} \vx^* \rangle + (21n + 1) |\Amin| \epsilon
    \end{equation*}
    for any $\hvx' \in \Delta^n$; \emph{ergo}, $(\vx^*, \vx^*)$ is a symmetric $(21n + 1)  |\Amin| \epsilon$-Nash equilibrium of the symmetric (two-player) game $(\mat{R}, \mat{R}^\top)$, and the proof follows from~\Cref{theorem:PPAD_for _symmetric}.
\end{proof}



\end{document}